\newtheorem{theorem}{Theorem}[section]
\newtheorem{lemma}[theorem]{Lemma}
\newtheorem{proposition}[theorem]{Proposition}
\newtheorem{claim}{Claim}
\newtheorem{corollary}[theorem]{Corollary}
\theoremstyle{definition}
\newtheorem{definition}[theorem]{Definition}
\newcommand{\RR}[1]{\ifdraft{\marginpar{\textcolor{green!80!black}{Roman:}#1}}{}}
\newcommand{\RRl}[1]{\ifdraft{\reversemarginpar\marginpar{\textcolor{green!80!black}{Roman: }#1}\par\normalmarginpar}{}}
\newdimen\arrowsize
\newlength{\arrowlength}
\newlength{\arrowangle}
\newlength{\arrowthickness}
\tikzstyle{vertex}=[circle,inner sep=2.5,minimum size =2mm,semithick,fill=black!20, draw=black]
\tikzstyle{holevertexB}=[circle,inner sep=-0.7,minimum size =2mm,semithick,fill=white, draw=black]
\tikzstyle{holevertexA}=[circle,inner sep=2.5,minimum size =2mm,semithick,fill=white, draw=black]
\tikzstyle{smallcircle}=[circle,inner sep=1.5,fill=white, draw=black]
\tikzstyle{point}=[circle,inner sep=1,fill=black, draw=black]
\tikzstyle{path}=[-slim,thin,rounded corners] 
\tikzstyle{path1}=[-slim,thin,decorate,%
\tikzstyle{path2}=[-stealth,thin,decorate,%
\tikzstyle{brace}=[thin,decorate,decoration=brace]
\tikzstyle{ie}=[thin,dashed,gray]
\colorlet{fillA}{gray!50}
\colorlet{fillB}{gray!15}
\DeclareMathOperator{\Reach}{Reach}
\DeclareMathOperator{\flap}{cmpt}
\DeclareMathOperator{\supp}{supp}
\renewcommand{\mid}{\ensuremath{:}}
\renewcommand{\phi}{\varphi}
\renewcommand{\epsilon}{\varepsilon}
\renewcommand{\theta}{\vartheta}
\newcommand{\0}{\emptyset}
\newcommand{\shc}{shc}
\newcommand{\perpcdot}{{\perp}{\kern0.1em{\cdot}\kern0.1em}}
\renewcommand{\split}[1]{\ensuremath{{\mathrm{split}(#1)}}}
\newcommand{\freeze}{\ensuremath{\text{\scriptsize%
\SnowflakeChevron}}}
\newcommand{\sigmacomb}{\ensuremath{\sigma^\freeze_\lc}}
\newcommand{\mb}{{\mathrm{mb}}}
\newcommand{\front}{\mathrm{front}}
 \newcommand{\Pos}{\mathrm{Pos}}
 \newcommand{\Moves}{\mathrm{Mvs}}
\newcommand{\Mon}{\mathrm{Mon}}
\newcommand{\cross}{\mathrm{cross}}
\newcommand{\wmon}{\textup{wm}}
\newcommand{\ReachMoves}{\mathrm{ReachMvs}}
\newcommand{\CompMoves}{\mathrm{CompMvs}}
\newcommand{\copmon}{\mathrm{cm}}
\newcommand{\comp}{\mathrm{comp}}
\newcommand{\reach}{\mathrm{reach}}
\newcommand{\robmon}{\mathrm{rm}}
\DeclareMathAlphabet{\mathsc}{OT1}{cmr}{m}{sc}
\newcommand{\nptime}{\ensuremath{\mathsc{NP}}\xspace}
\newcommand{\elorder}{\mathbin{\vartriangleleft}}
\newcommand{\elorderg}{\mathbin{\vartriangleright}}
\newcommand{\elordergeq}{\mathbin{\unrhd}}
\newcommand{\setorder}{\mathbin{\sqsubset}}
\newcommand{\setordereq}{\mathbin{\sqsubseteq}}
\newcommand{\totheright}{\mathbin{\vartriangleleft}}
\newcommand{\prefixordereq}{\mathbin{\sqsubseteq}}
\newcommand{\tw}{tree width\xspace}
\newcommand{\dtw}{directed tree width\xspace}
\newcommand{\Dtw}{Directed tree width\xspace}
\newcommand{\otw}{oriented tree width\xspace}
\newcommand{\Otw}{Oriented tree width\xspace}
\newcommand{\dagw}{DAG-{}width\xspace}
\newcommand{\Dagw}{DAG-{}width\xspace}
\newcommand{\kw}{Kelly-{}width\xspace}
\newcommand{\Kw}{Kelly-{}width\xspace}
\newcommand{\dw}{D-{}width\xspace}
\newcommand{\wm}{{\mathrm{wm}}}
\newcommand{\shysim}{{\mathrm{shy}\text{-}\mathrm{sim}}}
\newcommand{\lc}{{\mathrm{lc}}} 
\DeclareMathOperator{\dTW}{dtw}
\newcommand{\dTWcops}[1]{cn_{#1}(\dTW)}
\newcommand{\cmdTWcops}[1]{cn_{#1}\mathrm{(cmdtw)}}
\newcommand{\rmdTWcops}[1]{cn_{#1}\mathrm{(rmdtw)}}
\newcommand{\shyDAGWcops}[1]{cn_{#1}\mathrm{(shyDAG)}}
\newcommand{\wmDAGWcops}[1]{cn_{#1}\mathrm{(wmDAG)}}
\newcommand{\wmshyDAGWcops}[1]{cn_{#1}\mathrm{(wmshyDAG)}}
\newcommand{\DAGWcops}[1]{cn_{#1}\mathrm{(DAG)}}
\DeclareMathOperator{\wmDAGW}{wm\text{-}\mathrm{DAG}\text{-}\mathrm{w}}
\DeclareMathOperator{\DAGW}{DAG\text{-}w}
\DeclareMathOperator{\KW}{Kelly\text{-}w}
\DeclareMathOperator{\DW}{D\text{-}w}
\DeclareMathOperator{\DS}{DS\text{-}w}
\DeclareMathOperator{\oTW}{otw}
\newcommand{\ie}{i.e.\@\xspace}
\newcommand{\Ie}{I.e.\@\xspace}
\begin{document}

\title{Directed Width Measures and Monotonicity of Directed Graph Searching}

\renewcommand{\Affilfont}{\itshape}

\author{\L ukasz~Kaiser}
\affil{CNRS \& Universit\'e Sorbonne Paris Cit\'e\footnote{Currently
  at Google Inc.}, \authorcr\texttt{lukaszkaiser$@$google.com}}
\author{Stephan~Kreutzer}
\author{Roman~Rabinovich\thanks{This work was partially supported by
    the projects \textit{Games for Analysis and Synthesis of Interactive
Computational Systems (GASICS)} and \textit{Logic for Interaction (LINT)} of the
\textit{European Science Foundation}.}}
\author{Sebastian~Siebertz}
\affil{Logic and Semantics, Technical University Berlin \authorcr
$\{$\texttt{%
stephan.kreutzer,roman.rabinovich,
sebastian.siebertz}$\}$%
$@$\texttt{tu-berlin.de}}


\maketitle

\begin{abstract}
  We consider generalisations of \tw to directed graphs, that
  attracted much attention in the last fifteen years. About their
  relative strength with respect to ``bounded width in one measure
  implies bounded width in the other'' many problems remain
  unsolved. Only some results separating directed width measures are
  known. We give an almost complete picture of this relation.

  For this, we consider the cops and robber games characterising
  DAG-width and directed tree width (up to a constant factor). For
  DAG-width games, it is an open question whether the
  robber-monotonicity cost (the difference between the minimal numbers
  of cops capturing the robber in the general and in the monotone
  case) can be bounded by any function. Examples show that this
  function (if it exists) is at least $f(k) > 4k/3$
  \cite{KreutzerOrd08}. We approach a solution by defining \emph{weak
  monotonicity} and showing that if~$k$ cops win weakly monotonically,
  then $O(k^2)$ cops win monotonically. It follows that bounded
  Kelly-width implies bounded DAG-width, which has been open since the
  definition of Kelly-width \cite{HunterKre08}.

  For directed tree width games we show that,
  unexpectedly, the cop-monotonicity cost (no cop revisits any vertex)
  is not bounded by any function. This separates directed tree width
  from D-width defined in~\cite{Safari05}, refuting a conjecture 
  in~\cite{Safari05}.
\end{abstract}

\section{Introduction}

     In the study of hard algorithmic problems on graphs,
methods derived from structural graph theory have proved to be a
valuable tool. The rich theory of special classes of graphs developed
in this area has been used to identify classes of graphs, such as
classes of bounded tree width or clique-width,
on which many computationally hard problems can be solved
efficiently. 
Most of these classes are defined by some structural
property, such as having a tree decomposition of low width, and this structural 
information can be exploited
algorithmically. 

Structural parameters such as tree width, clique-width, classes of graphs
defined by excluded minors etc.\@ studied in this context relate to
undirected graphs.
However, in various applications in computer science, directed graphs are
a more natural model. 
Given the enormous success
width parameters 
had for problems defined on undirected graphs, it is natural to ask
whether they can also be used to analyse the complexity of hard
algorithmic problems
on digraphs.
While in principle it is 
possible to apply the structure theory for undirected graphs to directed
graphs by ignoring the direction of edges, this implies a significant
information loss. Hence,
for computational problems whose instances are directed graphs,  
methods based on the structure theory for undirected graphs may be
less useful.

Reed \cite{Reed99} and Johnson, Robertson, Seymour and
Thomas~\cite{JohnsonRobSeyTho01} initiated the development of a
decomposition theory for directed graphs with the aim of defining a
directed analogue of undirected tree width. They introduced the
concept of \emph{directed tree width} and showed that the $k$-disjoint
paths problem and more general linkage problems can be solved in
polynomial-time on classes of digraphs of bounded directed
tree width. Following this initial proposal, several alternative
notions of width measures for sparse classes of digraphs have been
introduced, for instance \emph{directed path width} (see
\cite{Barat06}, initially proposed by Robertson, Seymour and Thomas),
\emph{D-width}~\cite{Safari05},
\emph{DAG-width}~\cite{BerwangerDawHunKreObd12} and
\emph{Kelly-width}~\cite{HunterKre08}. For each of these, algorithmic
applications were given, for example in relation to linkage problems
or a form of combinatorial games known as \emph{parity games}. On the
other hand, some other standard graph theoretical problems such as
directed dominating set remain intractable on classes of digraphs of
small width with respect to these measures. More recently, directed
width parameters have been used successfully in areas outside core
graph algorithmics, for instance in Boolean network
analysis~\cite{Tamaki10}, in the evaluation of simple regular path
queries~\cite{BaganBG13}, in the theory of verification in form of
$\mu$-calculus model-checking and solving parity
games~\cite{BerwangerDawHunKreObd12, HunterKre08, BerwangerGra05}.

Despite the considerable interest these parameters have generated, not
much is known about the relation between them. It is known that
classes of bounded DAG-width, Kelly-width or D-width also have bounded
directed tree width, making directed tree width the most general of
these parameters. On the other hand, classes of digraphs of bounded
directed path width also have bounded width in the other
measures. However, it is still an open problem how DAG-width,
Kelly-width or D-width relate to each other. The main structural
contribution of this paper is to give an almost complete picture of
the relationship between these width parameters with strict
inequalities in most cases.

Digraph parameters such as directed tree width, DAG-width or
Kelly-width are closely related to graph searching games, also called
cops and robber games in this case. In a graph searching game, a
number of cops tries to capture a robber on a graph or digraph. The
robber occupies a vertex of the graph and so does each of the
cops. The game is played in rounds where in each round the cops first
announce their new position and then the robber can move to a
different vertex of the graph to avoid capture. See below for details
and see~\cite{FominThi08,Kreutzer11b} for recent surveys.

Variations of the game are obtained by restricting the moves of the
cops and the robber in several ways.  On every graph or digraph, the
cops have a winning strategy that guarantees capturing the robber by
using sufficiently many cops. The minimal number of cops on a
digraph~$G$ that guarantees to capture the robber is a natural
graph parameter and it turns out that the width measures discussed
above are closely related to these parameters defined by suitable
graph searching games.

An important concept in the context of graph searching games is
\emph{monotonicity}.  Monotonicity is a restriction on the strategies
employed by the cops. We distinguish between cop- and robber-monotone
cop strategies. Roughly speaking, a strategy is \emph{cop-monotone} if
the cops never revisit a vertex where they have been before, and it is
\emph{robber-monotone} if the set of vertices that the robber can
occupy never increases during a play. Usually, monotone variants of
graph searching games yield nice decompositions corresponding to
directed or undirected width measures. For instance, a
tree decomposition corresponds exactly to a cop-monotone winning
strategy for the cops in a particular type of graph searching games.

The \emph{(cop- or robber-) monotonicity problem} for variants
of graph searching games---\ie the problem whether on every graph or
digraph the number of cops required to capture a robber with a cop- or
robber-monotone strategy is the same as the number of cops required
with an unrestricted strategy---has intensively been studied in the
literature.  For games that are not monotone, we call the number of
extra cops required for a monotone strategy the \emph{monotonicity
  cost} of the game variant.  For graph searching games on undirected
graphs this problem has been solved for most commonly used game
variants and usually the games are monotone. For directed graphs,
however, the situation is much less understood. It was shown in
\cite{JohnsonRobSeyTho01,Adler07} that the games corresponding to
directed tree width are not monotone. In~\cite{KreutzerOrd08} it was
shown that also the games corresponding to Kelly- and DAG-width are
non-monotone. More precisely, in~\cite{KreutzerOrd08} examples are
exhibited where monotone strategies require at least $\frac 43k$ cops,
but $k$ cops suffice for an unrestricted strategy. However, all
attempts to use the tricks facilitated in these examples to show that
the monotonicity cost is in fact unbounded have failed so far.

Among the most important open problems in the area of cops and robber
games at the moment is the question whether the monotonicity cost for
the games corresponding to directed tree width, Kelly-width or
DAG-width can be bounded by a constant factor, or by any function at
all. This question is particularly interesting for DAG-width and
Kelly-width games, as it was shown in~\cite{HunterKre08} that bounding
the monotonicity cost of these games would imply that DAG-width and
Kelly-width are bounded by each other, \ie a class of digraphs has
bounded Kelly-width if, and only if, it has bounded DAG-width. The
proof relies on translating monotone strategies in one type of game
into (non-monotone) strategies of the other type of game.

For directed tree width games and robber-monotone strategies, the
monotonicity question was answered in the affirmative
in~\cite{JohnsonRobSeyTho01}.  It has been conjectured (\cite[Page
750]{Safari05}\footnote{Safari actually conjectures that D-width
  equals directed tree width which would imply cop-monotonicity.})
that the cop-monotonicity cost should also be bounded for directed
tree width games.  Whether the monotonicity cost for DAG- and
Kelly-width games is bounded is still open as well, despite
considerable efforts in the community. These monotonicity problems are
arguably the most important open problems in cops and robber games.

In this paper we give a negative answer to the cop-monotonicity
problem for directed tree width games. We show that there is a class
of digraphs where $4$ cops have a winning strategy in the directed
tree width game, but the number of cops required to win with a
cop-monotone strategy is unbounded.  We also make progress on the
problem for DAG-width games. We introduce a weaker form of
monotonicity, called \emph{weak monotonicity}, and show that any
weakly monotone strategy for $k$ cops can be transformed into a
robber-monotone strategy for $k^2$ cops. While this does not settle
the monotonicity problem for DAG-width games completely, it
constitutes significant progress towards this longstanding open
problem is the following sense: in the known examples for
non-monotonicity of DAG-width games, for instance
in~\cite{KreutzerOrd08}, the (unrestricted) strategies used by cops to
win the game are actually weakly monotone in our sense. Hence, our
result implies that these tricks cannot be used to show that there is
no bound on the monotonicity cost for DAG-width games. Furthermore, as
explained above, in~\cite{HunterKre08} it is shown that monotone
strategies in the DAG-width or Kelly-width game can be translated into
(non-monotone) strategies in the other type of games (with roughly the
same number of cops). It turns out that the translation from
Kelly-width games into DAG-width games actually translates a
Kelly-strategy into a weakly monotone DAG-strategy and hence, by our
result, this strategy can further be translated into a monotone
strategy (with a quadratic number of cops). As a consequence, bounded
Kelly-width implies bounded DAG-width, settling one of the open
problems in the relation between different width measures. Finally, a
winning cop strategy in weakly monotone DAG-width game induces a
decomposition of the graph of small width, similar to \tw, \dagw
etc. In contrast to DAG decompositions, for which we do not know
whether there exist ``small'' decompositions (\ie of size polynomial both
in $|G|$ and in the \dagw of~$G$), the new decompositions are essentially
tress (rather than DAGs) of size in $O(|G|^2)$. Having a simpler
structure than DAG decompositions they may be interesting by
themselves both for algorithmical applications and for theoretical
research on DAG-width. We remark that such a decomposition encodes in a
compact way a DAG decomposition of width at most quadratically larger
than the optimal one.

\smallskip

\noindent\textbf{Organisation. } The paper is organised as follows. In
Section~\ref{sec:prelims}, after fixing some basic notation, we
introduce graph searching games and prove our first main result, that
the cop-monotonicity cost for directed tree width games are unbounded
(Theorem~\ref{thm_non_cop_mon}). As a consequence, we separate
directed tree width from D-width.  Our monotonicity results for
DAG-width are presented in Section~\ref{sec:variants} (see
Theorem~\ref{thm_main}). In Section~\ref{sec:ordering}, we compare the
various directed width measures with respect to the question whether
classes of digraphs of bounded width in one measure have bounded width
in another measure. \RR{Stephan, kannst du hier oder früher otw erwähnen?}


\section{Preliminaries}\label{sec:prelims}

     We assume familiarity with basic concepts of directed graph theory and
refer to~\cite{Diestel12} for background. The first part of this section serves
to review and fix notation and terminology. 

We denote the set of positive integers by $\mathbb{N}$ and for
$n\in\mathbb{N}$ we write $[n]$ for the set $\{1,\ldots, n\}$. The
prefix relation on words over some alphabet $\Sigma$ is denoted
by~$\prefixordereq$ and its irreflexive version by~$\sqsubset$. The
lexicographical order is denoted by~$\preceq$ and its irreflexive
variant by~$\prec$. 
We
write $\Sigma^{\leq n}$ for the set of words over $\Sigma$ of length
at most $n$.

All graphs in this paper are finite, directed and simple, \ie they do
not have loops or multiple edges between the same pair of
vertices. Undirected graphs are directed graphs with symmetric edge
relation, but we write $\{v,w\}$ for the undirected edge between~$v$
and~$w$, \ie for the pair of edges $(v,w)$ and $(w,v)$. If~$G$ is a
graph, then $V(G)$ denotes its set of vertices and $E(G)$ its set of
edges. For a set $X\subseteq V(G)$ we write $G[X]$ for the subgraph
of~$G$ induced by~$X$ and $G-X$ for $G[V(G)\setminus X]$. The set of
vertices reachable from a vertex $v\in V(G)$ in~$G$ is denoted by
$\Reach_G(v)$. For a set $X\subseteq V(G)$ we write $\Reach_G(X)$ for
the set $\{w\in V(G) \mid \text{ there is }v\in X \text{ such that
}w\in\Reach_G(v)\}$. For vertices $v,w\in V(G)$ we write $v\ge w$ (or
$w\le v$) if $w\in \Reach_G(v)$ and $v>w$ (or $w<v$) if, additionally,
$v\neq w$. A path is a sequence of vertices $v_1,v_2,\ldots$ with
$(v_i,v_{i+1})\in E(G)$ for all $i\ge 0$. A \emph{strongly connected
  component} of a digraph~$G$ is a maximal subgraph~$C$ of~$G$ which
is strongly connected, \ie~for all $u,v\in V(C)$ we have
$u\in\Reach_C(v)$ and $v\in\Reach_C(u)$. All components considered in
this paper will be strong and hence we simply write \emph{component}.
We write $e\sim v$ if vertex~$v$ is incident with edge~$e$.

We write~$\bar G$ for the underlying undirected graph of~$G$. The
depth of an undirected, rooted tree is the the maximum number of
vertices on a path from the root to a leaf of the tree. We write
$T^d_\ell$ for the complete undirected tree of branching degree~$d$
and depth~$\ell$. We assume that the vertices of a rooted undirected
tree of maximum branching degree~$d$ are words over $\{0,\ldots,
d-1\}$, hence the vertex set of $T^d_\ell$ is $\{0,\ldots, d-1\}^{\leq
\ell}$.

\paragraph{Graph searching games}
A graph searching game is played on a graph $G$ by a team of cops and a 
robber. The robber and each cop occupy a vertex of $G$. Hence, a
current game position can be described by a pair $(C, v)$, where $C$
is the set of vertices occupied by cops and $v$ is the current robber
position. At the beginning the robber chooses an 
arbitrary vertex $v$ and the game starts at position
$(\emptyset, v)$. The game is played
in rounds. In each round, from a position $(C, v)$ the cops first
announce their next move, 
\ie~the set $C'\subseteq V(G)$ of vertices that they will 
occupy next. Based on the triple $(C, C', v)$ the robber chooses his new vertex 
$v'$. This
completes a round and the play continues at position $(C', v')$. 
Variations of graph
searching games are obtained by restricting the moves allowed for the
cops and the robber. In all game variants considered here, from a 
position $(C, C', v)$, \ie~where the cops move from their current
position $C$ to $C'$ and the robber is on $v$, the robber would have
exactly the same choice of possible moves
from any vertex in the component of $G-C$ containing $v$.
 We will therefore describe game positions by a pair $(C, R)$,
or a triple $(C, C', R)$,
where $C, C'$ are as before and~$R$ is a component of $G-C$. We
call~$R$ the \emph{robber component.} 


Formally a graph searching game on a graph $G$ is specified by a tuple 
$\calG = (\Pos(G)$, $\Moves(G), \Mon)$, where $\Pos(G)$ describes the set of
possible positions, $\Moves(G)$ the set of legal moves and $\Mon$ specifies
the monotonicity criteria used. In all game variants considered
here, the set $\Pos(G)$ of positions is $\Pos_c \cup \Pos_r$ where $\Pos_c =  
\{(C,R) \mid  C\subseteq V(G)$ ,
$R\subseteq V(G)$ is a component of $G- C\}$ are cop positions and $\Pos_r =
\{(C,C',R) \mid C, C' \subseteq V(G)$ and $R\subseteq  
V(G)$ is a component of $G-C\}$ are robber positions.

As far as legal moves are concerned, we distinguish between two
different types of games, called \emph{reachability} and
\emph{component} games. In both cases the  
cops moves are $\Moves_c(G) := \{\bigl( (C,R), (C,C',R) \bigr) \mid (C,R)\in 
\Pos_c, 
(C,C',R)\in \Pos_r\}$. The difference is in the
definition of the set of possible robber moves.

\paragraph*{Reachability game}  
In the
\emph{reachability game}, $\Moves(G)$ are defined as $\ReachMoves(G)$, where $\ReachMoves(G)$ $:=  \Moves_c(G) \cup 
\{\bigl(
(C,C',R),(C',R') \bigr) \mid
 (C,C',R)\in \Pos_r$, $(C',R')\in \Pos_c$ and $R'$ is a component of
 $G-C'$ reachable from a vertex in $R$ by a directed path in $G-(C\cap
 C') \}$. \Ie~the robber can run along any directed path in the
 digraph which does not contain a cop from $C\cap C'$ (i.e.~one that remains on
 the board). 

\paragraph*{Component game}  In the \emph{component
  game}, we define $\Moves(G)$ as $\CompMoves(G)$,
where $\CompMoves(G)$ $:=  \Moves_c(G) \cup  \{\bigl(
(C,C',R),(C',R') \bigr) \mid
 (C,C',R)\in \Pos_r$, $(C',R')\in \Pos_c$ and $R'$ is a component of
 $G-C'$ such that $R, R'$ are contained in the same component of
 $G-(C\cap C') \}$. \Ie in the component game, the robber can
 only run to a new vertex within the strongly connected component of
 $G-(C\cap C')$ that contains his current position.

\smallskip

\paragraph*{Monotonicity} The component $\Mon$ describes the
monotonicity condition and is a set of finite plays. The cops win all
plays $(C_0,R_0), (C_0,C_1,R_0),(C_1,R_1),\ldots$ in $\Mon$ where $R_i
= \emptyset$ for some~$i$ and the robber wins all other plays.
Usually $\Mon$ describes cop- or robber-monotonicity; the latter is
defined differently in the component game and in the reachability
game: $\Mon\subseteq \copmon(G)\cup \robmon_\comp(G)\cup
\robmon_\reach(G)$.  A play $(C_0,R_0), (C_0,C_1,R_0),$
$(C_1,R_1),\ldots$ is in $\copmon_\reach(G)$, called
\emph{cop-monotone,} if for all $i,j,k\ge 0$ with $i<j<k$ we have
$C_i\cap C_k \subseteq C_j$, \ie cop-monotonicity means that the cops
never reoccupy vertices.  A play $(C_0,R_0),
(C_0,C_1,R_0),(C_1,R_1),\ldots$ is in $\robmon_{\reach}(G)$, called
\emph{robber-monotone,} if for all~$i$, $\Reach_{G-(C_i\cap
  C_{i+1})}(R_{i+1})\subseteq \Reach_{G-(C_i\cap C_{i+1})} (R_i)$,
\ie~once the robber cannot reach a vertex, he won't be able to reach
it forever. Finally, a play is in $\robmon_{\comp}(G)$, also called
\emph{robber-monotone}, if $R_{i+1}\subseteq R_i$ for all~$i$.

A strategy for the cops is \emph{cop-} or \emph{robber-monotone} if all plays 
consistent with that strategy are
cop- or robber-monotone, respectively.

By combining reachability or component games with monotonicity
conditions we obtain a range of different graph searching games. 
It follows immediately from the definition that on every digraph the
cops have a winning strategy in each of the graph searching games
defined above by simply placing a cop on every vertex. For a given
digraph $G$, we are
therefore interested in the minimal number $k$ such that the cops have a
winning strategy in which no cop position $C_i$ contains more than $k$
vertices. 

\begin{definition}\label{def:game-defs}
  For any digraph $G$, we define
  \begin{itemize}[noitemsep]
  \item $\dTWcops{G}$ as the minimal number of cops needed to win
    $(\Pos(G),$ $\CompMoves(G),$ $\Mon)$ where $\Mon$ is the set of
    all finite plays,
  \item $\cmdTWcops{G}$ as the minimal number of cops to win
    $(\Pos(G),$ $\CompMoves(G),$ $\Mon = \copmon_\comp(G))$,
  \item $\DAGWcops{G}$ as the minimal number of cops needed to win
    $(\Pos(G),$ $\ReachMoves(G),$ $\Mon = \copmon_\reach(G))$.
  \end{itemize}
\end{definition}

It follows immediately from the definitions that, for all digraphs $G$,
$\dTWcops{G}\leq \DAGWcops{G}$ and $\dTWcops{G}\leq \cmdTWcops{G}$.
The number $\cmdTWcops{G}-\dTWcops{G}$ is called the
\emph{cop-{}monotonicity cost} for the component game on~$G$.
Robber-monotonicity cost as well as monotonicity cost for other game
variants are defined analogously.


\section{Strong non-cop-monotonicity of \dtw}

     \Dtw can be characterised up to a constant factor by the \dtw game. 

\begin{theorem}[\cite{JohnsonRobSeyTho01}]\label{thm:dtw_cndtw}
The \dtw of a graph $G$ and $\dTWcops{G}$ are within a constant factor 
of each other.
\end{theorem}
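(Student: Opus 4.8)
The plan is to reprove this two-sided bound of Johnson, Robertson, Seymour and Thomas by inserting the notion of a \emph{haven} as the combinatorial bridge between the component game and the arboreal decompositions that define \dtw. Recall that an arboreal decomposition of $G$ is a triple $(T,\beta,\gamma)$ with $T$ an arborescence, $\beta$ partitioning $V(G)$ into bags indexed by the nodes of $T$, and $\gamma$ labelling each edge by a guard set, so that for every edge $e=(s,t)$ the set $\gamma(e)$ guards the union $\beta(\ge t)$ of the bags below $t$ (no directed walk may leave and re-enter $\beta(\ge t)$ without meeting $\gamma(e)$); the width is $\max_t |\beta(t)\cup\bigcup_{e\sim t}\gamma(e)|-1$. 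A \emph{haven of order $k$} assigns to every $Z\subseteq V(G)$ with $|Z|<k$ a component $\beta(Z)$ of $G-Z$ with $\beta(Z')\subseteq\beta(Z)$ whenever $Z\subseteq Z'$. My first step is the standard observation that $\dTWcops{G}$ equals the largest order of a haven in $G$. Indeed, a haven of order $k$ is precisely a winning robber strategy against $k-1$ cops in the component game: maintaining the invariant that the robber sits in $\beta(C)$, from a position $(C,C',R)$ with $R=\beta(C)$ the robber travels inside the strongly connected set $\beta(C\cap C')$, which contains both $\beta(C)$ and $\beta(C')$, and this is a legal move within one component of $G-(C\cap C')$; it re-establishes the invariant at $\beta(C')$. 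Conversely, since the game is a finite reachability game and hence determined, a winning robber strategy against $k-1$ cops can be read off as a haven of order $k$.

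It then remains to bound the haven number and the \dtw of $G$ within a constant factor of each other. One direction is a routine ``cops follow the decomposition'' argument: given an arboreal decomposition of width $w$, the cops descend $T$ from the root, always occupying $\beta(t)\cup\gamma(e)$ for the current node $t$ and the edge $e$ entering it. The guard property confines the robber to $\beta(\ge t)$ for a single child subtree, so the cop region can be pushed strictly downwards until the robber component is empty. As at most $|\beta(t)\cup\bigcup_{e\sim t}\gamma(e)|\le w+1$ vertices are ever occupied at once, this yields $\dTWcops{G}\le w+1$, i.e.\ a linear bound in the \dtw of $G$.

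The harder direction, and the main obstacle, is to turn the \emph{absence} of a haven of order $k+1$ (equivalently $\dTWcops{G}\le k$) into an arboreal decomposition of width $O(k)$. One cannot simply read a decomposition off a winning cop strategy, because that strategy need not be monotone and so the robber region need not shrink along a play; this is exactly why I route through the haven reformulation rather than through the strategy directly. The construction is recursive: in a strongly connected $G$ with no haven of order $k+1$ one first proves a balanced-separator lemma, producing a set $Z$ of $O(k)$ vertices such that every component of $G-Z$ meets only a bounded fraction of a chosen weight function and in particular carries no haven of order $>k$. Taking $Z$ as the root bag, feeding $Z$ into the guard sets of the edges to the subtrees, and recursing on the components of $G-Z$ builds $T$, $\beta$ and $\gamma$; the guarding condition holds because the components of $G-Z$ are separated by $Z$, and the width stays in $O(k)$ because balancedness keeps the overlap of successive separators bounded. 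This is where the constant factor (the ``$3$'' in the bound $\dTW(G)\le 3\,\dTWcops{G}+O(1)$) enters.

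The technical heart is thus the balanced-separator lemma together with the bookkeeping that guarantees the recursively chosen guard sets never accumulate beyond $O(k)$ at any node, all while preserving strong connectivity of the recursive instances and the guarding property; this is the part that requires real work, whereas the game/haven equivalence and the decomposition-to-cops direction are comparatively routine. Combining the two linear inequalities shows that the \dtw of $G$ and $\dTWcops{G}$ are within a constant factor of each other, as claimed.
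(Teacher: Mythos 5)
The paper does not prove this statement: it is imported verbatim from Johnson, Robertson, Seymour and Thomas, and the only in-text commentary is that the JRST proof converts a winning strategy for $k$ cops into a directed tree decomposition of width roughly $3k$. So the comparison is really between your outline and the JRST argument you are trying to reconstruct, and there are two problems with it. First, your ``standard observation'' that $\dTWcops{G}$ equals the largest order of a haven is not established by the argument you give: determinacy of the finite game tells you that one of the two players has a winning strategy, but it does not convert a winning robber strategy into a haven. A haven must assign to \emph{every} set $Z$ with $|Z|<k$ a strong component $\beta(Z)$ of $G-Z$ satisfying the monotonicity condition $\beta(Z')\subseteq\beta(Z)$ for all pairs $Z\subseteq Z'$ simultaneously, whereas a robber strategy only supplies locally consistent choices along plays; making these choices globally coherent is exactly what fails in general, and this exact min--max equality is not a theorem of JRST. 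You are saved only because the rest of your argument uses just the true direction (haven of order $k+1$ implies the robber escapes $k$ cops, hence $\dTWcops{G}\le k$ implies there is no haven of order $k+1$), so you should weaken the claim to that implication.

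Second, and more seriously, the entire content of the hard direction --- that a strongly connected digraph with no haven of order $k+1$ admits an arboreal decomposition of width $O(k)$ --- is delegated to an unstated ``balanced-separator lemma'' together with unspecified ``bookkeeping that guarantees the recursively chosen guard sets never accumulate beyond $O(k)$.'' That lemma and that bookkeeping \emph{are} the theorem: one must show how the non-existence of a haven yields, inside each current strong component, a small set whose removal makes progress, and one must verify that the guard sets $\gamma(e)$ inherited down the recursion, combined with the new separators, never exceed roughly $3k$ vertices at any node while still guarding each $\beta(\ge t)$ against re-entering walks. Nothing in your write-up indicates how the absence of a haven is actually invoked to produce the separator (a haven is a statement about all small vertex sets, and its negation gives you one ``bad'' set per candidate choice function, which takes genuine work to exploit), nor why the accumulated guards stay bounded. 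As it stands the proposal correctly identifies the architecture of the JRST proof (decomposition $\Rightarrow$ cop strategy is routine; no haven $\Rightarrow$ decomposition of width $O(k)$ is the crux) but proves only the routine half.
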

\Dtw is defined by directed tree decompositions
(in~\cite{JohnsonRobSeyTho01} called \emph{arboreal decompositions}),
see \Cref{subsec:otw} for a definition. Such a decomposition can be
viewed as a description of a robber-monotone winning strategy for the
cops. The proof of Theorem~\ref{thm:dtw_cndtw} essentially shows that
a winning strategy for~$k$ cops can be transfered in a directed tree
decomposition of width, roughly, at most~$3k$ and hence in a
robber-monotone winning strategy for approximately $3k$ cops.
It follows that the robber-monotonicity cost for \dtw is
bounded by a constant factor. 

One would expect that the cop-monotonicity cost can be bounded
similarly by a slowly growing function. However, the following theorem
shows that the cop-monotonicity cost for \dtw cannot be bounded by any
function at all.

\begin{theorem}\label{thm_non_cop_mon}
  There is a class $\{G_n \mid n > 2\}$ of graphs such that for all
  $n$, $\dTWcops{G_n} \le 4$ and $\cmdTWcops{G_n}\ge n$.
\end{theorem}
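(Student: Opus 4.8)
The plan is to construct the graphs $G_n$ explicitly so that $4$ cops win the (non-monotone) component game, while any cop-monotone strategy---in which no cop is ever placed on a vertex it previously vacated---forces the cop number to grow with $n$. The natural candidate is a family built on a large complete directed tree $T^d_\ell$ (whose vertex set is already fixed in the preliminaries as $\{0,\dots,d-1\}^{\le\ell}$) with extra back-edges added to create many strongly connected components at different scales. The design principle, following the spirit of the $\frac43k$ examples of \cite{KreutzerOrd08} but pushed to an unbounded gap, is that with reusable cops a small squad can shepherd the robber down the tree one level at a time---capturing a vertex, letting the robber flee into a subtree, then \emph{lifting cops off} the already-cleared part and reusing them one level deeper. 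Cop-monotonicity forbids exactly this reuse, so a monotone strategy must pay afresh at every level, and the depth/branching are chosen so the accumulated cost exceeds $n$.

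First I would pin down the edge set of $G_n$. I would orient the tree edges from parent to child and add, for each internal vertex $u$, edges from deep descendants back up to $u$ (or to a designated ``hub'' near $u$), so that every subtree rooted at $u$ together with these return edges forms a single strong component. The key quantitative choices---branching degree $d$ and depth $\ell$ as functions of $n$---must be made so that the back-edge structure keeps the \emph{free} cop number at $4$: concretely, four cops should suffice to (i) hold a separator isolating the robber's current component, (ii) split it by occupying one more internal vertex, and (iii) rotate the freed cop forward, all within the component game where the robber may only move inside the current strong component of $G-(C\cap C')$.

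Next I would prove the two bounds separately. For $\dTWcops{G_n}\le 4$ I would give an explicit winning strategy for four cops and argue, by induction on the depth of the robber component, that from any position the cops can shrink the robber's strong component to a strictly smaller one while never simultaneously needing more than four vertices occupied; the induction bottoms out when the component becomes trivial, so $R_i=\emptyset$ and the cops win. For $\cmdTWcops{G_n}\ge n$ I would argue contrapositively: fix a cop-monotone strategy using fewer than $n$ cops and exhibit a robber escape. The heart is a potential/counting argument---because cops may never return to vacated vertices, to cut the component rooted at a level-$i$ vertex the strategy must commit cops that remain ``spent'' on the branch above, so along a root-to-leaf path of length $\ell$ the number of distinct vertices the cops are forced to occupy grows additively with the level, eventually exceeding any budget below $n$. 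I would formalize this by maintaining an invariant that the robber can always retreat into a subtree still untouched by cops, using the back-edges to re-enter higher components whenever a cop is (illegally, under monotonicity) lifted.

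The main obstacle, and where I would spend most of the effort, is the lower bound $\cmdTWcops{G_n}\ge n$: unbounded separation is far stronger than the constant-factor gaps previously known, so the back-edge gadget must be engineered delicately---strong enough that relinquishing any cleared cop instantly reopens a strong component for the robber (which is what makes reuse fatal under cop-monotonicity), yet sparse enough that it does not raise the free cop number above $4$. Getting both the upper and lower bounds to hold simultaneously is the crux; I expect to iterate on the exact placement and direction of the return edges, and to choose $d$ and $\ell$ (e.g.\ taking $\ell$ linear in $n$) only at the end once the per-level monotone cost has been shown to be bounded below by a positive constant.
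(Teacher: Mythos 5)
There is a genuine gap: what you have written is a design brief rather than a proof, and the two construction ideas that actually make the paper's argument work are absent. First, the paper's $G_n$ is not a single tree with back-edges but a \emph{recursive nesting} of tree gadgets: $G_n=G_n^{n+1}$, where each level consists of a complete undirected tree $T_{n+2}^{n+1}$ and, hanging off \emph{each leaf} $v$, a family of $n$ copies $H_1^{m}(v),\dots,H_n^{m}(v)$ of the previous level $G_n^{m-1}$. Second --- and this is the key mechanism your sketch lacks --- the copy $H_i^{m}(v)$ is wired so that its re-entry point from outside is precisely $x_i^{m}(v)$, the $i$th vertex on the root-to-$v$ path (directed edges from $x_i^{m}(v)$ into the marked vertices of $H_i^{m}(v)$, and from the leaves of $H_i^{m}(v)$ back to $v$). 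The lower bound is then a pigeonhole plus an \emph{adaptive} choice by the robber: by the time $n$ cop-monotone cops have chased him to a leaf, all $n+1$ vertices of the root-to-leaf path have been occupied at some point, so at least one $x_{i}$ has been vacated and can never be reoccupied; the robber enters exactly the copy $H_{i}$ keyed to that vacated vertex, recurses, and the accumulated vacated $x$-vertices form a permanently cop-free cycle through which he escapes forever. Your proposed ``potential/counting argument'' that cops stay ``spent on the branch above'' is not justified for a generic tree with return edges: if every subtree is strongly connected to its ancestors, a cop-monotone strategy can typically hold a small separator below and release the cops above, and such graphs are in fact close to the known examples with \emph{bounded} monotone cop number. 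It is the indexed family of $n$ nested copies per leaf, letting the robber select his continuation after seeing which path vertex was abandoned, that converts one vacated vertex per level into an unbounded escape structure.

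A secondary issue is that the recursion is also what makes the upper bound $\dTWcops{G_n}\le 4$ work: the four-cop strategy holds the root of the current gadget plus one blocking vertex $x_j$ while two cops leapfrog down the local tree, then rotates the freed cops into the next nested copy. In your single-tree design the tension you yourself flag (back-edges strong enough to punish monotone cops but weak enough to keep the free cop number at $4$) is not resolved, and resolving it essentially forces you to reinvent the nested, leaf-indexed gadget above. So the proposal points in the right general direction (cop-monotonicity forbids the reuse that four cops rely on) but does not contain the construction or the argument that proves either inequality.
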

\begin{proof}
  Let $n>2$. We inductively define a sequence of graphs $G_n^m$ and
  sets marked vertices $M(G_n^m) \subseteq V(G_n^m)$ for
  $m\in\{1,\ldots, n+1\}$. We then define $G_n$ as $G_n^{n+1}$.

  First $G_n^1$ is an edgeless graph with a single vertex and
  $M(G_n^1)=V(G_n^1)$, \ie the vertex of $G_n^1$ is marked. Assume
  that $(G_n^m, M(G_n^m))$ has been constructed. Recall that
  $T^d_\ell$ denotes a complete undirected tree of branching degree
  $d$ and depth~$\ell$.  One part of $G_n^{m+1}$ is a copy of
  $T_{n+2}^{n+1}$, which has $(n+1)^{n+2}$ leaves $v_s$ for $s \in
  \{1,\ldots,(n+1)^{n+2}\}$. The graph $G_n^{m+1}$ is the disjoint
  union of $T_{n+2}^{n+1}$ and $n\cdot(n+1)^{n+2}$ copies
  $H^{m+1}_j(v_s)$ of $G_n^m$ where $j\in\{1,\ldots,n\}$ and
  $s\in\{1,\ldots,(n+1)^{n+2}\}$ plus some additional edges which we
  describe next. We denote the subgraph $T_{n+2}^{n+1}$ of $G_n^{m+1}$
  by $T(G_n^{m+1})$.


  For every leaf $v \in\{v_s \mid 1\le s\le (n+1)^{n+2}\}$ there is an
  undirected edge from~$v$ to the root of $H^{m+1}_i(v)$. Let
  $x^{m+1}_i(v)$ be the $i$th vertex on the path from the root of
  $T(G_n^{m+1})$ to~$v$. For all leaves~$v$ of $T(G_n^{m+1})$ and all
  $1\leq i\leq n$ we add directed edges from $x^{m+1}_i(v)$ to all
  marked vertices $M(H^{m+1}_i(v))$ of $H^{m+1}_i(v)$. Finally, for
  all leaves $v$ of $T(G_n^{m+1})$ and all leaves of $H_i^{m+1}(v)$ we
  add a directed edge to $v$.  We define $M(G_n^{m+1})\coloneqq
  V(T(G_n^{m+1}))$. The graph $G_n$ is schematically shown in
  \Cref{fig_noncopmon} (edges without arrows mean edges in both directions). 
\begin{figure}
\begin{center}

        \newcommand{\triangdots}[3]{%
          \node[vertex] (epsilon#3) at ($ (#1,#2) + (0,0) $){};
          \node[vertex] (0) at ($ (#1,#2) + (-1,-1) $){};
          \node at ($ (#1,#2) + (0,-1) $){$\cdots$};
          \node[vertex] (1) at ($ (#1,#2) + (1,-1) $){};
          \draw (epsilon#3) edge (0); \draw (epsilon#3) edge (1);
          \node[rotate=10]  at ($ (#1,#2) + (-1.75,-1.5) $) {$\iddots$};
          \node[rotate=-10] at ($ (#1,#2) + (-0.25,-1.5) $) {${}\ddots{}$};
          \node[rotate=0] at ($ (#1,#2) + (1,-1.75) $) {${}\vdots{}$};
        }
        
        \newcommand{\triangline}[4]{%
          \node[vertex] (epsilon#3) at ($ (#1,#2) + (0,0) $){};
          \node[vertex] (0#4) at ($ (#1,#2) + (-1,-1) $){};
          \node at ($ (#1,#2) + (0,-1) $){$\cdots$};
          \node[vertex] (1) at ($ (#1,#2) + (1,-1) $){};
          \draw (epsilon#3) edge (0#4); \draw (epsilon#3) edge (1);
          \draw (0#4) -- ++(-1,-1);
          \node[rotate=0] at ($ (#1,#2) + (-0.25,-1.5) $) {${}\ddots{}$};
          \node[rotate=0] at ($ (#1,#2) + (1,-1.75) $) {${}\vdots{}$};
          \node[rotate=0] at ($ (#1,#2) + (-1,-2) $) {${}\cdots{}$};
        }
        
        \newcommand{\triang}[2]{%
          \node[vertex] (epsilon) at ($ (#1,#2) + (0,0) $){};
          \node[vertex] (0) at ($ (#1,#2) + (-1,-1) $){};
          \node at ($ (#1,#2) + (0,-1) $){$\cdots$};
          \node[vertex] (1) at ($ (#1,#2) + (1,-1) $){};
          \draw (epsilon) edge (0); \draw (epsilon) edge (1);
        }
        
\begin{tikzpicture}[scale=0.5]

        \triangdots{0.5}{-4.5}{7}
        \node[draw,rectangle,rounded corners,minimum width=2.9cm,minimum
        height=0.5cm,rotate=45] (rectangle0) at (-1.2,-6.25){};
        
        \triangline{-1.9}{-7}{8}{7}
        \triangdots{-4}{-9.1}{9}
        
        \triangdots{0}{0}{0}
        \draw[-slim,blue,bend left=40,dashed] (0.east) to (rectangle0);
        
        \triangline{-2.4}{-2.4}{1}{1}
        \draw (01) to (epsilon7);
        
        \triangdots{-4.5}{-4.5}{2}
        \triangline{-7}{-7}{3}{3}
        \triangdots{-9.1}{-9.1}{4}
        \triangline{-11.5}{-11.5}{5}{5}
        \triangdots{-13.6}{-13.6}{6}
        \triang{-16}{-16}
        
        \node[vertex] (leaf0) at (-18,-18){};
        \node[vertex] (leaf1) at (-16,-18){};
        \draw (leaf0) to (0); \draw (leaf1) to (0);
        
        \node[draw,rectangle,rounded corners,minimum width=2.8cm,minimum
        height=0.5cm,rotate=45] (rectangle1) at (-1.7,-1.7){};
        \node[draw,rectangle,rounded corners,minimum width=2.9cm,minimum
        height=0.5cm,rotate=45] (rectangle2) at (-6.25,-6.25){};
        \node[draw,rectangle,rounded corners,minimum width=2.8cm,minimum
        height=0.5cm,rotate=45] (rectangle3) at (-10.8,-10.8){};
        \node[draw,rectangle,rounded corners,minimum width=2.8cm,minimum
        height=0.5cm,rotate=45] (rectangle4) at (-15.3,-15.3){};
        
        \draw[-slim,red,bend left=40,dotted,thick] (leaf0) to (01);
        \draw[-slim,red,bend left=40,dotted,thick] (leaf0) to (03);
        \draw[-slim,red,bend left=40,dotted,thick] (leaf0) to (05);
        
        \draw[-slim,blue,bend right=40,dashed] (epsilon0) to (rectangle2);
        \draw[-slim,blue,bend right=40,dashed] (epsilon2) to (rectangle3);
        \draw[-slim,blue,bend right=40,dashed] (epsilon4) to (rectangle4);


         \node[] at (-4,-17) {$G_n = G^{n+1}_n$};
         \node[] at (-0,-11) {$H^{n+1}_2(0^n)$};
         \node[] at (-8,-14) {$H^{n+1}_1(0^n)$};
         \node (0^n) at (-8,-1) {$v=0^n$};
         \draw[ie] (0^n) -- (01);

\end{tikzpicture}
\end{center}
\caption{$\dTWcops{G_n}=4$, but 
      the robber wins against $n$ cop-{}monotone cops. Only the left-most
      branch of $G_n$ and the upper part of the left-most branch of $G_2^n(0^n)$ is shown.}
 \label{fig_noncopmon}
 \end{figure}
  
  \smallskip Let us describe a non-{}cop-{}monotone winning strategy
  for~$4$ cops on~$G_n$.  Observe that $G_n=G_n^{n+1}$ is an
  undirected tree with additional edges that connect only vertices of
  the same branch.  In particular, for each subgraph $H_j^i(v)$, if
  the robber is in $H_j^i(v)$ and the cops block the root of
  $T(H_j^i(v))$ and $x_j^{i+1}(v)$, then the robber may not leave
  $H_j^i(v)$ as he cannot reenter $H_j^i(v)$. 

  We show that in each play of the game there is a unique sequence
  \[G_n^{n+1}, H_{j(n)}^{n}(v_n), H_{j(n-1)}^{n-1}(v_{n-1}),\ldots,
  H_{j(1)}^{1}(v_1)\] of subgraphs in which the cops are placed and
  such that the robber is captured on the unique vertex of
  $H_{j(1)}^1(v_1)$.

  Assume that the root of $T(G_n^{n+1})$ is occupied by a cop. Then
  two additional cops can play in a top-{}down manner in
  $T(G_n^{n+1})$ following the robber to his tree branch until the
  robber is forced out of $T(G_n^{n+1})$ into some $H_{j(n)}^{n}(v)$
  for some leaf $v$ of $T(G_n^{n+1})$. Define $v_n\coloneqq v$. The
  cops now occupy in a first step $v_n$, the root of
  $T(H_{j(n)}^{n}(v_n))$ and $x_{j(n)}^{i+1}(v_n)$. In a second step,
  they release the cop from $v_n$ and from the root of $T(G_n^{n+1})$,
  as these vertices are no longer available for the robber.

  Similarly, assume that the root of $T(H_{j(i)}^{n}(v_i))$ and
  $x_{j(i)}^{i+1}(v_i)$ are occupied by cops. As above, two additional
  cops can play in a top-{}down manner in $T(H_{j(i)}^{n}(v_i))$
  following the robber to his tree branch until the robber is forced
  out of $T(H_{j(i)}^{n}(v_i))$ into some $H_{j(i-1)}^{i-1}(v)$ for
  some leaf $v$ of $T(H_{j(i)}^{n}(v_i))$. Define $v_{i-1}\coloneqq
  v$. At this point of time, three cops are placed on the graph, one
  on $v_{i-1}$, one on the root of $T(H_{j(i)}^{n}(v_i))$ and one on
  $x_{j(i)}^{i+1}(v_i)$. The cops now first occupy with an additional
  cop the root of $T(H_{j(i-1)}^{n}(v_{i-1}))$. They can now release
  the cop from $v_{i-1}$ which they place on
  $x_{j(i-1)}^{i}(v_{i-1})$. Finally they may release the cop from
  $x_{j(i)}^{i+1}(v_i)$ and thereby establish the induction hypothesis
  for~$i-1$.

  In this way the robber is captured at the latest on the single vertex of
  $H_{j(1)}^{1}(v_1)$.



\smallskip
Now we construct a robber strategy that wins against all cop-{}monotone 
strategies for~$n$ cops if $n>2$. For a vertex~$v$ and subtree~$T$ 
of~$G_n$ we say that~$T$ is a subtree of~$v$ if the root of~$T$ is a direct 
successor of~$v$. The robber resides on a vertex of $T(G_n)$ that has the least 
distance to the root of $G_n$ as long as this is possible. When a cop occupies 
his vertex~$v$ the robber proceeds to a directed successor of~$v$ such that the 
subtree of~$v$ is cop free. Such a successor always exists due to the 
high branching degree of~$G_n$. When the robber reaches a leaf~$w_n$ of 
$T(G_n)$, every vertex on the path from the root of $G_n$ to~$w_n$ has been 
occupied by a cop. As the length of the path is greater that the number of 
cops, there is a vertex $x^n_{i_n}(w_n)$ that has been left by a cop. When a 
cop occupies~$w_n$, the robber goes to $G^n_{i_n}(w_n)$. Now on 
$G^n_{i_n}(w_n)$ (which is isomorphic to $G_n^{n-1}$) the robber plays in the 
same way as on~$G_n$ and so on recursively for each~$m$ on $G^m_{i_m}(w_m)$. 
Note that until the robber is captured, there is a path from this vertex to a 
leaf of $G_n$ and then to all already chosen~$w_j$. 

Consider a position when the robber arrives at a leaf~$v$ of~$G_n$ and a cop is 
landing on this vertex. Then at most~$n-1$ cops are on the graph and there is 
some~$j$ such that there is no cop in~$T(G^j_{i_j}(w_j))$. Thus there is a cop 
free path from~$v$ to~$w_j$, then to $x^j_{i_j}(w_j)$ within $T(G^j_{i_j}(w_j))$ 
and then via $x^{j-1}_{i_{j-1}}(w_{j-1})$, $x^{j-2}_{i_{j-2}}(w_{j-2})$, 
\ldots, $x^{2}_{i_{2}}(w_{2})$ back to~$v$. Note that all those $x$-vertices are 
not 
occupied by cops by construction. Thus the robber can return to $w_j$ and play 
from~$w_j$ as before. In this way the robber will never be captured.
\end{proof}


\section{Towards monotonicity of the DAG-width game}\label{sec:variants}

    \dagw is usually defined by means of DAG decompositions, similar to
tree decompositions. For our purposes a game theoretic
characterisation of the \dagw of a graph~$G$ as $\DAGWcops{G}$ is more
useful, and we take it as a definition and refer to the corresponding
game as \dagw game. See~\cite{BerwangerDawHunKreObd12} for details.

As explained in the introduction, one of the most important open
problems in graph searching is the question whether cop- and
robber-monotonicity cost of \dagw games is bounded by any
function. Towards this goal, we introduce two new constraints for the
\dagw game, \emph{weak monotonicity} and a technical notion of \emph{shyness}.

\emph{Weak monotonicity} relaxes the winning condition for the cops,
so that they win more plays.  For a digraph $G$ we define
$\wmon(G)$ as the set of all finite plays $(C_0,R_0),
(C_0,C_1,R_0),(C_1,R_1),\ldots$ such that the following condition is
satisfied. For all $i$ let $c(i) := C_{i+1}\cap R_i$ be the cops which
move into the component of $G - C_i$ currently used by the robber. We
call these cops the \emph{chasers}. All other cops being placed, i.e. the cops 
in
$(C_{i+1}\setminus C_i)\setminus c(i)$ are \emph{guards}. The play $(C_0,R_0),
(C_0,C_1,R_0),(C_1,R_1),\ldots$ is \emph{weakly monotone} if for all~$i$ and
all~$j$ with $j<i$, no vertex in $c(j)$ is reachable by a directed
path from any vertex in~$R_i$ in $G-(C_i\cap C_{i+1})$. That is, for
weak monotonicity we only require monotonicity in the cops that
are used to shrink the robber space but not in the cops placed outside of the
component to block the paths to previous cop positions.
The set $\wmon(G)$ is the set of all weakly monotone plays on $G$.

In a shy robber game, the robber can never leave his strong
component and therefore has the same set of possible moves as in the
directed tree width game. However, and this is crucial, the
monotonicity conditions are defined based on directed
reachability. I.e. the robber can destroy monotonicity if there is a
directed path from his current position to a forbidden vertex. We use
the shy robber games to consider the case in the \dagw game when the robber decides never to
change his component, even if he could do this: we just enforce him to
stay in his component. Of course, this does not restrict his ability
to infur non-monotonicity outside of his component.

Based on weak monotonicity we can now define the following variants of
the \dagw game.
\begin{itemize}[noitemsep]
\item The \emph{weakly monotone game} is the game
  defined by $(\Pos(G),$ $\ReachMoves(G), \Mon =  \wm(G))$.
\item The \emph{weakly monotone shy (robber) game} is the
  game $(\Pos(G),$ $\CompMoves(G), \Mon =  \wm(G) )$.
\item Finally, the \emph{strongly monotone shy (robber) game}
  is the game $(\Pos(G),$ $\CompMoves(G), \Mon =  \copmon_\reach(G) )$.
\end{itemize}

We write $\shyDAGWcops{G}$, $\wmDAGWcops{G}$ and $\wmshyDAGWcops{G}$
for the minimal number~$k$ of cops needed to win the corresponding
game. The following inequalities are immediate consequences of the definitions:
\begin{align*} 
\wmshyDAGWcops{G} \leq \shyDAGWcops{G} \leq \DAGWcops{G}\,;\\
 \wmshyDAGWcops{G} \leq \wmDAGWcops{G} \leq \DAGWcops{G}. 
\end{align*}
The following theorem is our main result in this section.
\begin{theorem}\label{thm_main}
  If $k$ cops capture a shy robber in a weakly monotone way,
  then $18k^2+3k$ cops capture a non-shy robber in a strongly monotone way.
\end{theorem}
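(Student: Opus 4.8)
The plan is to take a weakly monotone winning strategy $\sigma$ for $k$ cops against the shy robber and compile it into a cop-monotone (that is, $\copmon_\reach$) winning strategy for $18k^2+3k$ cops against the non-shy robber; equivalently, to bound $\DAGWcops{G}$ in terms of $\wmshyDAGWcops{G}=k$. First I would unfold $\sigma$ into its strategy tree: each node $t$ carries a cop set $C_t$ with $|C_t|\le k$ together with the shy robber component $R_t$, and along each branch I classify the newly placed cops into \emph{chasers} $c(i)=C_{i+1}\cap R_i$ and \emph{guards}, exactly as in the definition of $\wm(G)$. The point of weak monotonicity is that the chasers are already monotone — from any later robber component no earlier chaser is reachable in $G-(C_i\cap C_{i+1})$ — so the only source of non-monotonicity, and the only obstacle to sealing off the robber, are the guards, which $\sigma$ is allowed to shuffle freely.

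The two things the target strategy must additionally achieve are robustness against a robber who may run along directed paths out of his current strong component, and full cop-monotonicity. Both are handled by the same device: \emph{never release a guard that still matters; recruit a fresh cop instead}. Concretely, I would prove a \emph{confinement lemma}: if the cops permanently hold a separator $B$ whose removal traps the robber inside a region $X$ (no directed path from $X$ to its complement in $G-B$), then a non-shy robber inside $X$ has exactly the same available moves as a shy robber inside $X$, so $\sigma$ may be replayed verbatim within $X$. Thus escapes are prevented precisely by keeping the guards that delimit $X$, and these retained guards automatically give cop-monotonicity on the boundary, while the chasers inside $X$ are monotone by weak monotonicity.

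The heart of the argument, and the step I expect to be the main obstacle, is bounding how many guards must be held simultaneously, since naively retaining every guard that $\sigma$ ever places is unbounded along a play. Here I would introduce a potential on configurations and show that the cops only ever need to maintain a \emph{stack} of $O(k)$ nested sealed regions at once: each time the robber escapes to a strictly smaller sub-region the potential drops, and each level contributes at most $O(k)$ border guards plus the $\le k$ active chasers. Proving that the stack depth is $O(k)$ — rather than as large as the play length — is the crux; it is what converts weak monotonicity into strong monotonicity, and it is exactly where the quadratic cost is paid, since $O(k)$ levels times $O(k)$ cops per level yields the $O(k^2)$ bound.

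Finally I would assemble the global strategy from this stack discipline and verify the two winning properties. Cop-monotonicity holds because chasers never return (weak monotonicity) and a guard is released only once its region is finished and can never be re-entered, so no vertex is ever reoccupied; the strategy is winning because, by the confinement lemma, inside each sealed region the robber is forced to behave shyly and $\sigma$ drives his component down to $\emptyset$. A careful accounting of chasers, boundary guards, and the extra separators needed to open a new level then yields the explicit constant $18k^2+3k$.
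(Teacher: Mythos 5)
Your high-level architecture --- retain the guards that seal off the robber, freeze nested regions, and pay $O(k)$ cops per level over $O(k)$ levels --- does resemble the paper's final construction ($\sigma^{\freeze}_{\lc}$, which combines ``leaving tied cops'' with ``freezing the context''). But the step you yourself flag as the crux, namely that the stack of sealed regions has depth $O(k)$ rather than depth proportional to the play length, is exactly the part that carries the whole theorem, and your proposal does not supply an argument for it. The paper proves it by induction on the number of cops used by the original strategy: when a guard on $v$ becomes tied, the context is frozen and play continues inside the inner region $R_m$; as long as the original strategy keeps at least one of its $k$ cops outside $R_m$, the induction hypothesis for $k-1$ bounds the tied cops inside, and at the first moment the original strategy places all $k$ of its cops inside $R_m$, the cop on $v$ is provably no longer reachable from the robber (else strong non-monotonicity of the underlying shy strategy would be violated, via the left-to-right blocking lemma) and can be released. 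A generic ``potential drops when the robber escapes to a smaller sub-region'' does not yield this: the number of distinct escapes in a play is not bounded by $k$, and the paper explicitly exhibits (Lemmas \ref{lemma_against_leaving_cops} and \ref{lemma:OffhandedCops}) graph families on which each of your two devices alone --- retaining every still-relevant guard, or freezing contexts --- costs an unbounded number of extra cops.

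A second, independent gap is that you apply your retention discipline directly to the guards of the given weakly monotone shy strategy. Those guards may be reshuffled onto fresh vertices at every step, so ``never release a guard that still matters'' accumulates unboundedly many cops even per level; this is precisely the failure mode of Lemma \ref{lemma_against_leaving_cops}. The paper first normalises the guard placements by replacing them with the unique $\prec_M^R$-minimal blocker $\mb(R,M)$ of the current robber region from the set of past chaser positions, and proves (Lemmas \ref{lemma_making_R_smaller} and \ref{lemma_making_M_larger}) that successive minimal blockers shield one another as $R$ shrinks and $M$ grows. This upgrades the shy strategy to a \emph{strongly} monotone shy strategy at cost $3k$, and only then is the non-shy robber introduced (costing another factor $2$ via the shy-similar translation) before the quadratic step --- which is where the constant $18k^2+3k = ((6k)^2+6k)/2$ comes from. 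Relatedly, your confinement lemma is stated too strongly: inside a sealed region a non-shy robber can still change strong components, so he does not have ``exactly the same available moves'' as a shy robber; handling those internal component changes is what forces the recursive treatment and the shy-to-non-shy translation in the first place.
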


Hence the weak monotonicity cost is bounded by a quadratic
function. To prove that the (strong) monotonicity cost is
bounded, it suffices to show that for some function $f\colon
\bbN\to\bbN$ a winning strategy for~$k$ cops in the \dagw game without
any monotonicity constraints induces a winning strategy for $f(k)$
cops against a shy robber in the weakly monotone game. If this is not
true, Theorem~\ref{thm_main} shows that the examples
from~\cite{KreutzerOrd11} cannot be used to prove this, as the
winning strategies used there are weakly monotone.

    \subsection{Blocking and the blocking order}\label{sec_blocking}

Considering the robber whose influence reaches further than his
current component (either because he can leave it or by the weak
monotonicity) we study the properties of cops blocking certain
positions from the robber. This is crucial for placing the guards.
To make this formal, we define blocking sets and an order
on them, so that we can speak about minimal blocking sets.

\begin{definition}[Blocking]
Let $R$, $M$ and $X$ be sets of vertices of a graph $G$. We say
that $X$ \emph{blocks} $R\to M$ if $X\cap R=\emptyset$ and every path
from $R$ to $M$ in $G$ contains a vertex in $X$. When $R$ and $M$
are clear from the context, we simply say that $X$ is a \emph{blocker}.
\end{definition}

Below, we formulate a few basic properties of blocking.
Note that the graph~$G$ can, of course, have cycles.
For some $X\subseteq V(G)$ and a path $P$ we say that $P$ is $X$-free
if $X\cap P = \0$.

\begin{lemma} \label{lemma_transfer-down}
If $X$ blocks $R \to M$ and $Y$ blocks $R \to X$, then~$Y$ blocks $R \to M$.
\end{lemma}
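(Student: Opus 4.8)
The plan is to unfold the definition of ``blocks'' into its two requirements and verify each in turn. Recall that $Y$ blocks $R \to M$ means (i) $Y \cap R = \emptyset$, and (ii) every path from $R$ to $M$ in $G$ contains a vertex of $Y$. Condition (i) is immediate: it is already part of the hypothesis that $Y$ blocks $R \to X$, so nothing has to be done. The whole content of the lemma therefore lies in condition (ii), and my approach is a short ``truncation'' argument that turns any $R$--$M$ path into an $R$--$X$ path, on which the second hypothesis can be applied.

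Concretely, I would fix an arbitrary directed path $P$ from $R$ to $M$, say starting at a vertex $r \in R$ and ending in $M$. Since $X$ blocks $R \to M$, the path $P$ must meet $X$; let $x$ be the first vertex of $X$ encountered along $P$ (from the $r$-end), and let $P'$ be the initial segment of $P$ from $r$ up to and including $x$. Then $P'$ is itself a directed path, it starts at $r \in R$, and it ends at $x \in X$, so $P'$ is a path from $R$ to $X$. Applying the hypothesis that $Y$ blocks $R \to X$ yields a vertex of $Y$ on $P'$, and since $P'$ is a subpath of $P$, that vertex lies on $P$ as well. As $P$ was arbitrary, every path from $R$ to $M$ meets $Y$, which establishes (ii) and hence the lemma.

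I do not expect a genuine obstacle here; the only points needing care are definitional. First, one must check that the truncated segment $P'$ really qualifies as a ``path from $R$ to $X$'' in the sense of the definition, i.e. that its last vertex lies in $X$ and that a prefix of a directed path is again a directed path respecting edge orientations---both are clear. Second, one should note that $r \notin X$, since $X \cap R = \emptyset$, so the chosen vertex $x$ genuinely occurs at or after the start of $P$; this guarantees that $P'$ is well defined with the correct endpoints. No property of $G$ beyond the definition of blocking is used, so cycles in $G$ cause no trouble.
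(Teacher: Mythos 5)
Your proof is correct and follows essentially the same route as the paper's: both truncate an $R$--$M$ path at its (first) intersection with $X$ to obtain an $R$--$X$ path, then apply the hypothesis that $Y$ blocks $R \to X$. The only difference is that you argue directly over an arbitrary path while the paper phrases it as a contradiction with a $Y$-free path, which is immaterial.
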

\begin{proof}
  Assume to the contrary that~$Y$ does not block $R \to M$, so there
  is a $Y$-free path~$P$ from~$R$ to~$M$, see Figure
  \ref{fig_lemma_transfer-down}.  Since~$X$ blocks $R \to M$, there is
  a vertex~$v$ on this path which is in~$X$.  But then the prefix up
  to~$v$ of the path~$P$ is a $Y$-free path from~$R$ to~$X$, a
  contradiction to the assumption that~$Y$ blocks $R \to X$.
\end{proof}

\begin{figure}
\begin{center}
\begin{tikzpicture}

\draw (0,0) ellipse (0.2 and 1);
\node at (0,0){$M$};

\draw (1,0) ellipse (0.2 and 1);
\node at (1,0){$X$};

\draw (2,0) ellipse (0.2 and 1);
\node at (2,0){$Y$};

\draw (3,0) ellipse (0.2 and 1);
\node at (3,0){$R$};

\draw[path2=20,-] (3,0.5) -- (2,0.5) -- (1,0.5);\draw[path2=20] (1,0.5)
node[point]{} -- (0,0.5);
\draw[path2=20,-] (3,-0.5) -- (2,-0.5);\draw[path2=20] (2,-0.5) node[point]{} --
(1,-0.5);

\end{tikzpicture}
\caption{Illustration for Lemma \ref{lemma_transfer-down}}
\label{fig_lemma_transfer-down}
\end{center}
\end{figure}

\begin{lemma} \label{transfer-up}
If $X$ blocks $R \to M$ and $Y$ blocks $X \to M$, then~$Y$ blocks $R \to M$.
\end{lemma}

\begin{proof}
Assume to the contrary that $Y$ does not block $R \to M$, so there is
a $Y$-free path~$P$ from $R$ to $M$, see Figure \ref{fig_lemma_transfer-up}.
Since $X$ blocks $R \to M$, there is a vertex $v$ on this path which is in $X$.
But since $Y$ blocks $X \to M$, there must be a vertex in $Y$ on the suffix
of~$P$ starting from $v$. This is a contradiction, as~$P$ was assumed
to be $Y$-free.
\end{proof}

\begin{figure}
\begin{center}
\begin{tikzpicture}

\draw (0,0) ellipse (0.2 and 1);
\node at (0,0){$M$};

\draw (1,0) ellipse (0.2 and 1);
\node at (1,0){$Y$};

\draw (2,0) ellipse (0.2 and 1);
\node at (2,0){$X$};

\draw (3,0) ellipse (0.2 and 1);
\node at (3,0){$R$};

\draw[path2=20,-] (3,0.5) -- (2,0.5);\draw[path2=20] (2,0.5) node[point]{} --
(1,0.5) -- (0,0.5);
\draw[path2=20,-] (2,-0.5) -- (1,-0.5);\draw[path2=20] (1,-0.5) node[point]{} --
(0,-0.5);

\end{tikzpicture}
\caption{Illustration for Lemma \ref{transfer-up}}
\label{fig_lemma_transfer-up}
\end{center}
\end{figure}
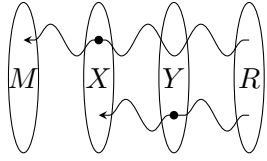

The following lemma is not used directly in further proofs, but serves
as an illustration of the techniques that will be used later.

\begin{lemma} \label{lemma_blocking-diff}
If $A_1$ blocks $X \to M$ and $X$ blocks $A_2 \to M$, then
$A_1 \setminus A_2$ blocks $X \to M$.
\end{lemma}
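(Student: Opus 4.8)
The plan is to check the two clauses of the definition of ``$A_1\setminus A_2$ blocks $X\to M$'' in turn. The disjointness clause comes for free: since $A_1$ blocks $X\to M$ we already have $A_1\cap X=\0$, and hence $(A_1\setminus A_2)\cap X\subseteq A_1\cap X=\0$. All the real work goes into the second clause, namely that every path $P$ from $X$ to $M$ meets $A_1\setminus A_2$. In keeping with the two preceding lemmas, I would argue by contradiction: fix a path $P$ from some $x\in X$ to some $m\in M$ that avoids $A_1\setminus A_2$ entirely, and aim for a contradiction.

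Because $A_1$ blocks $X\to M$, the path $P$ certainly meets $A_1$; and since it avoids $A_1\setminus A_2$, every $A_1$-vertex on $P$ must in fact lie in $A_1\cap A_2$. The crux of the argument is to single out the \emph{last} vertex $u$ of $P$ that lies in $A_1$ (it exists, and by the previous remark $u\in A_2$), and then ``bounce'' once between the two hypotheses along the tail of $P$ after $u$. First, the suffix of $P$ from $u$ to $m$ is a path from $A_2$ to $M$, so ``$X$ blocks $A_2\to M$'' supplies a vertex $x'\in X$ on it; as $u\in A_2$ and $X\cap A_2=\0$, we get $x'\neq u$, so $x'$ sits strictly after $u$ on $P$. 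Second, the suffix of $P$ from $x'$ to $m$ is a path from $X$ to $M$, so ``$A_1$ blocks $X\to M$'' supplies an $A_1$-vertex on it; that vertex lies at or after $x'$, hence strictly after $u$. This contradicts the maximality of $u$, which finishes the proof.

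The step I expect to require the most care is guaranteeing \emph{strict} progress past $u$, and this is precisely where the disjointness $X\cap A_2=\0$ (part of ``$X$ blocks $A_2\to M$'') is essential: without $x'\neq u$ the bounce could stall at $u$ and yield nothing new. The same observation silently disposes of the degenerate case $u=m$: were the last $A_1$-vertex to coincide with the endpoint in $M$, the suffix from $u$ would be the single vertex $u\in A_2$, which contains no vertex of $X$ and so already contradicts ``$X$ blocks $A_2\to M$''. Finally, I would note that the definition of path here permits repeated vertices, but the argument uses only that a suffix of a path is again a path, so no appeal to simplicity of $P$ is needed and nothing changes.
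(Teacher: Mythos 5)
Your proof is correct and follows essentially the same route as the paper's: both pick the last $A_1$-vertex $u$ on a putative $(A_1\setminus A_2)$-free path, note $u\in A_1\cap A_2$, use ``$X$ blocks $A_2\to M$'' to find an $X$-vertex strictly after $u$, and then derive a contradiction from ``$A_1$ blocks $X\to M$'' applied to the remaining suffix. Your write-up is in fact slightly more careful than the paper's, since it makes explicit why the $X$-vertex lies \emph{strictly} after $u$ (via $X\cap A_2=\emptyset$) and checks the disjointness clause of the definition.
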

\begin{proof}
  The situation is illustrated in
  Figure~\ref{fig_lemma_blocking-diff}.  Let $A = A_1 \setminus
  A_2$. Assume to the contrary that~$A$ does not block $X \to M$, so
  there is an $A$-free path~$P$ from~$X$ to~$M$.  Since~$A_1$ blocks
  $X \to M$, these must be a vertex on this path which is
  in~$A_1$. Let~$w$ be the last such vertex on~$P$ and note that
  $w\in A_1 \cap A_2$ since~$P$ is $(A_1 \setminus A_2)$-free.  But,
  as~$X$ blocks $A_2 \to M$, there must be a vertex $u \in X$ on the
  part of~$P$ strictly after~$w$.  The suffix of~$P$
  starting from~$u$ is then a path connecting~$X$ with~$M$ and
  avoiding~$A_1$ (by the choice of~$w$), a contradiction to our assumption that $A_1$ blocks
  $X\to M$.
\end{proof}

\begin{figure}
\begin{center}
\begin{tikzpicture}

\draw (0,0) ellipse (0.2 and 1);
\node at (0,0){$M$};

\filldraw[rounded corners,fill=gray!10!white] (0.7,0) -- (0.7,1) -- (2,2) --
(3,2) -- (3,1.5) -- (2,1.5) -- (1.3,1) -- (1.3,-1) -- (0.7,-1) -- (0.7,0);
\node at (1,0){$A_1$};

\draw (2,0) ellipse (0.2 and 1);
\node at (2,0){$X$};

\filldraw[rounded corners,fill=gray!30!white,opacity=80] (2.7,0) -- (2.7,1.3) --
(2,1.3) -- (2,1.8) -- (3.3,1.8) --  (3.3,-1) -- (2.7,-1) -- (2.7,0) ;
\node at (3,0){$A_2$};

\draw[path2=20,-] (3,0.5) -- (2,0.5);\draw[path2=20] (2,0.5) node[point]{} --
(1,0.5) -- (0,0.5);
\draw[path2=20,-] (2,-0.5) -- (1,-0.5);\draw[path2=20] (1,-0.5) node[point]{} --
(0,-0.5);

\end{tikzpicture}
\caption{Illustration for Lemma \ref{lemma_blocking-diff}}
\label{fig_lemma_blocking-diff}
\end{center}
\end{figure}
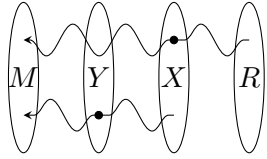

We formulate the following simple observation as a lemma.

\begin{lemma}\label{lemma_inclusion_minimal_blockers}
  Let $X$ be an inclusion-minimal set that blocks $R\to M$.
  Then for each $v\in X$ there is a path $P$ from $R$ to $M$
  such that $P \cap X = \{v\}$.
\end{lemma}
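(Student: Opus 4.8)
The plan is to exploit inclusion-minimality directly, one vertex at a time. Fix an arbitrary $v \in X$. Because $X$ is an \emph{inclusion-minimal} blocker of $R \to M$, the proper subset $X \setminus \{v\}$ must fail to block $R \to M$. So the first step is to read off which of the two defining clauses of ``blocks'' fails. The disjointness clause cannot be the culprit: since $X$ is a blocker we have $X \cap R = \emptyset$, and deleting a vertex only shrinks the set, so $(X \setminus \{v\}) \cap R = \emptyset$ as well. Hence the failure must come from the second clause, meaning there exists a path $P$ from $R$ to $M$ with $P \cap (X \setminus \{v\}) = \emptyset$.

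The second step is to bring the path $P$ back into contact with the full set $X$. Since $X$ itself \emph{does} block $R \to M$, the path $P$ must contain at least one vertex of $X$. But $P$ avoids every vertex of $X$ except possibly $v$, so $P \cap X \subseteq \{v\}$; combined with $P \cap X \neq \emptyset$ this forces $v \in P$ and $P \cap X = \{v\}$. This $P$ is exactly the path required by the statement, and since $v$ was arbitrary the conclusion holds for every $v \in X$.

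I do not expect any real obstacle here: the statement is essentially a direct unwinding of the definition of inclusion-minimality together with the two clauses of the blocking definition. The only point that deserves a moment's care is to verify that the failure of $X \setminus \{v\}$ to be a blocker is caused by the existence of a bad path rather than by a violation of the disjointness-with-$R$ condition; as noted above, this is immediate because removing a vertex from a set disjoint from $R$ keeps it disjoint from $R$. No use of the graph being acyclic is needed, which matches the remark preceding the lemma that $G$ may contain cycles.
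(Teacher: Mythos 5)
Your proof is correct, and it is the intended argument: the paper states this lemma as a ``simple observation'' without proof, and the direct unwinding of inclusion-minimality (noting that the disjointness clause cannot fail for a subset, so a witnessing path avoiding $X\setminus\{v\}$ must exist and must then meet $X$ exactly in $v$) is exactly what the authors rely on. Nothing is missing.
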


The following is our main technical lemma on blocking.

\begin{lemma}\label{lemma_min_extract}
Let $A$ and $B$ block $R\to M$. Then
\begin{enumerate}[(1)]
    \item\label{case_A_blocks_B} $A$ blocks $B\to M$ or
    \item\label{case_ex_B^*} there exists a set $B^* \subseteq A \cup B$ with
      $|B^*|<|B|$ which blocks $R\to B,M$, or
    \item\label{case_ex_A_new} there exists a set $A^*\subseteq A \cup B$ with
      $|A^*| \le |A|$ which blocks $A,B,R\to M$.
\end{enumerate}
\end{lemma}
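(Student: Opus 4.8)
The plan is to build the two candidate sets for cases (2) and (3) directly out of reachability and show that one of them always works; case (1) then comes along for free (in fact the argument below proves the stronger statement that (2) or (3) always holds). For a vertex set $X$ write $W_X := \Reach_{G-X}(R)$ for the vertices reachable from $R$ by an $X$-free path, and $Z_X := \{v : M\cap \Reach_{G-X}(v)\neq\emptyset\}$ for the vertices reaching $M$ by an $X$-free path. Since $A$ and $B$ block $R\to M$ we have $R\subseteq W_A\cap W_B$ and $M\subseteq Z_A\cap Z_B$, while $M$ is disjoint from $W_A\cup W_B$ and $R$ is disjoint from $Z_A\cup Z_B$. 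Put $S := W_A\cap W_B$ and $T := Z_A\cap Z_B$, and define
\[
N := \{v\notin S : \exists\,u\in S,\ (u,v)\in E(G)\},\qquad
F := \{v\notin T : \exists\,w\in T,\ (v,w)\in E(G)\},
\]
the out-boundary of $S$ and the in-boundary of $T$.

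First I would record the easy blocking properties. As $R\subseteq S$ while $B\cap S=M\cap S=\emptyset$, every path from $R$ to $B\cup M$ leaves $S$ and hence meets $N$ at its first vertex outside $S$; together with $N\cap R=\emptyset$ this shows $N$ blocks $R\to B,M$. Dually, $M\subseteq T$ while $A$, $B$, $R$ are all disjoint from $T$, so every path from $A\cup B\cup R$ to $M$ enters $T$ and meets $F$ at the last vertex before entering; this shows $F$ blocks $A,B,R\to M$ (reading this compound as the assertion that every such path meets $F$, since a derived blocker necessarily overlaps $A\cup B$). A short argument gives $N,F\subseteq A\cup B$: if $v\in N$ lies outside $A$, then extending the $A$-free path to its witnessing in-neighbour $u\in S\subseteq W_A$ puts $v\in W_A$, and $v\notin S$ then forces $v\in B$; symmetrically for $F$ using $Z$.

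The heart of the proof is the size bound $|N|+|F|\le |A|+|B|$, which I would derive from the single claim $N\cap F\subseteq A\cap B$. Suppose $v\in N\cap F$ but, say, $v\notin A$. From $v\in N$ and $v\notin A$ one obtains $v\in W_A$, an $A$-free path $R\to v$; from $v\in F$ and $v\notin A$ one obtains $v\in Z_A$, an $A$-free path $v\to M$. Concatenating gives an $A$-free path $R\to M$, contradicting that $A$ blocks $R\to M$; the case $v\notin B$ is symmetric and contradicts that $B$ blocks. Hence $N\cap F\subseteq A\cap B$, and since also $N\cup F\subseteq A\cup B$,
\[
|N|+|F| \;=\; |N\cup F|+|N\cap F| \;\le\; |A\cup B|+|A\cap B| \;=\; |A|+|B|.
\]

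The trichotomy now follows by splitting on $|N|$: if $|N|<|B|$ then $N$ is the set required in case (2); otherwise $|N|\ge|B|$, so the size bound yields $|F|\le |A|+|B|-|N|\le |A|$, and $F$ is the set required in case (3). I expect the size bound to be the main obstacle: a naive ``uncrossing'' of $A$ and $B$ into a single cut can drag in stray vertices lying strictly beyond the frontier, so the decisive design choice is to take the out-boundary of $S$ on the $R$-side but the in-boundary of $T$ on the $M$-side; the observation that these two cuts can only overlap inside $A\cap B$—the sole place where the hypothesis that $A$ and $B$ are blockers is used—is exactly what keeps the total cardinality at $|A|+|B|$.
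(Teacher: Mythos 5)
Your proof takes a genuinely different route from the paper's. The paper splits $B$ into $B_{free}$ (the part of $B$ that still reaches $M$ around $A$) and $B_{rest}$, extracts an inclusion-minimal $A'\subseteq A$ such that $A'\cup B_{rest}$ blocks $R\to B_{free}$, and then compares $|A'|$ with $|B_{free}|$, using Lemma~\ref{lemma_inclusion_minimal_blockers} to stitch witnessing paths together. You instead run the classical uncrossing argument for vertex cuts: form the source shore $S$ and the sink shore $T$ determined by $A$ and $B$ jointly, take their boundaries $N$ and $F$, and derive the submodularity bound $|N|+|F|\le|A|+|B|$ from $N\cup F\subseteq A\cup B$ and $N\cap F\subseteq A\cap B$. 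The counting step and the verification that $N$ blocks $R\to B,M$ are correct, and this organisation is cleaner than the paper's case analysis. (Your $F$ may intersect $A$ and $B$, which the letter of the blocking definition forbids for ``$F$ blocks $A\to M$''; but the paper's own $A^*=B_{free}\cup(A\setminus A')$ has the same feature, so I regard this as a shared looseness rather than a defect of your argument.)

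There is, however, a genuine gap on the $M$-side. The definition of blocking only requires a blocker to be disjoint from $R$, not from $M$, and in the intended applications blockers of $R\to M$ really can contain vertices of $M$ (if some $m\in M$ has an in-edge from $R$, every blocker must contain $m$, so $\mb(R,M)$ meets $M$). Your assertion $M\subseteq Z_A\cap Z_B$ fails precisely then: for $m\in M\cap A$ we have $\Reach_{G-A}(m)=\emptyset$, so $m\notin T$, and a path from $R$ to $m$ need never enter $T$, hence need never meet $F$. Concretely, on $r\to b\to m$ with $R=\{r\}$, $M=A=\{m\}$, $B=\{b\}$, you get $T=F=\emptyset$ and $|N|=|B|$, so your dichotomy outputs case (3) with $A^*=\emptyset$, which blocks nothing; the lemma survives there only through case (1), which you have discarded, so the advertised strengthening ``(2) or (3) always holds'' is false as stated. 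The damage is local and repairable --- for instance, add an apex vertex $m^*$ with an edge from every vertex of $M$ and prove the statement for $M'=\{m^*\}$, which is automatically disjoint from $A\cup B$; your argument $N,F\subseteq A\cup B$ then keeps $m^*$ out of both boundary sets and the resulting blockers transfer back to $M$ --- but as written the construction of $F$ does not establish case (3).
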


\begin{proof}

We partition the set $B$ into elements $B_{free}$ from which $M$
is reachable via paths which avoid $A$ and the rest,
called $B_{rest}$, so $A$ blocks $B_{rest}\to M$. Moreover, we let
$A'$ be any \emph{inclusion-minimal} subset of $A$ such that
the set $A' \cup B_{rest}$ blocks $R\to B_{free}$. Observe that if 
$A' = \emptyset$, then either $B_{free} = \emptyset$, in which case $A$
already blocks $B\to M$ and we are done by \cref{case_A_blocks_B}, 
or $B_{rest}$ blocks $R\to B_{free} \neq \emptyset$ and thus $R\to M$, as
$B$ blocks $R\to M$, 
in which case $B^* = B_{rest}$ is the set we require in
\cref{case_ex_B^*}. We will now
consider the case when $A' \neq \emptyset$. This situation is
depicted in Figure \ref{fig_min_extract}. First observe a simple fact.

\begin{claim}\label{claim_B_rest_and_A'}
    For every $a\in A'$ there is a $B\cup (A'\setminus \{a\})$-free
    path from $R$ to $a$.
\end{claim}
\begin{proof}
As $B_{free} \cup A'$ blocks $B_{free}\to M$, by
Lemma~\ref{lemma_inclusion_minimal_blockers}, 
there is a path $P$ from $B_{free}$ to $M$ with $P\cap (B_{rest}\cup A') =
\{a\}$.
The suffix of $P$ from the last occurrence of $a$ is a path with the desired
properties: it never visits $B_{rest}\cup A'$ and thus also never visits
$B_{free}$, as  $B_{rest}\cup A'$ blocks $B_{free}\to M$.
\renewcommand{\qedsymbol}{$\dashv$}
\end{proof}

We will consider two cases. 

\begin{figure}
\begin{center}
\begin{tikzpicture}
\draw (-5, 3) -- (5, 3);
\node (c) at (6, 3) {$M$};
\draw (-5, -3) -- (5, -3);
\node (c) at (6, -3) {$R$};

\node (b0) at (-3, -2) {$\circ$};
\node (b1) at (-2, -2) {$\circ$};
\node (b2) at (-1, -2) {$\circ$};
\draw (b1) ellipse (1.5 and 0.5);
\node (bfree) at (-4, -2) {$B_{rest}$};


\draw[brace] (3.7,2.3) -- (3.7,1.7);
\node (bfree) at (4.3, 2) {$B_{free}$};
\node (b3) at (1, 2) {$\circ$};
\node (b4) at (2, 2) {$\circ$};
\node (b5) at (3, 2) {$\circ$};

\node (a0) at (-3, 0) {$\times$};
\node (a1) at (-2, 0) {$\times$};
\node (a2) at (-1, 0) {$\times$};
\node (a3) at (1, 0) {$\times$};
\node (a4) at (2, 0) {$\times$};

\draw (1.5, 0) ellipse (1 and 0.5);
\node (ab) at (3, 0) {$A'$};
\draw[brace] (-3.5,-0.5) -- (-3.5,0.5);
\node (A) at (-4, 0) {$A$};
\node (A_star) at (-0.2, 2) {$A^*$};

\draw[-slim] (a0) -- (-3, 3);
\draw[-slim] (a1) -- (-2, 3);
\draw[-slim] (a2) -- (-1, 3);

\draw[-slim] (-3, -3) -- (b0);
\draw[-slim] (-2, -3) -- (b1);
\draw[-slim] (-1, -3) -- (b2);

\draw[-slim] (b0) -- (a0);
\draw[-slim] (b1) -- (a1);
\draw[-slim] (b2) -- (a2);

\draw[-slim] (a3) -- (b4);
\draw[-slim] (b4) -- (1, 3);

\draw[-slim] (1, -3) -- (a3);
\draw[-slim] (2, -3) -- (a4);

\draw[-slim] (a3) -- (b3);
\draw[-slim] (a4) -- (b5);
\draw[-slim] (b5) -- (2, 3);

\draw[rounded corners] (3.5,2.35) -- (5,2.35) -- (5,1.65) -- (0.3,1.65) --
(-0.3,-0.35)
   -- (-3.3,-0.35) -- (-3.3,0.35) -- (-0.4,0.35) -- (0.2,2.35) -- (4.5,2.35);

\end{tikzpicture}
\end{center}
\caption{Situation in the proof of Lemma \ref{lemma_min_extract}.}
\label{fig_min_extract}
\end{figure}
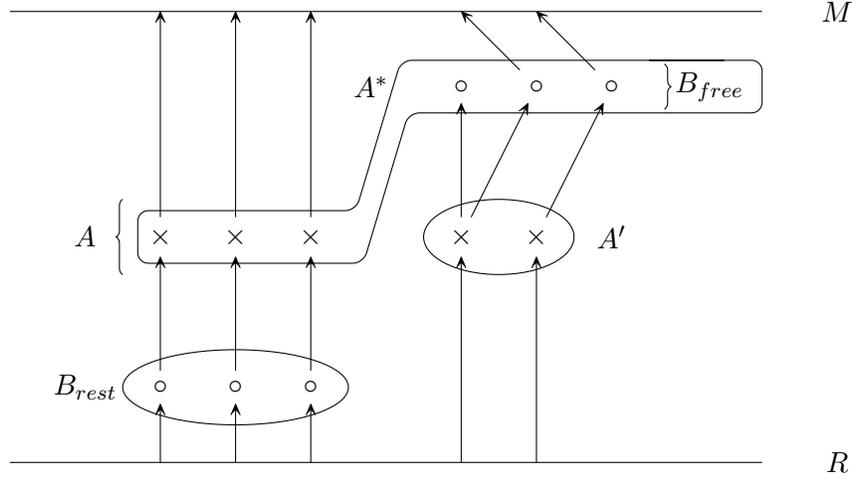

\emph{Case (i):} $|A'| < |B_{free}|$.\\
Define $B^* = A' \cup B_{rest}$ -- it is smaller than $B$ and blocks $B\to M$,
which is \cref{case_ex_B^*}.

\emph{Case (ii):} $|A'| \geq |B_{free}|$.\\
Define $A^* = B_{free} \cup (A \setminus A')$. We claim that $A^*$ blocks 
$R\to M$. Assume to the contrary that there is a path $P$ from $R$ to $M$ which
avoids $A^*$. Since it avoids $B_{free}$ and $B$ blocks $R\to M$,
this path must go through $B_{rest}$. But, since $A$ blocks $B_{rest}\to M$,
it must visit  $A$ after each visit of $B_{rest}$. Let $a\in A$ be the last such
vertex. Since the path omits $A^*$, we have $a \in A'$.
By Claim~\ref{claim_B_rest_and_A'}, there is a $B\cup (A'\setminus \{a\})$-free
path $P'$ from $R$ to $a$.
Concatenating the $P'$ and the suffix of $P$ from $a$ we get
a $B$-free path from $R$ to $M$, which
contradicts the fact that $B$ blocks $R\to M$. 
Thus $A^*$ blocks $M$ from $R$.

Now we show that $A^*$ blocks $A,B\to M$.
First, $A^*$ blocks $A'\to M$, otherwise there is a $B_{free}$-free path
$P_0$ from $A'$ to $M$. Let $a'$ be the last vertex from $A'$ on $P_0$.
According to Claim~\ref{claim_B_rest_and_A'}, there is a $B$-free path $P_1$
from $R$ to $a'$. The concatenation of $P_1$ and the suffix of $P_0$ from $a'$
is $B$-free path from $M$ to $R$, which contradicts the assumption that
$B$ blocks $R\to M$. It follows that $A^*$ blocks $R\to A$. 

To see that $A^*$ also blocks $R\to B$,
note that $A$ blocks $B_{rest}\to M$ and $B_{free}\subseteq A^*$.
Finally, $|A^*| \le |A|$ since
$B_{free}$ is disjoint with $A$ by its definition.
 \end{proof}

\subsubsection*{A preorder on blocking sets}

The blocking relation induces a partial preorder on sets blocking $R \to M$.

\begin{definition} \label{def-blocking-order}
Let $A$ and $B$ block $R \to M$ in $G$.
We write $A \prec_M^R B$ if either $|A| < |B|$,
or $|A| = |B|$ and $A$ blocks $B \to M$.
\end{definition}

Intuitively, the second condition for $A \prec_M^R B$ means that~$A$
blocks from~$R$ as few vertices in addition to~$M$ as possible.
From Lemma~\ref{lemma_min_extract} we immediately obtain the following

\begin{corollary}\label{cor_excluding_A*}
  If $A$ is $\prec_M^R$-minimal and $B$ blocks $R\to M$, then
\begin{enumerate}[(1)]
\item\label{case_A_blocks_B} $A$ blocks $B \to M$ or
\item\label{case_ex_B^*} there exists a set $B^*$ with $|B^*|<|B|$ which
  blocks $R\to B$ and $R\to M$.
\end{enumerate}
\end{corollary}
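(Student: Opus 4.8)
The plan is to read the corollary off Lemma~\ref{lemma_min_extract} applied to the pair $(A,B)$, using the minimality of $A$ only to discard the lemma's third alternative. Since $\prec_M^R$ is defined on sets blocking $R\to M$, a $\prec_M^R$-minimal $A$ blocks $R\to M$, and $B$ blocks $R\to M$ by hypothesis, so Lemma~\ref{lemma_min_extract} applies and one of its three conclusions holds.

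First I would note that the lemma's first two conclusions are already the two cases of the corollary. The conclusion ``$A$ blocks $B\to M$'' is verbatim the first case. For the conclusion producing $B^*\subseteq A\cup B$ with $|B^*|<|B|$ that blocks $R\to B,M$, I would observe that blocking $R$ from $B\cup M$ is the same as simultaneously blocking $R\to B$ and $R\to M$: any path from $R$ to $B$ (or to $M$) is a path from $R$ to $B\cup M$, and conversely, so this is exactly the second case of the corollary. Thus no work is needed in these two cases.

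The entire content is to eliminate the third conclusion. Suppose it holds: there is $A^*\subseteq A\cup B$ with $|A^*|\le|A|$ that blocks $A,B,R\to M$. In particular $A^*$ blocks $R\to M$, so $A^*$ is a legitimate blocker and can be compared with $A$ under $\prec_M^R$. I would then split on cardinality. If $|A^*|<|A|$, the first clause of Definition~\ref{def-blocking-order} gives $A^*\prec_M^R A$. If $|A^*|=|A|$, I use that $A^*$ also blocks $A\to M$ (part of ``blocks $A,B,R\to M$''), which is precisely the second clause of the definition, again giving $A^*\prec_M^R A$. Either way $A^*\prec_M^R A$, contradicting the $\prec_M^R$-minimality of $A$; hence the third conclusion cannot occur, leaving only the two cases of the corollary.

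The step I expect to need the most care is the equality-cardinality subcase, where the produced relation must genuinely contradict minimality rather than be a vacuous self-comparison. Here the blocking condition built into ``$A^*$ blocks $A\to M$'' forces $A^*\cap A=\emptyset$, so $A^*\neq A$ whenever $A\neq\emptyset$, and the contradiction is real. In the degenerate case $A=\emptyset$ one has $A^*=\emptyset$, and ``$A^*$ blocks $B\to M$'' then says $\emptyset$ blocks $B\to M$, which is already the first case of the corollary; so that boundary situation collapses harmlessly. This is the one place where the output of Lemma~\ref{lemma_min_extract} has to be matched exactly against the two-clause definition of $\prec_M^R$, and it is where I would be most careful.
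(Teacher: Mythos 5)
Your proposal is correct and follows essentially the same route as the paper: apply Lemma~\ref{lemma_min_extract} to $A$ and $B$, observe that its first two conclusions are the corollary's two cases, and rule out the third conclusion because $A^*$ (with $|A^*|\le|A|$, blocking $A\to M$) would witness $A^*\prec_M^R A$ and contradict minimality. Your extra care about the cardinality split and the degenerate case $A=\emptyset$ is a harmless refinement of the argument the paper leaves implicit.
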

\begin{proof}
  Assume that the case $(3)$ from Lemma~\ref{lemma_min_extract} holds.
  Let $A^*$ be a set with $|A^*|\le |A|$ that blocks $A\to M$,
  $B\to M$, and $R\to M$.
  In particular, $A^*$ blocks $A\to M$, so $A$ is not $\prec_M^R$-minimal.
\end{proof}

%

From Lemma~\ref{transfer-up} we obtain that $\prec_M^R$ is transitive,
so it is a preorder.
Moreover, Corollary~\ref{cor_excluding_A*} allows us to show the following
lemma.

\begin{lemma}\label{lemma_maximal-element}
There is a unique minimal element with respect to $\prec_M^R$.
\end{lemma}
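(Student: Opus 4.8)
The plan is to establish existence and uniqueness of a $\prec_M^R$-minimal element separately, using the finiteness of $G$ together with the structural dichotomy provided by Corollary~\ref{cor_excluding_A*}. Since there are only finitely many subsets of $V(G)$ and each blocker has a finite size, the set of sizes of blockers of $R\to M$ is a nonempty subset of $\mathbb{N}$ and hence has a least element $k$. Among all blockers of size exactly $k$ we want to single out one that is $\prec_M^R$-minimal; the preorder $\prec_M^R$ is already known to be transitive (from Lemma~\ref{transfer-up}), so minimal elements behave reasonably, but I must rule out the possibility of an infinite descending chain or of several incomparable minimal candidates.

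\medskip

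First I would argue \emph{existence}. Pick a blocker $A$ of minimum cardinality $k$. I claim $A$ can be refined to a $\prec_M^R$-minimal blocker. Suppose $A$ is not already minimal, so some blocker $B$ satisfies $B\prec_M^R A$. Since $A$ has minimum size, we cannot have $|B|<|A|$, so $|B|=|A|=k$ and $B$ blocks $A\to M$. I would then apply Corollary~\ref{cor_excluding_A*} with the roles set so that the minimum-size blocker plays the part of the candidate and iterate: because all competing blockers have size exactly $k$ and the ``$A$ blocks $B\to M$'' relation among fixed-size sets is a partial order on a finite set (antisymmetry following from Lemma~\ref{transfer-up} combined with the size constraint), a descending chain must terminate. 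This yields a concrete minimal element.

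\medskip

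Next I would prove \emph{uniqueness}, which is where Corollary~\ref{cor_excluding_A*} does the real work. Suppose $A$ and $A'$ are both $\prec_M^R$-minimal. Apply Corollary~\ref{cor_excluding_A*} to the minimal set $A$ and the blocker $B:=A'$. Case~\ref{case_ex_B^*} would produce a blocker $B^*$ with $|B^*|<|A'|$, contradicting that $A'$ is minimal (since a strictly smaller blocker is $\prec_M^R$-below it). Hence case~\ref{case_A_blocks_B} must hold: $A$ blocks $A'\to M$. By symmetry, $A'$ blocks $A\to M$. In particular $A$ and $A'$ have the same cardinality: if, say, $|A|<|A'|$ then $A\prec_M^R A'$ strictly, again contradicting minimality of $A'$. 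With equal sizes and each blocking the other toward $M$, the two are $\prec_M^R$-equivalent; to upgrade ``equivalent'' to ``equal'' I would invoke inclusion-minimality via Lemma~\ref{lemma_inclusion_minimal_blockers}, showing that a $\prec_M^R$-minimal blocker is forced to be inclusion-minimal and that two inclusion-minimal sets each blocking the other must coincide.

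\medskip

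The main obstacle I anticipate is the uniqueness step, specifically the passage from ``$A$ and $A'$ are $\prec_M^R$-equivalent'' to ``$A=A'$.'' The preorder as defined is genuinely only a preorder, so a priori distinct sets could be mutually blocking; the key is that Corollary~\ref{cor_excluding_A*} rules out the size-decreasing case for minimal elements, pinning both sets to the common minimum size, after which an exchange argument using Lemma~\ref{lemma_inclusion_minimal_blockers} (each vertex of a minimal blocker lies on a path meeting the blocker only there) must force vertex-by-vertex agreement. Making that exchange argument airtight — in effect showing that at the minimum size the mutual-blocking relation collapses to equality — is the delicate part, and I would organise it by contradiction: a vertex $v\in A\setminus A'$ would, by Lemma~\ref{lemma_inclusion_minimal_blockers}, sit on an $R$-to-$M$ path meeting $A$ only at $v$, which $A'$ must also intercept elsewhere, and tracking this interception is what ultimately contradicts either minimality or the size equality.
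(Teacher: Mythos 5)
Your argument tracks the paper's for most of its length: the paper also assumes two distinct $\prec_M^R$-minimal blockers, notes they must have equal size, and applies Lemma~\ref{lemma_min_extract} (of which Corollary~\ref{cor_excluding_A*} is the packaged form you use), observing that each case exhibits a set strictly $\prec_M^R$-below one of the two, a contradiction. The problem is your final step. Once you have reached ``$|A|=|A'|$ and $A$ blocks $A'\to M$'', you are finished: by Definition~\ref{def-blocking-order} this is \emph{literally} the statement $A\prec_M^R A'$, which contradicts the minimality of $A'$ (the paper's proof reads ``minimal'' as ``nothing is $\prec_M^R$-below it'', and closes exactly here). There is no residual ``$\prec_M^R$-equivalence'' to upgrade to equality, and no exchange argument is needed.

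Worse, the exchange argument you plan is not just unnecessary but unworkable as stated. Its target claim --- ``two inclusion-minimal sets each blocking the other must coincide'' --- is false on its face: if $A$ blocks $A'\to M$ then $A\cap A'=\emptyset$ by the definition of blocking, so two mutually blocking sets are \emph{disjoint} and can coincide only if both are empty. The correct conclusion of the configuration you reach is not that $A=A'$ but that the configuration cannot occur at all (e.g.\ take an $R$-to-$M$ path meeting $A$ only in one vertex, as in Lemma~\ref{lemma_inclusion_minimal_blockers}, and alternate between the two blockers along it to get a contradiction --- or, much more simply, invoke the definition of $\prec_M^R$ as above). You correctly flag this step as the delicate part, but the difficulty you anticipate stems from a misreading of what remains to be shown. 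A smaller remark: your existence half is extra relative to the paper (which proves only uniqueness); it is fine in spirit, though ``antisymmetry following from Lemma~\ref{transfer-up} combined with the size constraint'' should really be ``irreflexivity plus transitivity'': for nonempty $X$, $X$ cannot block $X\to M$ since $X\cap X\neq\emptyset$, so the relation has no cycles on nonempty blockers and a finite descending chain terminates.
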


\begin{proof}
  Assume that there exist two distinct $\prec_M^R$-minimal sets
  $A$ and $B$ that block $R\to M$, then neither $A \prec_M^R B$
  nor $B \prec_M^R A$. That means, $|A|=|B|$.
  Consider the cases given by Lemma~\ref{lemma_min_extract}.
  In \cref{case_A_blocks_B}, $A$ blocks $B \to M$, so $A\prec^R_M B$
  and $B$ is not minimal. In \cref{case_ex_B^*}, $B^*\prec^R_M B$,
  so $B$ is not minimal as well. In \cref{case_ex_A_new},
  $A^*\prec^R_M A$, so $A$ is not minimal. 
\end{proof}

We will denote the minimal element with respect to $\prec_M^R$, the
\emph{minimal blocker} of $R \to M$, by $\mb(R, M)$.

During the game, it is important to us how minimal blocking sets
behave when~$R$ becomes smaller or~$M$ becomes bigger, especially in
comparison to possible previous blocking sets. The next lemma allows
to compare a minimal set to a possibly non-minimal one.

\begin{lemma}\label{lemma_making_R_smaller}
  Let~$A$ be $\mb(R,M)$, let $R'\subseteq \Reach_{G-A}(R)$ and, for
  the new~$R'$, let~$A'$ be $\mb(R',M)$. Then~$A'$ blocks $R'\to A$.
\end{lemma}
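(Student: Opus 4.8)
My plan is to prove the claim by contradiction, exploiting that $A=\mb(R,M)$ is the \emph{unique} $\prec_M^R$-minimal blocker of $R\to M$ (Lemma~\ref{lemma_maximal-element}): assuming that $A'$ fails to block $R'\to A$, I will manufacture from $A'$ a blocker of $R\to M$ that beats $A$ in the preorder $\prec_M^R$, which is the desired contradiction.

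First I would dispose of the easy half and fix the geometry. Since $A'$ blocks $R'\to M$ we have $A'\cap R'=\emptyset$. Next, $A$ itself blocks $R'\to M$: indeed $R'\subseteq\Reach_{G-A}(R)$ gives $R'\cap A=\emptyset$, and any path from $R'$ to $M$, prolonged backwards by an $A$-free path from $R$ to its start vertex, becomes a path from $R$ to $M$ and hence meets $A$; as the backward part is $A$-free, the original path must meet $A$.

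Now assume, for contradiction, that $A'$ does not block $R'\to A$, and write $D':=\Reach_{G-A'}(R')$. Then there is an overshoot vertex $a\in A\cap D'$; put $W:=A\cap D'\neq\emptyset$. Since no edge leaves $D'$ except into $A'$, the set $A'$ blocks $W\to M$, and $A'\cap D'=\emptyset$ gives $A'\cap W=\emptyset$. The key structural observation is that the $A'$-vertices doing this blocking lie strictly \emph{ahead} of $A$: applying Lemma~\ref{lemma_inclusion_minimal_blockers} to $a\in A$ yields a path $Q$ from $R$ to $M$ with $Q\cap A=\{a\}$, whose suffix from $a$ leaves $D'$ and so meets $A'$ in some vertex $u$; because $Q$ meets $A$ only in $a$, we obtain $u\in A'\setminus A$ together with an $A$-free path from $u$ to $M$. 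In particular, $A$ does \emph{not} block $A'\to M$.

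The contradiction is then obtained by \emph{sliding the cut across} $W$: one replaces the overshoot part $W$ of $A$ by the $A'$-vertices through which $W$ escapes to $M$, keeping the remainder of $A$, thereby producing a blocker $B$ of $R\to M$ positioned closer to $M$ than $A$. Feeding $A$ and $B$ into Lemma~\ref{lemma_min_extract} (equivalently, applying Corollary~\ref{cor_excluding_A*} with the $\prec_M^R$-minimal set $A$), the first alternative --- that $A$ blocks $B\to M$ --- is ruled out by the observation above (the vertex $u\in A'\setminus A$ belongs to $B$ and reaches $M$ avoiding $A$), while the re-extraction alternative is excluded because $A$ is $\prec_M^R$-minimal. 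What survives is a blocker of $R\to M$ that strictly dominates $A$ in $\prec_M^R$, contradicting minimality of $A=\mb(R,M)$; hence $A'$ blocks $R'\to A$.

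I expect the main obstacle to be exactly the quantitative step hidden in the cut-sliding: the replacement must not be larger than $A$ and must yield a genuinely \emph{comparable} (disjoint) blocker, so that it strictly beats $A$ in $\prec_M^R$ rather than merely tying or overlapping it. Controlling this count --- that escaping $W$ to $M$ through $A'$ costs no more vertices than $|W|$, despite possible overlaps between $A$ and $A'$ --- is precisely what the careful $B_{free}/B_{rest}$ analysis of Lemma~\ref{lemma_min_extract} is designed to deliver, so the cleanest route is to delegate the counting to that lemma rather than re-run a Menger-type argument by hand.
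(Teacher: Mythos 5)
Your preparatory observations are correct and in fact overlap with ingredients of the paper's own argument: that $A$ blocks $R'\to M$, that $A'$ blocks $W\to M$ for $W=A\cap\Reach_{G-A'}(R')$, and the use of Lemma~\ref{lemma_inclusion_minimal_blockers} to extract a vertex $u\in A'\setminus A$ with an $A$-free path to $M$ are all fine. The gap is exactly the counting obstacle you flag at the end and then wave away. Write your slid blocker as $B=(A\setminus W)\cup F$ with $F\subseteq A'$ the escape vertices of $W$. After you rule out alternative~(1) of Corollary~\ref{cor_excluding_A*} applied to the minimal $A$ and to $B$, what survives is alternative~(2): a set $B^*$ with $|B^*|<|B|$ blocking $R\to B$ and $R\to M$. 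This contradicts the $\prec_M^R$-minimality of $A$ only if $|B^*|<|A|$, hence only if $|B|\le|A|$, hence only if $|F|\le|W|$. The corollary cannot supply that bound: its case-(2) estimate is relative to the cardinality of the \emph{second input}, which here is your hand-built $B$, not $A$. And $|F|\le|W|$ is not automatic -- a single vertex of $W$ can have many internally disjoint escape routes to $M$ crossing $A'$ in distinct vertices, so the cheapest way to cut $W$ off from $M$ \emph{beyond} $W$ may cost strictly more than $|W|$; neither the minimality of $A$ nor that of $A'$ obviously prevents this. So the deduction does not close, and ``delegating the counting to Lemma~\ref{lemma_min_extract}'' in the instantiation you chose does not perform the delegation.

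The paper avoids the problem by instantiating the extraction lemma differently: it compares $B:=A\cap\Reach_G(R')$ and $A'$ directly as two competing blockers of $R'\to M$, rather than manufacturing a new blocker of $R\to M$. In that instantiation the cardinalities line up with the available minimality hypotheses: the ``smaller $B^*$'' case is killed by the $\prec_M^{R'}$-minimality of $A'$, the ``$A^*$'' case is killed by substituting the extracted set back into $A$ and contradicting the $\prec_M^R$-minimality of $A$, and the surviving case, ``$B$ blocks $A'\to M$,'' is then converted into the desired conclusion ``$A'$ blocks $R'\to B$'' (hence $R'\to A$) by a direct path-concatenation argument of the same flavour as your observation about $u$. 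If you want to keep your contradiction architecture, you must prove $|F|\le|W|$ yourself, which amounts to redoing the $B_{free}/B_{rest}$ analysis in your configuration rather than citing it.
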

\begin{proof}
Let $B = \Reach_{G}(R') \cap A$. It suffices to prove that $A'$ blocks
$R' \to B$. Apply Lemma \ref{lemma_min_extract} with
Corollary~\ref{cor_excluding_A*} to $B$ (as $A$), $A'$ (as $B$),
$R'$ (as $R$) and $M$ (as $M$). Consider \cref{case_A_blocks_B}.
Assume that there is a path $P$ from $R'$ to a vertex $b\in B$ that
avoids $A'$. By Lemma~\ref{lemma_inclusion_minimal_blockers} there is
an $A\setminus \{b\}$-free path $P'$ from $b$ to $M$. Concatenating $P$ with
the suffix $P'$ from the last occurrence of $b$ in that path we get
a path from $R'$ to $M$. As $A'$ blocks $R'\to M$, this path goes through $A'$.
As $P$ does not, there is some  $a'\in P'\cap A'$. As $B$ blocks
$A'\to M$, $P'$ visits $B$ after $a'$. As $P'\cap B=\{b\}$,
$P'$ visits $b$ after $a'$, but by definition of $P'$,
it contains~$b$ only as the first vertex, which is not $a'\in A'$,
because $a'\in P$ and $P$ is $A'$-free.

In Case~\ref{case_ex_A_new}, some set $B^*$ with $|B^*|<|B|$
blocks $A',B,R'\to M$, but $A$ is minimal, so if $B^*\neq B$,
we can replace $B$ by $B^*$ in $A$ to get a blocker $R'\to M$
(because $B^*$ blocks $B\to M$) with $B^*\cup(A\setminus B)
\prec_M^{R'} A$. This is impossible, since~$A$ is minimal, so $B^*=B$.
Thus $B$ blocks $A'\to M$ and we have Case~\ref{case_A_blocks_B}. 
\end{proof}

\begin{figure}
 \begin{tikzpicture}
 
\draw (-5,7) -- (5,7);
\node at (6,7){$M$};

\draw (-4.5,5) -- (3,5);
\node at (4,5){$A$};

\draw (-5,0) -- (5,0);
\node at (6,0){$R$};

\draw (-4,1) -- (0,1);
\node at (0.5,1){$R'$};

\draw [-slim](-4.5,0) -- (-3.8,1);
\draw [-slim](0.5,0) -- (-0.2,1);

\node (B_marker) at (1,5){$|$};
\node at (-2.5,5.3){$B$};


\draw  (-2.5,1) to [-slim,bend right=20] (-4,5);
\draw  (-1.5,1) to [-slim,bend left=20] (0,5);

\draw (-3,3) -- (-1,3);
\node at (-3.5,3){$A'$};

\node[point] (b) at (0.5,5){};
\node at (0.7,5.3){$b$};
\draw [-slim,bend right] (-1,1) to (b);
\node at (0.1,3){$P$};

\draw [-slim,bend left=20] (b) to (0.5,7);
\node at (0,6){$P'$};

\draw (-3.5,6) -- (-1,6);
\node at (-4,6){$B^*$};

\end{tikzpicture}
\caption{Illustration to Lemma~\ref{lemma_making_R_smaller}. }
\end{figure}

A similar result is obtained for the case when~$M$ grows.

\begin{lemma}\label{lemma_making_M_larger}
  Let $A$ be $\mb(R,M)$, let $M' \supseteq M$ and, for the new~$M'$,
  let~$A'$ be $\mb(R,M')$.  Then~$A'$ blocks $R\to A$ and~$A$ blocks
  $A'\to M$ .
\end{lemma}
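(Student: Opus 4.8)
The plan is to prove the two blocking claims separately, relying on the extraction machinery of \Cref{lemma_min_extract} and its \Cref{cor_excluding_A*} together with the transfer lemmas, in a way that mirrors the companion argument for \Cref{lemma_making_R_smaller}. The one fact I would record at the outset is that $A'$ blocks $R\to M$: since $M\subseteq M'$, every path from $R$ to $M$ is in particular a path from $R$ to $M'$ and hence meets $A'=\mb(R,M')$, and moreover $A'\cap R=\emptyset$. Thus both $A=\mb(R,M)$ and $A'$ block $R\to M$, which is exactly the setting in which the extraction corollary applies.

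I would establish the second conclusion, that $A$ blocks $A'\to M$, first. Applying \Cref{cor_excluding_A*} with the $\prec_M^R$-minimal set $A$ and the blocker $A'$ of $R\to M$ yields one of two cases. In the first case $A$ blocks $A'\to M$, which is what we want. I would rule out the second case, in which some $B^*$ with $|B^*|<|A'|$ blocks $R\to A'$ and $R\to M$: since $A'$ blocks $R\to M'$, \Cref{lemma_transfer-down} (applied with $X=A'$ and target $M'$, $Y=B^*$) converts $B^*$ into a blocker of $R\to M'$ that is strictly smaller than $A'$, contradicting the $\prec_{M'}^R$-minimality of $A'=\mb(R,M')$. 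Hence the first case must hold.

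For the first conclusion, that $A'$ blocks $R\to A$, I would argue by contradiction from an $A'$-free path $P$ from $R$ to some $a\in A$. Because $\mb(R,M)$ is inclusion-minimal, \Cref{lemma_inclusion_minimal_blockers} supplies a path $Q$ from $R$ to $M$ with $Q\cap A=\{a\}$; writing $Q_a$ for the suffix of $Q$ from $a$ to $M$, the walk $Q_a$ meets $A$ only in its initial vertex $a$. The concatenation of $P$ and $Q_a$ is a walk from $R$ to $M$ and therefore meets the blocker $A'$; as $P$ is $A'$-free and $a\notin A'$, the meeting vertex $a'$ lies strictly inside $Q_a$ after $a$, so that $a'\in A'\setminus A$ and the suffix of $Q_a$ from $a'$ to $M$ is $A$-free. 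This is a path from $A'$ to $M$ avoiding $A$, contradicting the statement $A$ blocks $A'\to M$ proved in the previous step.

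The main obstacle is the middle step, extracting $A$ blocks $A'\to M$ from minimality. The extraction corollary always leaves open a strictly smaller blocker $B^*$ of $R\to A'$, and the only leverage against it is that $A'$ was chosen minimal for the \emph{enlarged} target $M'$ rather than $M$; getting the transfer lemma to promote $B^*$ to a blocker of $R\to M'$ of forbidden size is the crux. Once that claim is in hand, the path surgery delivering $A'$ blocks $R\to A$ is routine.
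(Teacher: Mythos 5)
Your proposal is correct and follows the same overall architecture as the paper's proof: establish ``$A$ blocks $A'\to M$'' first via the extraction machinery and the $\prec^R_{M'}$-minimality of $A'$, then derive ``$A'$ blocks $R\to A$'' by path surgery using Lemma~\ref{lemma_inclusion_minimal_blockers}. The one place where you genuinely diverge is the middle step. The paper does not apply Corollary~\ref{cor_excluding_A*} to $A$ and $A'$ directly; it first passes to the auxiliary set $B=\{a\in A'\mid M\cap\Reach_G(a)\neq\emptyset\}$, applies the corollary to $A$ and $B$, and rules out the bad case by exhibiting $B^*\cup(A'\setminus B)$ as a blocker of $R\to M'$ strictly smaller than $A'$. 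You instead apply the corollary to $A$ and $A'$ and promote the hypothetical $B^*$ to a blocker of $R\to M'$ via Lemma~\ref{lemma_transfer-down} ($A'$ blocks $R\to M'$ and $B^*$ blocks $R\to A'$, hence $B^*$ blocks $R\to M'$ with $|B^*|<|A'|$). This is cleaner: it sidesteps having to check that $B$ blocks $R\to M$ before invoking the corollary, and it avoids the set-arithmetic on $B^*\cup(A'\setminus B)$. The paper's set $B$ earns its keep only in the second half, where it is used to argue that the suffix $S$ of the minimality path avoids all of $A'$ (any $A'$-vertex on a path to $M$ lies in $B$); you do not need this because your final contradiction is against ``$A$ blocks $A'\to M$'' rather than against ``$A'$ blocks $R\to M$'', and the suffix you produce only needs to avoid $A$, which is immediate from $Q\cap A=\{a\}$. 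Both routes are sound; yours is marginally more economical.
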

\begin{proof}
Consider $B = \{a\in A'  |  M\cap \Reach_{G}(a)\neq \emptyset\}$ 
(\ie $B$ is the part of $A'$ from which $M$ is reachable) and apply
Lemma \ref{lemma_min_extract} with Corollary~\ref{cor_excluding_A*}
to $A$ and $B$. Case~\ref{case_ex_B^*} is impossible (replace $B$ by $B^*$
in $A'$, then $B^*\cup (A'\setminus B)$ blocks $R\to M'$ and
$B^*\cup (A'\setminus B)\prec^R_{M'} A'$, but $A'$ is $\prec^R_{M'}$-minimal),
so we have
Case~\ref{case_A_blocks_B}, \ie $A$ blocks $B\to M$.
Then $A$ blocks $A'\to M$, which shows the second statement.

Assume that there exists a path $P'$ from $R$ to $A$ that avoids $B$
(and thus $A'$). Let $a$
be the last vertex on this path.  By
Lemma~\ref{lemma_inclusion_minimal_blockers}, there is a path $P$ from
$R$ to $M$ whose intersection with $A$ is $\{a\}$. Consider the suffix
$S$ of $P$ from the last appearance of $a$. It does not visit $B$, as
$A$ blocks $A'\to M$ and thus also $B\to M$, so after each visit of $B$, $S$ would visit $A$, but
$P$ intersects $A$ only in $a$ and $S$ does not visit $a$ by
definition.  Concatenating $P'$ with $S$ we get a path from $R$ to $M$
that avoids $A'$, which is impossible, as $A'$ blocks $R\to M$. 
\end{proof}

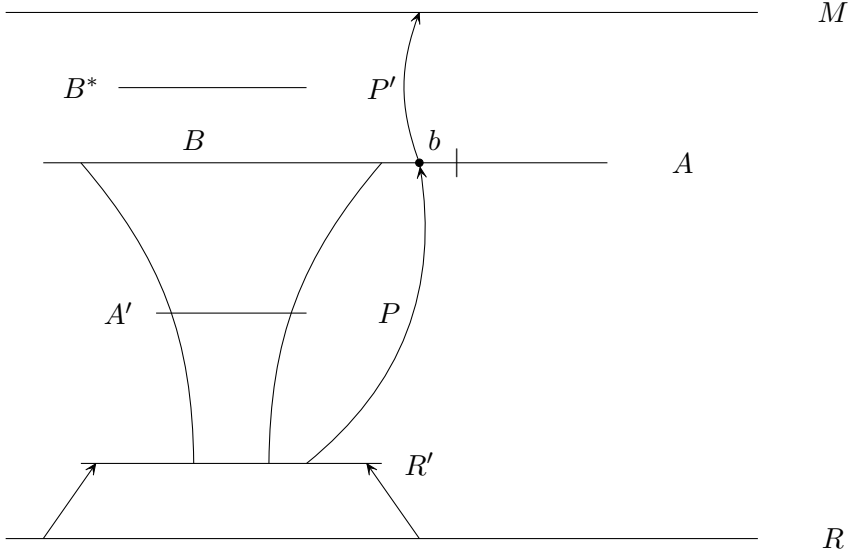
\begin{figure}
 \begin{tikzpicture}
 
\draw (-5,0) -- (5,0);
\node at (6,0) {$R$};

\draw (-4,4) -- (4,4);
\node at (5,4) {$A'$};

\draw (-5,7) -- (5,7);
\node at (6,7) {$M'$};

\node at (2,4){$|$};
\node at (-1,4.3) {$B$};

\draw (-3,3) -- (1,3);
\node at (-1,3.3) {$B^*$};

\node at (3,7) {$|$};
\draw (-2,4) to [bend left=20,-slim] (-1,7);
\draw (1,4) to [bend left=20,-slim] (2,7);

\draw (-4,5) -- (2,5);
\node at (2.5,5){$A$};

\node[point] (a) at (-3.5,5){};
\node at (-3.3,4.8){$a$};

\draw (-4,0) to [bend left=45,-slim] (a);
\node at (-4.2,2){$P'$};

\draw (a) to [bend right,-slim] (-2.5,7);
\node at (-3,6){$S$};

\end{tikzpicture}
\caption{Illustration to Lemma~\ref{lemma_making_M_larger}.}
\label{fig_cor_making_M_larger}
\end{figure}


    \subsection{Minimally blocking strategies}\label{sec_min_strat}

In this section we concentrate on a specific kind of strategies for the cops
in the shy weakly monotone game, namely ones that move chasers in the same
way, but whose guarding moves are placing the cops on the minimal
blocking set.

Let $\sigma$ be a strategy for the cops in the weakly monotone shy
game on~$G$.  We define the \emph{minimally blocking strategy}
$\sigma_\mb$, derived from $\sigma$, for possibly more cops than
$\sigma$, by induction on the length of play prefixes. This
construction also provides a function that maps each history~$\pi$
consistent with~$\sigma$ to a history
$\pi_\mb$ of the same length as~$\pi$ that is consistent with
$\sigma_\mb$ such that the following invariants
hold. Let $\pi = (C_0,R_0),(C_0,C_1,R_0), \ldots,
P$ where $P = (C_i,R_i)$ or $P = (C_i,C_{i+1},R_i)$, and let
$\sigma_\mb = (C_0^{\mb}, R_0^{\mb}), (C_0^{\mb}, C_1^{\mb},
R_0^{\mb}), \ldots, P_\mb$ where $P_\mb = (C_i^\mb,R_i^\mb)$ or $P =
(C_i^\mb,C_{i+1}^\mb,R_i^\mb)$.
\begin{enumerate}[(i)]
\item $R_i = R_i^\mb$ and $M(\pi) = M(\pi_\mb)$,
\item after a cop move, \ie if $P = (C_i,C_{i+1},R_i)$, we have $C_{i+1} \cap R_i = C^\mb_{i+1}
\cap R^\mb_i$ (the chasers are placed in the same way),
\item after a cop move, $\mb(M(\pi_\mb),R_i) \subseteq C_{i+1}^\mb $
  (the cops occupy the minimal blocker).
\end{enumerate}

Let $\pi[i]$ be the prefix of $\pi$ up to position $(C_i,R_i)$.
In the first move of the cops, if
$\sigma(\pi([0])) = \sigma((\0,R_0)) = (\emptyset,C_1,R_0)$,
with chasers $C^c_1 = C_1 \cap R_0$, then we set
\[ \sigma_\mb(\pi_\mb[0]) =
      (\emptyset, C^c_1 \cup \mb(R_0, C^c_1),R_0), \]
\ie we put the chasers and the minimal blocker. Obviously, the
invariants hold.

We turn to the inductive step. If the robber is the next to move, then
the last position in~$\pi$ has the form $(C_i,C_{i+1},R_i)$ and the
next move is to $(C_{i+1},R_{i+1})$ where $R_{i+1}$ is a strongly
connected component of $(G - C_{i+1})$ with $R_{i+1} \cap R_i\neq
\0$. As the cops play according to a robber-monotone strategy, we even
have $R_{i+1} \subseteq R_i$.  By the inductive hypothesis, the last
position in $\pi_\mb$ has the form $(C^\mb_i,C^\mb_{i+1},R^\mb_i)$ and
$C^\mb_{i+1} \cap R^\mb_i = C_{i+1} \cap R_i$. Thus, \emph{in the shy
  game}, the robber has exactly the same choices for $R_{i+1}^\mb$
from this position as from the end of~$\pi$. Therefore we can extend
$\pi_\mb$ by $(C^\mb_{i+1}, R^\mb_{i+1})$ and the conditions (i)--(iii) are satisfied.

Consider now the case that the cops are to move at the end of~$\pi$,
\ie the last position in~$\pi$ has the form $(C_i,R_i)$.  Let
$\sigma(\pi[i]) = (C_i,C_{i+1},R_i)$, with chasers $C^c_{i+1} = C_{i+1} \cap
R_i$. We set
\[ \sigma_\mb(\pi[i]) = (C^\mb_i, C^c_{i+1} \cup \mb(R^\mb_i,
M(\pi[i])) \cup \mb(R^\mb_{i-1}, M(\pi[i-1])), R^\mb_i)\,.
\]
Intuitively, we place the same chasers as~$\sigma$, occupy the current
minimal blocker and additionally the previous minimal blocker. It is
straightforward to see that all conditions (i)--(iii) are satisfied.

The construction above defines the strategy $\sigma_\mb$ and the corresponding
plays, but we are, of course, interested in strategies that are still weakly
robber-monotone. Strategy $\sigma_\mb$ is even strongly robber-monotone.

\begin{lemma}
Let~$\sigma$ be a strategy for cops in $\wmshyDAGWcops{G}$.
Then the strategy $\sigma_\mb$ is robber-monotone.
\end{lemma}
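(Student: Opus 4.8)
The plan is to prove directly that every play $\pi_\mb = (C_0^\mb,R_0^\mb),(C_0^\mb,C_1^\mb,R_0^\mb),(C_1^\mb,R_1^\mb),\dots$ consistent with $\sigma_\mb$ satisfies $R_{i+1}^\mb\subseteq R_i^\mb$ for every $i$, which is exactly membership in $\robmon_\comp(G)$. I would argue by induction on $i$, the hypothesis being $R_{j+1}^\mb\subseteq R_j^\mb$ for all $j<i$. Writing $b_j\coloneqq\mb(R_j^\mb,M(\pi_\mb[j]))$ for the minimal blocker computed once the robber has settled in $R_j^\mb$, the construction gives $C_{j+1}^\mb=(C_{j+1}^\mb\cap R_j^\mb)\cup b_j\cup b_{j-1}$, so that the previously placed minimal blocker satisfies $b_{i-1}\subseteq C_i^\mb\cap C_{i+1}^\mb$: during the robber move out of $R_i^\mb$ the blocker $b_{i-1}$ stays on the board, whereas the fresh blocker $b_i$ need not. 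The whole argument will rest on showing that this retained blocker $b_{i-1}$ already separates $R_i^\mb$ from everything the cops vacate.

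The technical heart is a nesting statement for consecutive minimal blockers: $b_{i-1}$ blocks $R_{i-1}^\mb\to b_{i-2}$. Here two things change at once between rounds $i-2$ and $i-1$, the robber space shrinks and the forbidden set grows, so I would factor the comparison through the intermediate blocker $b'\coloneqq\mb(R_{i-1}^\mb,M(\pi_\mb[i-2]))$. Since $R_{i-1}^\mb\subseteq R_{i-2}^\mb$ by the induction hypothesis and $R_{i-2}^\mb$ is strongly connected in $G-C_{i-2}^\mb$ and disjoint from $b_{i-2}$, we have $R_{i-1}^\mb\subseteq\Reach_{G-b_{i-2}}(R_{i-2}^\mb)$, so Lemma~\ref{lemma_making_R_smaller} yields that $b'$ blocks $R_{i-1}^\mb\to b_{i-2}$. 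As the forbidden set grows from $M(\pi_\mb[i-2])$ to $M(\pi_\mb[i-1])\supseteq M(\pi_\mb[i-2])$, Lemma~\ref{lemma_making_M_larger} yields that $b_{i-1}$ blocks $R_{i-1}^\mb\to b'$. Lemma~\ref{lemma_transfer-down} then combines these into $b_{i-1}$ blocks $R_{i-1}^\mb\to b_{i-2}$, and since $R_i^\mb\subseteq R_{i-1}^\mb$ and a blocker is disjoint from its source set, the same set blocks $R_i^\mb\to b_{i-2}$. In parallel, because $b_{i-1}=\mb(R_{i-1}^\mb,M(\pi_\mb[i-1]))$ is a blocker for $R_{i-1}^\mb\to M(\pi_\mb[i-1])$ and $M(\pi_\mb[i-1])$ contains the chasers $c(i-1)=C_i^\mb\cap R_{i-1}^\mb$ placed in that round (as in the base instance $\mb(R_0^\mb,C_1^\mb\cap R_0^\mb)$), the set $b_{i-1}$ also blocks $R_i^\mb\to c(i-1)$.

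With these two facts in hand I would finish the inductive step by contradiction. Suppose some $w\in R_{i+1}^\mb\setminus R_i^\mb$. By the definition of the component game, $w$ lies in the same strongly connected component $S$ of $G-(C_i^\mb\cap C_{i+1}^\mb)$ as $R_i^\mb$. If $S$ avoided the vacated cops $C_i^\mb\setminus C_{i+1}^\mb$ it would be a strongly connected subset of $G-C_i^\mb$ properly containing $R_i^\mb$, contradicting maximality of the component $R_i^\mb$; hence $S$ meets $C_i^\mb\setminus C_{i+1}^\mb$ in a vertex $u$ reachable from $R_i^\mb$ by a path inside $S$, \ie a path avoiding $C_i^\mb\cap C_{i+1}^\mb$ and in particular avoiding $b_{i-1}$. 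From $u\notin C_{i+1}^\mb\supseteq b_{i-1}$ and the decomposition $C_i^\mb=(C_i^\mb\cap R_{i-1}^\mb)\cup b_{i-1}\cup b_{i-2}$ we get either $u\in b_{i-2}$ or $u\in c(i-1)$. Either way we have produced a $b_{i-1}$-free path from $R_i^\mb$ to $u$, contradicting the fact proved above that $b_{i-1}$ blocks $R_i^\mb\to b_{i-2}$, respectively $R_i^\mb\to c(i-1)$. Hence no such $w$ exists and $R_{i+1}^\mb\subseteq R_i^\mb$.

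I expect the nesting statement to be the main obstacle, precisely because $R$ and $M$ change simultaneously; the two-step factoring through $b'$ together with the transfer lemmas is what makes it go through, and it is essential to notice that only $b_{i-1}$, not the freshly placed $b_i$, is guaranteed to stay on the board during the robber move. Two further points need care but are routine: in a digraph a strongly connected component is not a vertex separator, so one must extract the vacated vertex $u$ from the component $S$ as above rather than from a ``first exit'' of an escape path; and the base cases $i\in\{0,1\}$, where $b_{i-2}$ (and for $i=0$ even $b_{i-1}$) are absent, must be checked separately, but there $R_0^\mb$ is confined behind the single minimal blocker $b_0$ and monotonicity is immediate.
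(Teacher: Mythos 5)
Your proposal is correct and follows essentially the same route as the paper: the paper's own proof is a two-sentence sketch that invokes exactly Lemma~\ref{lemma_making_R_smaller} and Lemma~\ref{lemma_making_M_larger} to argue that earlier blockers and chasers stay separated from the shrinking robber space, and your writeup is a careful elaboration of that sketch (factoring the simultaneous change of $R$ and $M$ through the intermediate blocker $b'$, combining via Lemma~\ref{lemma_transfer-down}, and extracting the vacated vertex $u$ from the component of $G-(C_i^\mb\cap C_{i+1}^\mb)$). Your reading that the blocker placed in a round is computed with the freshly placed chasers already added to $M$ is the one forced by the paper's base case $\mb(R_0,C^c_1)$, so the $u\in c(i-1)$ case goes through as you claim.
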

\begin{proof}
Every blocking set
  $\mb(R_i,M(\pi[i]))$ blocks $R_i \to M(\pi[i])$, so vertices that
  have been occupied by the chasers are not available for the robber. By 
Lemma~\ref{lemma_making_R_smaller} and
  Lemma~\ref{lemma_making_M_larger} previous blocking sets are blocked
  by later blocking sets as $R_i$ becomes smaller and $M(\pi[i])$
  becomes bigger.
\end{proof}

Let us calculate the number of cops used by $\sigma_\mb$.

\begin{lemma}
Let $\sigma$ be a winning strategy for $k$ cops in the weakly monotone
shy game on~$G$.
Then $\sigma_\mb$ is a winning strategy for $3k$ cops in the (strongly
monotone) shy game on~$G$.
\end{lemma}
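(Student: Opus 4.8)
The plan is to bound, at every cop position of a $\sigma_\mb$-play, the total number of vertices occupied by cops, and show this never exceeds $3k$. Recall from the construction that after a cop move the set $C_{i+1}^\mb$ consists of three kinds of vertices: the chasers $C_{i+1}^c = C_{i+1}\cap R_i$, the current minimal blocker $\mb(R_i^\mb, M(\pi[i]))$, and the previous minimal blocker $\mb(R_{i-1}^\mb, M(\pi[i-1]))$ (at the very first move only the chasers and a single minimal blocker appear). So the bound will follow once I show each of these three groups has size at most $k$.

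First I would handle the chasers. By invariant (ii), the chasers placed by $\sigma_\mb$ coincide with those placed by $\sigma$, namely $C_{i+1}\cap R_i$, and these are a subset of $C_{i+1}$, which has size at most $k$ since $\sigma$ is a winning strategy for $k$ cops. Next I would bound the minimal blockers. The key point is that $\mb(R_i^\mb, M(\pi[i]))$ is the $\prec_{M(\pi[i])}^{R_i^\mb}$-minimal blocker of $R_i^\mb \to M(\pi[i])$, and minimality with respect to $\prec$ includes minimal \emph{cardinality}. So to bound its size by $k$ it suffices to exhibit \emph{some} blocker of $R_i^\mb\to M(\pi[i])$ of size at most $k$; the minimal one can only be smaller or equal. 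The natural candidate is the cop set $C_{i+1}\setminus R_i$ (the cops that $\sigma$ places outside the current robber component), which by weak monotonicity of $\sigma$ must block every directed path from the robber space to the vertices previously visited by chasers (the set $M$), and has size at most $k$.

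The main obstacle I anticipate is making precise \emph{why} the cop placement of the original strategy $\sigma$ furnishes a blocker of $R^\mb \to M$ of size $\le k$, and in particular pinning down what the target set $M(\pi)$ is and why $\sigma$ blocks reachability to it. This is where weak monotonicity enters: in the weakly monotone shy game, the chasers $c(j)$ from earlier rounds must never become reachable again from the current robber component in $G-(C_i\cap C_{i+1})$, so the cops of $\sigma$ at each stage do block all directed paths from $R$ to the accumulated set $M$ of former chaser positions. Since these blocking cops are drawn from the at-most-$k$ cops of $\sigma$, the minimal blocker has size at most $k$, and likewise the minimal blocker from the previous round. Combining the three bounds of $k$ each yields $|C_{i+1}^\mb|\le 3k$ at every cop position, so $\sigma_\mb$ uses at most $3k$ cops. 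Finally, since $\sigma_\mb$ faithfully replays the chaser moves of the winning strategy $\sigma$ and the robber has exactly the same choices in the shy game (as argued in the construction), every play consistent with $\sigma_\mb$ reaches $R_i^\mb=\emptyset$, so $\sigma_\mb$ is winning; by the preceding lemma it is moreover robber-monotone.
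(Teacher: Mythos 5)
Your proposal is correct and follows essentially the same route as the paper: split $C^\mb_{i+1}$ into chasers, current minimal blocker, and previous minimal blocker, bound each by $k$ using the fact that weak monotonicity of $\sigma$ makes the cop set of $\sigma$ itself a blocker of $R\to M$ (so the cardinality-minimal blocker is no larger), and inherit winningness from invariant (i) and monotonicity from the preceding lemma. The only cosmetic difference is that you exhibit $C_{i+1}\setminus R_i$ as the witness blocker where the paper uses $C_i$; both have size at most $k$ and are disjoint from the robber component.
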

\begin{proof}
The strategy $\sigma_\mb$ is monotone by the previous lemma, and by
property~(i) of the definition, the components available for the robber
correspond to those in plays consistent with~$\sigma$, thus~$\sigma_\mb$
is winning for the cops. To calculate the number of cops used by~$\sigma_\mb$,
recall that the set of cops placed in step~$i$ is
$C^\mb_{i+1} = C^c_{i+1} \cup \mb(R^\mb_i, M(\pi[i])) \cup \mb(R^\mb_{i-1}, M(\pi[i-1])), R^\mb_i)$,
where $C^c_{i+1}$ were the chasers placed by~$\sigma$, \ie the set $C_i \cap R_i$,
where $C_i$ are all cops placed by $\sigma$ in the corresponding position.
Since $\sigma$ was a weakly monotone strategy, the set $C_i$ blocks
$R_i \to M(\pi[i])$, and the previous $C_{i-1}$ blocked $R_{i-1} \to M(\pi[i-1])$.
Thus $|\mb(R_i,M_i)| \le |C_i| \leq k$ and $|\mb(R_{i-1},M_{i-1})| \le
|C_{i-1}|\leq k$, and, of course, $|C^c_{i+1}| \leq k$. Therefore $|C^\mb_{i+1}| \leq 3k$.
\end{proof}

\begin{corollary}\label{cor_compare_games}
$\wmshyDAGWcops{G} \leq \shyDAGWcops{G} \leq 3 \cdot \wmshyDAGWcops{G}$
\end{corollary}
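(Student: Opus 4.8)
The plan is to read off both inequalities from material already established, so the corollary is essentially a bookkeeping statement. The left inequality $\wmshyDAGWcops{G} \leq \shyDAGWcops{G}$ needs no new argument: weak monotonicity only enlarges the set of plays won by the cops (the winning condition $\wm(G)$ is a relaxation of $\copmon_\reach(G)$), so any strategy that wins the strongly monotone shy game with some number of cops also wins the weakly monotone shy game with the same number of cops. Hence the weakly monotone game cannot require more cops, which is precisely the inequality already listed among the immediate consequences of the definitions.

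For the right inequality $\shyDAGWcops{G} \leq 3 \cdot \wmshyDAGWcops{G}$, I would invoke the preceding lemma with $k := \wmshyDAGWcops{G}$. By definition of $\wmshyDAGWcops{G}$ there is a winning strategy $\sigma$ for $k$ cops in the weakly monotone shy game, and the preceding lemma transforms $\sigma$ into the minimally blocking strategy $\sigma_\mb$, which it shows to be a winning strategy for $3k$ cops in the (strongly monotone) shy game. Therefore $3k$ cops suffice to win the strongly monotone shy game, giving $\shyDAGWcops{G} \leq 3k = 3 \cdot \wmshyDAGWcops{G}$.

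I do not expect a genuine obstacle here: all of the substance lives in the construction of $\sigma_\mb$ and in the two preceding lemmas (that $\sigma_\mb$ is robber-monotone, and that it places at most $3k$ cops in every position), both of which I am allowed to assume. The only point to keep in mind is that the two parameters being compared are both defined via the \emph{shy} (component-move) game, so the transformation from $\sigma$ to $\sigma_\mb$ stays within the same game family and the cop counts line up directly; once this is noted, the corollary follows immediately by chaining the two inequalities.
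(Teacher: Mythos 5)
Your proposal is correct and matches the paper's intent exactly: the left inequality is one of the inequalities the paper already records as an immediate consequence of the definitions, and the right inequality is just the preceding lemma (that $\sigma_\mb$ wins the strongly monotone shy game with $3k$ cops) applied with $k=\wmshyDAGWcops{G}$, which is why the paper states the corollary without further proof.
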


To convince oneself that these inequalities are not trivial, and that
blocking minimally makes a difference, consider the following lemma.

\begin{lemma}\label{lemma_guarding-sequences}
There are graphs on which the cops have to make more than one guarding moves
successively in order to win in the weakly monotone shy game, respectively
in the strongly monotone game with the least possible number of cops.
\end{lemma}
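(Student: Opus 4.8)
The plan is to exhibit one explicit digraph $G$ (a gadget that can also be plugged into larger families) on which \emph{every} winning strategy using the least possible number of cops is forced, against a suitable robber response, to place guards in two consecutive rounds without any intervening chaser move. The reason such a phenomenon can occur at all is that a cop move may leave the robber component $R_i$ untouched (when $C_{i+1}\cap R_i=\emptyset$, no chaser enters, so in the shy game $R_{i+1}=R_i$); such a pure guarding move only serves to re-establish the blocking of directed paths from $R_i$ to the earlier chaser sets $c(j)$ required by weak monotonicity (respectively by $\copmon_\reach$). The point of the construction is that, under a \emph{tight} cop budget, the cops cannot hold the old and the new blocker simultaneously, so they must place one blocker, \emph{release} the now-redundant cops, and only then place the next blocker---exactly the recycling that the $3k$-cop bound of \Cref{cor_compare_games} avoids by holding both $\mb(R_i^\mb,M(\pi[i]))$ and $\mb(R_{i-1}^\mb,M(\pi[i-1]))$ at once.

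Concretely, I would design $G$ so that, as the robber region shrinks and the marked set $M$ (of previously used chaser vertices) grows, the minimal blocker passes through a short sequence $B_0,B_1,B_2$ of distinct sets with the following four properties: each $|B_j|$ fits the budget $k$; the union of two consecutive blockers together with the chasers strictly exceeds $k$; the cops that must be freed to assemble $B_{j+1}$ can only be safely vacated \emph{after} $B_j$ is in place (because $B_j$ is what blocks the path from $R$ to those vertices); and advancing a chaser before $B_{j+1}$ has been placed re-opens a directed $(C_i\cap C_{i+1})$-free path from the new robber region to some $c(j')$, violating weak monotonicity (in the strongly monotone variant, it makes a previously sealed vertex reachable again, violating $\copmon_\reach$). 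Together these force the order guard $B_j$, release, guard $B_{j+1}$: two successive guarding moves.

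The argument then has three parts. First, a feasibility/upper bound: I give a concrete winning strategy for $k$ cops that performs the two successive guarding moves, which in the language of Section~\ref{sec_min_strat} is precisely the step where the minimal blockers of two consecutive positions must coexist. Second, optimality: I show that $k-1$ cops lose, by a flooding/escape strategy for the robber, so that ``least possible number of cops'' is meaningful. Third, and hardest, unavoidability: I show that \emph{any} winning $k$-cop strategy must make two guarding moves in a row. Assuming instead that after a single guard the cops advance a chaser, I track which vertices must remain occupied using the minimal-blocker machinery---uniqueness of $\mb(R,M)$ (\Cref{lemma_maximal-element}) and its monotone evolution as $R$ shrinks and $M$ grows (\Cref{lemma_making_R_smaller,lemma_making_M_larger})---to show that with only $k$ cops the single guard plus the new chaser cannot both block $R\to M$ and cover the forbidden $c(j)$, so monotonicity is broken; the transfer lemmas \Cref{lemma_transfer-down} and the structure of \Cref{lemma_inclusion_minimal_blockers} certify that no single guarding move can substitute for the two.

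I expect this third part to be the main obstacle, since it quantifies over all $k$-cop strategies rather than exhibiting one: the gadget must simultaneously pin the cops to a budget tight enough that the ordering of the two guards is genuinely forced, yet loose enough that the cops still win once the guards are played consecutively. Calibrating these two requirements against each other, while keeping the blocker sequence $B_0,B_1,B_2$ compatible with both the weak-monotone shy and the strongly monotone game, is where the care lies; the blocking lemmas of Section~\ref{sec_blocking} are exactly the bookkeeping tools that make the case analysis tractable.
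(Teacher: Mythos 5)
There is a genuine gap: the lemma is an existence statement, and your proposal never exhibits the graph. Everything you write is a specification of what a gadget \emph{should} do --- a sequence of minimal blockers $B_0,B_1,B_2$ with four wished-for properties, a budget that is ``tight enough yet loose enough'' --- and you explicitly defer the calibration that would make this concrete. But that calibration \emph{is} the proof; until a specific digraph is written down and the three parts (a $k$-cop winning strategy using two consecutive guards, the robber's win against $k$ cops that are forbidden to guard twice in a row, and the meaningfulness of $k$ as the minimum) are verified on it, nothing has been established. The paper's proof is exactly such an execution: an explicit graph (Figure~\ref{fig_guarding_sequences}) built from a hub $c_0$, components $A_i$ each containing a $2$-clique $c_1$ and a $3$-clique $c_2$, and sub-gadgets $B_j$ each containing a $3$-clique $R$, a single vertex $g_0$ and a $2$-clique $g_1$, on which $6$ cops win by guarding $g_0$ and then $g_1$ in consecutive rounds, while a short case analysis (tracking which cops are ``bound'', i.e.\ reachable from the robber) shows that $6$ cops forbidden from two successive guarding moves always run out of free cops and lose.

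Two further remarks on your plan as a plan. First, the heavy machinery you propose to invoke --- uniqueness of $\mb(R,M)$, \Cref{lemma_making_R_smaller}, \Cref{lemma_making_M_larger}, the transfer lemmas --- is not needed and would likely obscure the argument; the paper's unavoidability step is elementary bookkeeping on a graph small enough that one can enumerate the cops' options move by move. Second, your ``optimality'' part (showing $k-1$ cops lose) is not what the lemma requires and is not what the paper proves for the lower bound: the relevant statement is that the \emph{same} number of cops that wins with consecutive guards loses without them, which is what forces any minimum-cop winning strategy to guard twice in a row. (The paper even notes that $7$ cops can avoid the second guarding move, which is why the ``least possible number'' qualifier matters.) Your mechanism --- under a tight budget the cops must place one blocker, release the now-covered cops, and only then place the next blocker --- is the right intuition and matches the paper's construction in spirit, but as submitted the proof does not exist yet.
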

\begin{proof} 
Let $n\ge 6$.
Consider the graph $G_n$ depicted in
Figure~\ref{fig_guarding_sequences} (recall that edges without arrows
denote edges in both directions).
Arrows that connect parts of the graph enclosed in a rectangle lead to or
from all vertices of the graph.
The graph consists of a vertex $c_0$ and $n$ parts $A_i$ that are isomorphic
to each other and connected only to $c_0$ and in the same way.
Every $A_i$ consists of a $2$-clique with vertices labeled in the picture
with $c_1$ and a $3$-clique with vertices labeled in the picture with
$c_2$ that are connected to each other and to $c_0$.
Further, $A_i$ contains $n$ parts $B_j$.
Each $B_j$ contains a $3$-clique $R$, a single vertex labeled with $g_0$
and a $2$-clique with vertices labeled with $g_1$.
The connections are shown in the figure.

If the cops are allowed to make multiple guarding moves in a row,
$6$ of them suffice to capture the robber strongly (and thus weakly)
monotonously. One cop is placed on $c_0$ (a chasing move), the robber
chooses a component $A_i$. Then the cops occupy vertices $c_1$ and $c_2$ in 
further chasing moves, the robber chooses a part $B_j$ in $A_i$.
Then the cop from $c_0$ goes to $g_0$, which is a guarding move,
and then the cops from $c_1$ go to the both vertices $g_1$. 
Note that if the robber remains in $R$, placing cops on $g_1$ is
again a guarding move. Finally, the cops from $c_2$ capture the robber in $R$.
Note that if there are $7$ cops, it is possible to place this additional
cop on a vertex in $R$ instead of making the second guarding move
and then win as before.

If the cops are not permitted to make two guarding moves in a row,
the robber has the following winning strategy in the weakly (and thus strongly)
monotone game against $6$ cops.
In the first move, the robber occupies $c_0$ and waits there until
it is occupied by a cop. In that moment, there is a cop free component $A_i$
(as there are $6$ cops and $6$ components $A_0,\dots,A_{n-1}$,
but one cop occupies $c_0$). The robber goes to that cop free component $A_i$
and waits on the $5$-clique that is build by vertices $c_1$ and $c_2$.
When the cops occupy this clique, there is a cop free part $B_j$ in $A_i$
and the robber runs there. Note that cops on the clique are chasers,
so the only free cop is that from $c_0$. If he is placed in $R$,
the robber stays, some other cop must move up, and the cops lose.
If he is placed on one of the vertices $g_1$ (which is a guarding move),
as no second guarding move is allowed, the only possible next move for cops is
to place the one from $g_1$ on $R$ -- and lose as before. Hence, we can assume
that the cop from $c_0$ is placed on $g_0$, a guarding move. The next move
must be chasing and the only possibility is to place a cop from $c_1$ in $R$.
Now the cop on $g_0$ cannot be taken away, as a path to $c_1$ would be cop free,
and the cops on $c_2$ are still bound as well. So is the cop in $R$
(his move was chasing). Thus there is only one free cop (on $c_1$).
He makes a guarding move, then a chasing move to $R$ and the cops lose.
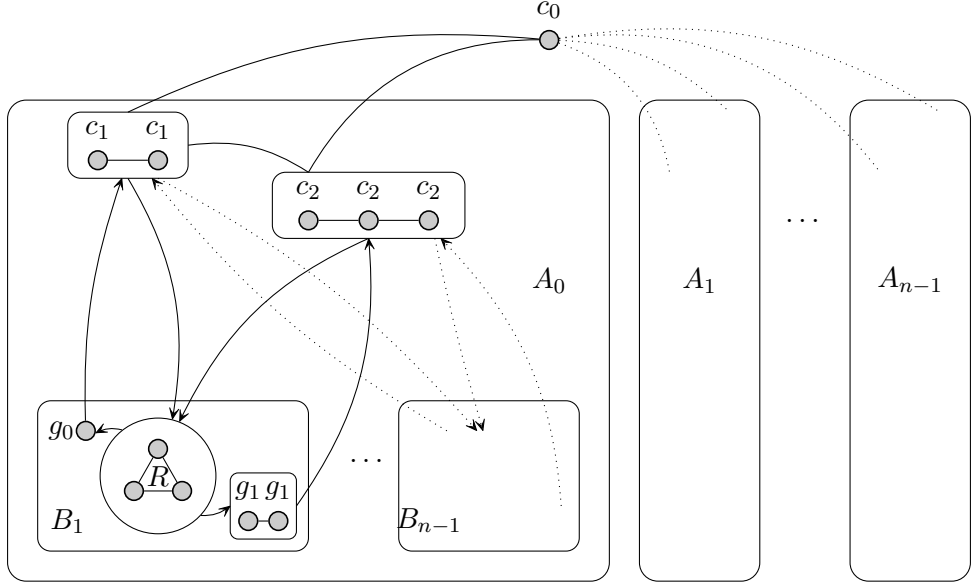
\begin{figure}
\begin{center}
\begin{tikzpicture}[scale=0.8]

\node[vertex] (c_0) at (1,5){}; 
\node at (1,5.5) {$c_0$};

\draw [rounded corners=7pt]  (-8,-4) rectangle (2,4); 
\node at (1,1){$A_0$};

\draw [rounded corners=7pt]  (2.5,-4) rectangle (4.5,4);
\node at (3.5,1){$A_1$};
\draw (c_0) edge [dotted,bend left=30] (3,2.8);
\draw (c_0) edge [dotted,bend left=20] (4,3.8);

\node at (5.25,2){\dots};

\draw [rounded corners=7pt]  (6,-4) rectangle (8,4);
\node at (7,1){$A_{n-1}$};
\draw (c_0) edge [dotted,bend left=30] (6.5,2.8);
\draw (c_0) edge [dotted,bend left=20] (7.5,3.8);


\node[vertex] (c_1_left) at (-6.5,3){};
\node at (-6.5,3.5){$c_1$};
\node[vertex] (c_1_right) at (-5.5,3){};
\node at (-5.5,3.5){$c_1$};
\draw [] (c_1_left) -- (c_1_right);
\draw [rounded corners=5pt] (-7,2.7) rectangle (-5,3.8);

\node [vertex] (c_2_left) at (-3,2){};
\node at (-3,2.5){$c_2$};
\node [vertex] (c_2_middle) at (-2,2){};
\node at (-2,2.5){$c_2$};
\node [vertex] (c_2_right) at (-1,2){};
\node at (-1,2.5){$c_2$};
\draw (c_2_left) -- (c_2_middle) -- (c_2_right);
\draw [rounded corners=5pt] (-3.6,1.7) rectangle (-0.4,2.8);

\draw (-5,3.25) edge[bend left=20] (-3,2.8);
\draw (c_0) edge[bend right=15] (-6,3.8);
\draw (c_0) edge[bend right] (-3,2.8);

\draw [rounded corners=5pt] (-7.5,-3.5) rectangle (-3,-1);
\node at (-7,-3){$B_1$};

\node [vertex] (g_0) at (-6.7,-1.5){};
\node at (-7.1,-1.5){$g_0$};

\node [vertex] (g_1_left) at  (-4,-3){};
\node at (-4,-2.5){$g_1$};
\node [vertex] (g_1_right) at  (-3.5,-3){};
\node at (-3.5,-2.5){$g_1$};
\draw (g_1_left) edge (g_1_right);
\draw [rounded corners=3pt] (-4.3,-3.3) rectangle (-3.2,-2.2);

\node [draw,circle,minimum size=4em] (circle) at (-5.5,-2.25){$R$};
\node [vertex] (R_up) at (-5.5,-1.8){};
\node [vertex] (R_left) at (-5.9,-2.5){};
\node [vertex] (R_right) at (-5.1,-2.5){};
\draw (R_up) edge (R_left) edge (R_right);
\draw (R_right) edge (R_left);

\draw (g_0) edge [-slim,bend left=10] (-6.1,2.7); 
\draw (-3.2,-2.75) edge [-slim,bend right=20] (-2,1.7); 
\draw (circle) edge [-slim,bend right=20] (-4.3,-2.75); 
\draw (circle) edge [-slim,bend right=20] (g_0);
\draw (-6,2.7) edge [-slim,bend left=20] (circle);
\draw (-2,1.7) edge [-slim,bend right=20] (circle);

\node at (-2,-2){\dots};
\node at (-1,-3){$B_{n-1}$};
\draw [rounded corners=5pt] (-1.5,-3.5) rectangle (1.5,-1);
\draw (-0.7,-1.5) edge [dotted,-slim,bend left=10] (-5.6,2.7); 
\draw (1.2,-2.75) edge [dotted,-slim,bend right=20] (-0.8,1.7); 
\draw (-0.9,1.7) edge [dotted,-slim,bend right=2] (-0.1,-1.5); 
\draw (-5.5,2.7) edge [dotted,-slim,bend left=10] (-0.2,-1.5); 

\end{tikzpicture}
\caption{The cops need more than one guarding move in a row.}
\label{fig_guarding_sequences}
\end{center}
\end{figure}
\end{proof}


    \subsection{Decomposition}\label{sec_dec}

Our next goal is to define a decomposition of graphs in the spirit of
\cite{JohnsonRobSeyTho01} for the strongly monotone shy game.
Let~$G$ be a graph. A \emph{shy-monotone tree decomposition} of~$G$
is a tuple $(T,C,R)$ where~$T$ is a directed tree with root~$r$ 
and edges oriented away from the root, and $C, R : V(T) \to 2^G$ are
functions with the properties listed below, which intuitively correspond
to the placements of the cops and the component of the robber.
For a node $t \in V(T)$ we write $ch(t)$ for the set of chasers
corresponding to~$t$, \ie $ch(t) = C(t) \cap R(t)$, and we denote by
$g(t)$ the guards, $g(t) = C(t) \setminus R(t)$.
Moreover, we write $m(t)$ for the union $\bigcup_{s \preceq t} ch(s)$,
\ie for the set of all chasers from the nodes above~$t$ in~$T$.

\begin{enumerate}[(1)]
\item \label{shy_dec_root}
  For the root $r$, $R(r) = V(G)$.

\item \label{shy_dec_sc} For every $(t,t')\in E(T)$,
  $R(t')$ is a strongly connected component of $R(t) \setminus ch(t)$.

\item\label{shy_dec_cover_robber} For every $t\in V(T)$
  if $t_1, \dots, t_n$ are all direct successors of~$t$, then
  \[ R(t) = ch(t) \cup \bigcup_{i=1}^n R(t_i). \]

\item \label{shy_dec_block} For every $(t,t')\in E(T)$ holds:
  \[ m(t) \cap \Reach_{G - (C(t) \cap C(t'))} R(t') = \emptyset, \]
  \ie there is no path from $R(t')$ to $m(t)$ avoiding $C(t) \cap C(t')$.
\end{enumerate}

Note that from Item~\ref{shy_dec_root} and Item~\ref{shy_dec_cover_robber} it
follows that every vertex of~$G$ is contained in the image of~$ch$,
$\bigcup ch(V(T)) = V(G)$. Indeed, for a node~$t$ without successors,
we obtain from Item~\ref{shy_dec_cover_robber} that $R(t) = ch(t)$, and
applying this item inductively proves that, for each node~$t$,
the component $R(t)$ is covered by $\bigcup_{t \preceq t'} ch(t')$.
Since, by Item~\ref{shy_dec_root}, in the root $R(r) = V(G)$, we
have $\bigcup ch(V(T)) = G$.

The \emph{width} of a shy-monotone tree decomposition $(T,C,R)$ is defined
as $\max \big\{ |C(t)|  \mid  t \in V(T) \big\}$.

\begin{proposition}\label{prop_eq_defs_shy_and_dec}
Let $G$ be a graph. The following statements are equivalent.
\begin{enumerate}[(1)]
\item\label{lemma_def_shy} 
  $k$ cops capture the robber in the shy-monotone game on~$G$.
\item\label{lemma_def_dec}
  There is a shy-monotone tree decomposition $(T,C,R)$ of~$G$ of width~$k$.
\end{enumerate}
\end{proposition}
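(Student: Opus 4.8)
The plan is to prove the two implications separately, in each case setting up an explicit dictionary between nodes of the decomposition tree and positions of a play of the shy-monotone game consistent with a cop strategy. Under this dictionary a node $t$ of $T$ records the current cop placement $C(t)$ and the robber's current component $R(t)$; the chasers $ch(t)=C(t)\cap R(t)$ are the cops sitting inside the robber's component, the guards $g(t)=C(t)\setminus R(t)$ are those outside it, and the children of $t$ enumerate the strongly connected components into which $R(t)$ breaks apart once the chasers are removed. The root is treated as the pre-game position with $C(r)=\emptyset$ and $R(r)=V(G)$, so that $ch(r)=\emptyset$ and its children are exactly the strong components of $G$, matching the robber's free initial choice of a vertex.

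For the direction \ref{lemma_def_dec} $\Rightarrow$ \ref{lemma_def_shy} I would read a strategy off a given decomposition $(T,C,R)$ of width~$k$. The cops maintain the invariant that whenever the robber occupies a vertex of some $R(t)$ they sit on $C(t)$. When the robber moves, conditions \ref{shy_dec_sc} and \ref{shy_dec_cover_robber} guarantee that, once the chasers $ch(t)$ are in place, his new component is exactly one of the children $R(t')$, and the cops advance to $C(t')$. This is a legal component-game move since $R(t')$ is a strong component of $R(t)\setminus ch(t)$, and the blocking condition \ref{shy_dec_block} guarantees that no vertex previously used as a chaser (\ie a vertex of $m(t)$) is reachable from $R(t')$ once the retained cops $C(t)\cap C(t')$ are fixed; this is exactly what certifies that the resulting play is monotone. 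Since $T$ is finite and at every leaf condition \ref{shy_dec_cover_robber} forces $R(t)=ch(t)\subseteq C(t)$, the robber's component eventually becomes empty and he is captured, while $|C(t)|\le k$ bounds the number of cops by~$k$.

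For the converse \ref{lemma_def_shy} $\Rightarrow$ \ref{lemma_def_dec} I would unravel a winning cop strategy $\sigma$ into a tree. Because the graph is finite the robber has finitely many moves from each position, and because $\sigma$ is winning every consistent play is finite, so by K\"onig's lemma the tree of positions reachable under $\sigma$ is finite. I take its nodes as the nodes of $T$, set $C(t)$ and $R(t)$ to be the cop set and robber component of the corresponding cop-position, and let the children of $t$ range over \emph{all} robber responses, \ie over all strong components of $R(t)\setminus ch(t)$. Conditions \ref{shy_dec_root}, \ref{shy_dec_sc} and \ref{shy_dec_cover_robber} are then immediate from the dynamics of the component game (every vertex lies in a unique strong component, so the children cover $R(t)$), and $|C(t)|\le k$ follows from $\sigma$ using at most $k$ cops, giving width~$k$.

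The main obstacle is the equivalence between the game's monotonicity requirement and the reachability-based blocking condition \ref{shy_dec_block}. In the shy game the robber's moves are component-restricted, yet monotonicity is measured by directed reachability, so establishing \ref{shy_dec_block} for the tree built from $\sigma$ requires showing that a monotone winning strategy can be assumed to block \emph{every} directed path from the current robber component back to the already-used chaser vertices $m(t)$, not merely to prevent the robber from physically returning there. I would obtain this by first passing to a minimally blocking strategy in the spirit of Section~\ref{sec_min_strat}: replacing the guards by the minimal blockers of $R(t)\to m(t)$ keeps the strategy winning and monotone while forcing precisely the reachability condition \ref{shy_dec_block}, and Lemmas~\ref{lemma_making_R_smaller} and~\ref{lemma_making_M_larger} ensure that this blocking persists as $R(t)$ shrinks and $m(t)$ grows along a branch. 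Conversely, the same blocking statements show that the strategy extracted from a decomposition satisfying \ref{shy_dec_block} never needs to revisit a vacated vertex, so that it is genuinely monotone; matching these two reachability formulations is the delicate point, the rest being bookkeeping on strong components.
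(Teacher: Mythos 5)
Your two constructions -- reading a strategy off the decomposition by maintaining the invariant ``robber in $R(t)$, cops on $C(t)$'', and unravelling a winning strategy into the finite tree of reachable positions -- are exactly the paper's proof, and that part is fine. Where you go astray is in the final paragraph, where you identify reconciling the game's monotonicity with condition~\ref{shy_dec_block} as ``the delicate point'' and propose to handle it by first passing to a minimally blocking strategy in the sense of Section~\ref{sec_min_strat}. There is no such delicate point: the game being characterised is the \emph{weakly monotone} shy game, and weak monotonicity is \emph{defined} by directed reachability -- a play is weakly monotone precisely when no previously placed chaser $c(j)$ is reachable from the current robber component $R_i$ in $G-(C_i\cap C_{i+1})$. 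Since $m(t)$ is exactly the union of the chasers placed along the branch to $t$, condition~\ref{shy_dec_block} is a verbatim transcription of the winning condition of the game, and any winning strategy satisfies it in every consistent play for free; the paper dispatches this in one line. The paper even stresses beforehand that in the shy game the monotonicity condition is deliberately phrased via directed reachability rather than via the robber's (component-restricted) moves, which is the distinction you are worried about.

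Worse, your proposed remedy would actually break the statement you are proving. The minimally blocking strategy $\sigma_\mb$ of Section~\ref{sec_min_strat} places, in each cop move, the chasers together with the current \emph{and} the previous minimal blockers, and is only shown to use $3k$ cops (Corollary~\ref{cor_compare_games}); your lighter variant (chasers plus current minimal blocker only) is not shown to remain monotone during the transition between blockers, and in either case you would end up with a decomposition of width larger than $k$, whereas the proposition asserts an exact equivalence at width $k$. Your closing remark that condition~\ref{shy_dec_block} shows the extracted strategy ``never needs to revisit a vacated vertex'' conflates cop-monotonicity with weak monotonicity in the same way; what \ref{shy_dec_block} certifies is the reachability condition on past chasers, which is precisely the $\Mon$-set of the game. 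Drop the detour, cite the definition of $\wm(G)$ for condition~\ref{shy_dec_block}, and the rest of your argument is the paper's proof.
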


\begin{proof} ~
\begin{description}
\item[$(1) \Rightarrow (2)$.]  Let $\sigma$ be a strategy for~$k$ cops
  on~$G$.  We construct $(T,C,R)$ inductively, starting with the
  root~$r$ with $R(r) = V(G)$ and we set $C(r)$ to the first placement
  of the cops chosen by~$\sigma$. We continue the construction by
  following a play consistent with~$\sigma$ in each component chosen
  by the robber, and setting~$C(t')$ to the vertices occupied by cops
  placed if the robber makes the respective move to
  $R(t')$. Items~\ref{shy_dec_sc} and \ref{shy_dec_cover_robber}
  follow from the general definition of the game, while
  Item~\ref{shy_dec_block} follows from the game being weakly-monotone.

\item[$(2) \Rightarrow (1)$.]  From the decomposition $(T,C,R)$ we
  construct a strategy $\sigma_T$. The first move of the cops is to
  $C(r)$, where $r$ is the root of $T$. For each move of the robber to
  $R'$, the cops respond with the move to $C(t')$, where $t'$ is the
  successor of $t$ with $R(t') = R'$. Items~\ref{shy_dec_sc} and
  \ref{shy_dec_cover_robber} guarantee that this strategy is well
  defined, while Item~\ref{shy_dec_block} guarantees that it is
  winning. Obviously, the number needed cops is the width of the decomposition.
\end{description}
\end{proof}

We continue with an analysis of the decompositions. Let $\sigma$ be a
strategy for the cops in the shy-monotone game on $G$ and $T$ the
corresponding decomposition tree.  For a non-empty set of vertices $A$
there is a unique \emph{split vertex} $\split{A}$ which is the latest
common predecessor of all vertices of $A$.  We also write
$\split{a,b}$ for $\split{\{a,b\}}$ and $\split{a,A}$ for
$\split{\{a\}\cup A}$.


The proof of the next lemma is easy and we omit it.

\begin{lemma}\label{lemma_exactly_one_cop}
 If weak \dagw of a graph~$G$ is~$k$, then there is a winning strategy~$\sigma$
for the cops that always prescribes to place exactly one cop in a move,
\ie if $(M,C,R)\to(M',C,C',R)$ is a move according to~$\sigma$, then
$|C'\setminus C| =1$.
\end{lemma}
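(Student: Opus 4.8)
The plan is to start from any winning strategy $\sigma_0$ for $k$ cops in the weakly monotone game and to \emph{serialise} each of its composite moves into a sequence of moves that each place exactly one new cop, while never using more than $k$ cops and never handing the robber an escape that $\sigma_0$ could not already deal with. Concretely, suppose $\sigma_0$ prescribes a move from a cop set $C$ to a cop set $C'$ against a robber in component $R$, and write $C' \setminus C = \{d_1, \dots, d_m\}$. I would replace this single move by the chain
\[
C = C_0 \;\longrightarrow\; C_1 \;\longrightarrow\; \cdots \;\longrightarrow\; C_m = C',
\qquad
C_j := (C \cap C') \cup \{d_1, \dots, d_j\},
\]
so that the first step drops all cops in $C \setminus C'$ and adds $d_1$, while every later step adds a single $d_{j+1}$. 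Each step places exactly one new cop, which is what the lemma demands, and between consecutive steps the robber is allowed to react in the usual way.

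The first thing to verify is the cop bound. Since $C_j = (C\cap C') \cup \{d_1,\dots,d_j\}$, we have $|C_j| \le |C\cap C'| + m = |C'| \le k$ for every $j$, and $|C_0| = |C| \le k$, so the serialised play never exceeds $k$ cops. Packing the removals together with $d_1$ into the very first step is essential here: performing additions first would transiently require up to $|C| + m$ cops.

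Next I would check that the serialised robber cannot reach anything $\sigma_0$ does not already account for. The key observation is that every intermediate set of \emph{retained} cops contains $C \cap C'$: for the first step the cops that stay are $C_0 \cap C_1 = C \cap C'$, and for every later step they are $C_j \supseteq C \cap C'$. Hence each robber move in the chain travels along a path (or, in the shy variant, stays within a strong component) avoiding $C \cap C'$, and concatenating these moves yields a single $(C\cap C')$-avoiding walk from $R$ to the robber's final component $R'$, which is a component of $G - C_m = G - C'$. This is precisely a robber reply that $\sigma_0$'s original move permits; since $\sigma_0$ is winning from $(C', R')$ for every such $R'$, the serialised strategy can resume simulating $\sigma_0$ from there, and inducting on the number of logical rounds shows it is winning with $k$ cops.

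Finally, weak monotonicity must be preserved, and this is the step I expect to be the most delicate. At each intermediate robber position $R_{\mathrm{cur}}$ the retained cops contain $C \cap C'$, so $\Reach_{G - (C\cap C')}(R_{\mathrm{cur}}) \subseteq \Reach_{G-(C\cap C')}(R)$; since $\sigma_0$ is weakly monotone, no earlier chaser (the set $M$) is reachable from $R$ in $G - (C\cap C')$, and therefore none is reachable from any intermediate $R_{\mathrm{cur}}$ either. The only fresh chasers created during the chain are those $d_i$ that land in the robber's component, but each such $d_i$ stays on the board for the remainder of the round (it lies in every later $C_j$ and in $C'$), so the robber can never reach it. Thus weak monotonicity of the serialised play reduces to that of $\sigma_0$. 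The real work, as indicated, is packaging these reachability inclusions cleanly; everything else is index bookkeeping, together with the remark that the argument is insensitive to whether the underlying game is the shy (component) or the reachability variant, since in both cases the decisive invariant is that the retained cops always include $C \cap C'$.
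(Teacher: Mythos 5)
The paper itself omits the proof of this lemma (declaring it ``easy''), so there is no official argument to compare against; your serialisation is certainly the natural route, and the cop bound, the concatenation-of-walks argument showing that every serialised robber reply corresponds to a legal reply to the original move, and the termination/winning argument are all fine. The gap is in the step you yourself flagged as delicate: the claim that ``weak monotonicity of the serialised play reduces to that of $\sigma_0$''. The problem is that serialisation changes which cops count as \emph{chasers}. In the paper's definition, the chasers of a move are $C_{i+1}\cap R_i$, where $R_i$ is the robber's component with respect to the \emph{current} cop set $C_i$. Your first step removes all of $C\setminus C'$, so from step $2$ onwards the robber's component is a component of $G-C_j$ with $C_j\supseteq C\cap C'$ but $C_j\not\supseteq C$; this component can be strictly larger than $R$ (removing the cops of $C\setminus C'$ can merge $R$ with other components of $G-C$). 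Hence a vertex $d_i\in C'\setminus C$ with $d_i\notin R$ --- a \emph{guard} of the original move, carrying no future weak-monotonicity obligation --- may lie in the enlarged component $R_{j}$ at the moment it is placed, and thereby become a \emph{chaser} of the serialised play. Such a $d_i$ must then stay unreachable from the robber for the entire remainder of the play, not just the remainder of the round. But $\sigma_0$ treats $d_i$ as a guard and is free to remove it later at a moment when it is still reachable from the robber in $G$ minus the retained cops (this is exactly what weakly monotone strategies, e.g.\ the Kelly-to-DAG translation, routinely do with their guards). At that moment the serialised play violates weak monotonicity even though the original play does not, so the serialised strategy need not be winning in the weakly monotone game. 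Your argument does not rule this out, and no reordering of the $d_i$ within your scheme obviously fixes it, since the component is already enlarged after the very first step (which must perform the removals to respect the cop bound).

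Two smaller points: your chain is undefined (or places zero cops) when $C'\setminus C=\emptyset$, so pure-removal moves need to be eliminated or merged into the next placing move beforehand; and the indexing $C=C_0$ versus $C_j=(C\cap C')\cup\{d_1,\dots,d_j\}$ is inconsistent at $j=0$ (harmless, but worth cleaning up). The substantive issue is the chaser reclassification above; to close it you would need either an additional normalisation of $\sigma_0$ (e.g.\ arguing that one may assume guards are never removed while reachable, which is essentially the ``leaving tied cops'' issue the paper shows is costly in general), or a different accounting showing that the extra obligations incurred by accidental chasers are in fact discharged.
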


It follows that we can turn any shy-monotone tree decomposition into one with
$|ch(t)| = 1$ for all $t\in V(T)$.

\begin{corollary}\label{cor_exactly_one_cop}
 For every graph $G$ with $\wmshyDAGWcops{G} = k$ there is a shy-monotone
tree decomposition $(T,C,R)$ of width~$k$ where 
for all $t\in V(T)$, there is a vertex $w\in V(G)$ with $|ch(t)| = \{w\}$.
\end{corollary}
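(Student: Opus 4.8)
The plan is to combine Lemma~\ref{lemma_exactly_one_cop} with
Proposition~\ref{prop_eq_defs_shy_and_dec} and merely re-read the
proposition's construction under the additional hypothesis supplied by
the lemma. Concretely, since $\wmshyDAGWcops{G} = k$, the cops have a
winning strategy in the weakly monotone shy game with at most~$k$ cops
on the board at any time; by Lemma~\ref{lemma_exactly_one_cop} we may
assume this strategy~$\sigma$ always places exactly one new cop per
move, i.e.\ $|C'\setminus C| = 1$ for every cop move $(C,R)\to
(C,C',R)$ prescribed by~$\sigma$. (Strictly the lemma is stated for the
weakly monotone game rather than the shy variant, but the shy game only
restricts the robber and not the cops, so the same one-cop-per-move
normalisation applies; I would add a remark to this effect.)

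Next I would run the implication $(1)\Rightarrow(2)$ of
Proposition~\ref{prop_eq_defs_shy_and_dec} verbatim, but starting from
this normalised~$\sigma$ instead of an arbitrary winning strategy. The
construction there produces a decomposition tree~$T$ by following plays
consistent with~$\sigma$: the root~$r$ gets $R(r)=V(G)$ and $C(r)$ the
first cop placement, and each robber move to a component~$R'$ spawns a
child~$t'$ with $R(t')=R'$ and $C(t')$ the cops occupied after that
move. The width of the resulting decomposition is the maximal cop count,
which is~$k$ by hypothesis. The key new observation is that the chasers
at a node~$t$, namely $ch(t)=C(t)\cap R(t)$, are exactly the newly
placed cops that land inside the robber's current component; since
$\sigma$ places only one cop per move, at most one of these lands in
$R(t)$, so $|ch(t)|\le 1$. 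To secure $|ch(t)| = 1$ (a vertex~$w$ with
$ch(t)=\{w\}$) rather than $|ch(t)|\le 1$, I would note that nodes at
which the single placed cop is a guard rather than a chaser can be
handled by insisting the strategy never makes a vacuous chasing
move---or, more cleanly, by observing that a node~$t$ with
$ch(t)=\emptyset$ would have, by
Item~\ref{shy_dec_cover_robber}, $R(t)=\bigcup_i R(t_i)$, letting us
contract~$t$ into the tree structure without changing the covering
or blocking properties.

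The main obstacle I anticipate is the bookkeeping at this last point:
guaranteeing that \emph{every} node, including leaves, carries exactly
one chaser. A leaf~$t$ has no successors, so
Item~\ref{shy_dec_cover_robber} forces $R(t)=ch(t)$, and since the
robber is captured there, $R(t)$ is a single vertex occupied by a
chaser, giving $|ch(t)|=1$ automatically at leaves. For internal nodes
the subtlety is that a one-cop move may be a pure guarding move placing
no cop in the robber's component, producing a transient node with
$ch(t)=\emptyset$. I would dispose of these by the contraction argument
above: such a node can be spliced out, redirecting its children to its
parent, while Items~\ref{shy_dec_root}--\ref{shy_dec_block} are
preserved because the guards placed there remain available to all
descendants and the reachability/blocking condition
Item~\ref{shy_dec_block} only becomes easier when cop sets grow along
the tree. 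After this clean-up every remaining node has exactly one
chaser, and the width is unchanged, yielding the desired decomposition
and completing the corollary.
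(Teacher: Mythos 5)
Your overall route is exactly the one the paper intends: the paper derives this corollary in a single sentence from Lemma~\ref{lemma_exactly_one_cop} together with the strategy-to-decomposition translation of Proposition~\ref{prop_eq_defs_shy_and_dec}, and gives no further argument. Your observation that a one-cop-per-move strategy yields $|ch(t)|\le 1$ at every node of the extracted tree, and your analysis of leaves, are correct and in fact more careful than the paper, which silently ignores the case $ch(t)=\emptyset$. That case is genuinely present: Lemma~\ref{lemma_guarding-sequences} shows that optimal strategies may have to make several consecutive pure guarding moves, each of which creates a node with no chaser, so some repair of the tree is unavoidable.

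However, your contraction step has a gap. Suppose $t\to t'\to t''$ with $ch(t')=\emptyset$, so $R(t'')=R(t')$, and you splice $t'$ out, creating the edge $(t,t'')$. Item~\ref{shy_dec_block} for this new edge requires that $m(t)$ be unreachable from $R(t'')$ in $G-\bigl(C(t)\cap C(t'')\bigr)$, whereas the hypotheses you actually have are unreachability in $G-\bigl(C(t)\cap C(t')\bigr)$ and in $G-\bigl(C(t')\cap C(t'')\bigr)$. The guard placed at $t'$ lies in $C(t')\cap C(t'')$ but not in $C(t)$, hence not in $C(t)\cap C(t'')$; dually, a cop of $C(t)\cap C(t')$ that is lifted at step $t''$ also drops out of $C(t)\cap C(t'')$. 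A path from $R(t')$ to $m(t)$ that meets the old blockers only in such vertices would violate Item~\ref{shy_dec_block} for the spliced edge even though both original edges satisfied it. Your justification --- that the blocking condition ``only becomes easier when cop sets grow along the tree'' --- presupposes $C(t)\subseteq C(t')$ along edges, which is false for the strategies in question: $\sigma_\mb$, for instance, removes cops at essentially every step, and removing cops is the whole point of keeping their number bounded. To close the gap you need an additional argument, for example that the cops of $C(t)$ which survive into $C(t'')$ still block $R(t'')\to m(t)$ on their own (this is the kind of statement Lemmas~\ref{lemma_making_R_smaller} and~\ref{lemma_making_M_larger} provide for minimally blocking strategies), or a direct argument that the offending configuration cannot arise in a decomposition extracted from a pure, minimally blocking strategy.
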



We define an order on the vertices of a graph that corresponds the
order in which the robber is chased in some plays which are played
according to~$T$.  Let $G$ be a graph and let $T$ be its
shy-{}monotone tree decomposition with $|ch(t)| = \{t\}$, for each
$t\in V(T)$. Let $v$ and $w$ be two vertices of the graph. We say
that~$w$ is \emph{to the right} of $v$ (and $v$ is \emph{to the left}
of $w$) and write $v\totheright w$ 
   if
\begin{enumerate}[1)]
\item $w$ is on the path from the root of $T$ to $v$, or
\item there is a path from $w$ to $v$ in $G - m(\split{w,v})$.
\end{enumerate}
In other words, $w$ is to the right of $v$ in $G$ if, in a position in
which a chaser occupies $w$, there is a cop free path from $w$ to $v$. In the
decomposition, we have in that case either $v\in R(w)$ and $w\notin R(v)$, or
$\split{v,w} \notin \{v,w\}$ and there is a path from $w$ to $v$ that avoids
vertices above $\split{v,w}$, see Figure~\ref{fig_right_to_left} for the
explanation of our terminology.

Clearly, $\totheright$ is a partial order. We abuse the notation and denote any
linearisation of $\totheright$ also by~$\totheright$.

\begin{figure}
\begin{center}
\begin{tikzpicture}

\draw (-5,-5) -- (0,0) -- (5,-5);
\node[vertex] (split) at (0,-1){};
\draw (0,0) -- (split);
\node[vertex] (w) at (2,-3){};
\node[vertex] (v) at (-2,-3){};
\draw (w) -- (split) -- (v);
\draw (w) -- (3,-5) -- (1,-5) -- (w);
\draw (v) -- (-1,-5) -- (-3,-5) -- (v);

\node[vertex] (right) at (2.2,-4.4){};
\node[vertex] (left) at (-2.1,-4){};
\draw[slim-,decorate,decoration=snake] (right) -- (left);

\node (splitlabel) at (-2,-0.5){$\split{v,w}$};
\draw[ie] (splitlabel) -- (split);
\node at (1.6,-3){$v$};
\node at (-1.6,-3){$w$};

\draw[brace] (0.6,0) -- (0.6,-1);
\node at (2,-0.5){$m(\split{v,w})$};
 
\end{tikzpicture}
\caption{The order $\totheright$ ''to the right of''.}
\label{fig_right_to_left}
\end{center}
\end{figure}

As a next step, we show a simple, but useful property of a
shy-monotone tree decomposition. Informally, the following lemma says
that there is no path from left to right in the decomposition tree
which avoids common predecessors of the first and the last vertices on
the path.

\begin{lemma}\label{lemma_no_left_to_right}
If $v\totheright w$ and $\split{v,w} \notin \{v,w\}$, then $m(\split{v,w})$
blocks $v\to w$.
\end{lemma}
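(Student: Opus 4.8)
The plan is to argue by contradiction, using the defining clauses of $\totheright$ together with the nested strongly connected component structure that the tree $T$ encodes. First I would fix $s=\split{v,w}$ and use the hypothesis $s\notin\{v,w\}$ to discard clause~(1) of the definition of $v\totheright w$: if $w$ were on the path from the root to $v$, then $w$ would be an ancestor of $v$ (or $v$ itself), forcing $\split{v,w}\in\{v,w\}$. Hence clause~(2) applies, and there is a path $Q$ from $w$ to $v$ in $G-m(s)$. Since $s$ is a proper common ancestor of $v$ and $w$, there are two \emph{distinct} children $s_v,s_w$ of $s$ whose subtrees contain $v$ and $w$ respectively; as the components only shrink along tree edges (Item~\ref{shy_dec_sc}) this gives $v\in R(s_v)$ and $w\in R(s_w)$. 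Because $m(s)$ consists only of chasers on the path from the root to $s$ while $v$ lies strictly below $s$, we have $v\notin m(s)$, which is the disjointness half of ``$m(s)$ blocks $v\to w$''. For the path half I would assume towards a contradiction that some path $P$ from $v$ to $w$ avoids $m(s)$; concatenating $P$ with $Q$ then places $v$ and $w$ in a common strongly connected component of $G-m(s)$.

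The heart of the argument, and the step I expect to be the main obstacle, is to show that $R(s_v)$ and $R(s_w)$ are in fact \emph{distinct} strongly connected components of the whole digraph $G-m(s)$, which contradicts the conclusion just reached. I would isolate this as a structural claim: for every non-root node $t$ with parent $p$, the component $R(t)$ is a strongly connected component of $G-m(p)$ (and not merely of the induced subgraph $G[R(p)\setminus\{p\}]$). Normalising by Corollary~\ref{cor_exactly_one_cop} so that $ch(t)=\{t\}$, we have $m(t)=m(p)\cup\{t\}$ and $m(p)\cap R(t)=\emptyset$, so $R(t)$ really is a vertex subset of $G-m(p)$. The claim is proved by induction on the depth of $t$. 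In the base case $p$ is the root $r$, and by Item~\ref{shy_dec_root} and Item~\ref{shy_dec_sc} the component $R(t)$ is an SCC of $R(r)\setminus\{r\}=V(G)\setminus\{r\}=G-m(r)$. For the inductive step, let $pp$ be the parent of $p$; the induction hypothesis gives that $R(p)$ is an SCC of $G-m(pp)$, and $G-m(p)=(G-m(pp))-\{p\}$. The only connectivity fact I need is the elementary one that deleting a single vertex \emph{refines} the partition of a digraph into strongly connected components and can never merge two of them. Thus deleting $p$ breaks the SCC $R(p)$ into the SCCs of $G[R(p)\setminus\{p\}]$, which by Item~\ref{shy_dec_sc} and Item~\ref{shy_dec_cover_robber} are exactly the children components of $p$, while leaving every other component of $G-m(pp)$ unmerged. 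Hence each child component $R(t)$ is an SCC of $G-m(p)$, establishing the claim.

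Finally I would apply the structural claim with $t=s_v$ and $t=s_w$, both having parent $s$: the sets $R(s_v)$ and $R(s_w)$ are strongly connected components of $G-m(s)$, and they are distinct because $s_v\neq s_w$ and children components are disjoint. Since $v\in R(s_v)$ and $w\in R(s_w)$ therefore cannot lie in one strongly connected component of $G-m(s)$, the assumed $m(s)$-free path $P$ from $v$ to $w$ cannot exist. Consequently every path from $v$ to $w$ meets $m(s)$, and combined with $v\notin m(s)$ this is precisely the statement that $m(s)$ blocks $v\to w$. I would note, as a sanity check, that this argument relies only on the nested-component structure of the decomposition (Items~\ref{shy_dec_root}--\ref{shy_dec_cover_robber}) and the single-chaser normalisation, and \emph{not} on the weak-monotonicity condition Item~\ref{shy_dec_block}; weak monotonicity enters only indirectly, through clause~(2) in the definition of $\totheright$, which is what supplies the return path $Q$.
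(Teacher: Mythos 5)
Your proof is correct, and although it reaches the same final contradiction as the paper --- an $m(\split{v,w})$-free path from $v$ to $w$, closed up by the return path supplied by clause~(2) of the definition of $\totheright$, would put $v$ and $w$ into a common strongly connected component of $G-m(\split{v,w})$ --- it gets there by a genuinely different route. The paper argues locally on the hypothetical path $P'$ from $v$ to $w$: it picks the vertex $u\in P'$ for which $\split{u,w}$ is closest to the root, shows $P'\subseteq R(\split{u,w})$, and then argues that $u$ and $w$ lie in different components of $R(\split{u,w})-\split{u,w}$, concluding that $\split{u,w}\in P'\cap m(\split{v,w})$. You instead isolate a global structural claim --- for every non-root node $t$ with parent $p$, the set $R(t)$ is a strongly connected component of the whole digraph $G-m(p)$, not merely of the induced subgraph $G[R(p)\setminus ch(p)]$ --- proved by induction down the tree from the elementary fact that deleting a vertex only refines the SCC partition; the lemma then falls out immediately. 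Your globalised claim is a clean, reusable strengthening that the paper never states, and it quietly repairs a soft spot in the paper's write-up: there, the conclusion that $u$ and $w$ cannot mutually reach each other inside $R(\split{u,w})-\split{u,w}$ is drawn from a return path $w\to u$ that is only known to avoid $m(\split{u,w})$, not to stay inside $R(\split{u,w})$; with your claim the relevant components are SCCs of all of $G-m(\split{v,w})$, so excursions outside $R(\split{v,w})$ cause no trouble. One shared caveat: both your argument and the paper's implicitly use that distinct children of a node carry distinct (hence disjoint) components, which holds for decompositions extracted from strategies but is not spelled out in the definition; this does not affect correctness relative to the paper's own standard of rigour.
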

\begin{proof}
Let $P$ be a path from $w$ to $v$ and let $P'$ be a path from $v$ to $w$. We
show that $P'\cap m(\split{v,w}) \neq \emptyset$. Let $u\in G$ be a vertex
in $P'$ such that $\split{u,w}$ has a minimal distance from the root of the
decomposition tree. Then $P' \subseteq R(\split{u,w})$. Indeed, if there is a
vertex $u'\in P' \setminus R(\split{u,w})$, then $\split{u',R(\split{u,w})}$ is
nearer to the root than $\split{u,w}$ (by Item~\ref{shy_dec_sc} of the
definition of a shy-{}monotone tree decomposition) and thus nearer than
$\split{u,w}$ (as $w\in R(\split{u,w})$) contradicting the choice of~$u$.

As $\split{v,w} \notin \{v,w\}$ and by the definition of $\split{\cdot}$, $u$
and $w$ are in different components of $R(\split{u,w}) - \split{u,w}$. As there
is a path from $w$ to $u$ (concatenate $P$ with with prefix of $P'$ up to $u$),
there is no path from $u$ to $w$ in $R(\split{u,w}) - \split{u,w}$, \ie
$\split{u,w} \in P'$, so $P'\cap m(u,w) \neq \emptyset$. By the choice of $u$,
we have $m(u,w) \subseteq m(v,w)$, so $P'\cap m(v,w) \neq \emptyset$.
\end{proof}

If the robber leaves his component, he moves from the right to the
left in the decomposition tree. By property~(2) he can return to his
left component only via $ch(t)$ for some. However $ch(t)$ are vertices
where there are or have been chasers, so a winning cop strategy does
not allow the robber to visit them. Thus he cannot return.

\begin{lemma}\label{lemma_no_left_to_right}
  For every winning strategy~$\sigma$ in the weakly monotone game
  in every play $\pi = (C_0,R_0), (C_0,C_1,R_0), \ldots$
  consistent with~$\sigma$, if the robber leaves a component~$R$ with
  a move $(C_i, C_{i+1}, R_i) \to (C_{i+1}, R_{i+1})$, then
  the cops on $C_i\cap C_{i+1}$ block
  $R_{i+1}\to R$. Thus the robber will never be able to enter~$R_i$ again.
\end{lemma}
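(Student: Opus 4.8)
The plan is to establish the blocking statement first and then deduce the ``never again'' claim from it. Fix the winning strategy $\sigma$ and a play $\pi$ consistent with it; since $\sigma$ wins the weakly monotone game, $\pi\in\wmon(G)$, so I may invoke the weak\=/monotonicity condition at every index. Write $R=R_i$ for the component being left, so that ``leaving'' means $R_{i+1}\not\subseteq R_i$, and write $c(i)=C_{i+1}\cap R_i$ for the chasers placed into it. Two structural facts will be used repeatedly: $R_i$ is a strongly connected component of $G-C_i$, so $R_i$ avoids $C_i\cap C_{i+1}$ and any two of its vertices are joined by a path lying inside $R_i$ (hence avoiding $C_i$); and $R_{i+1}$, being a component of $G-C_{i+1}$, is disjoint from $C_i\cap C_{i+1}$, so the precondition $X\cap R=\emptyset$ of the blocking definition holds for $X=C_i\cap C_{i+1}$ and $R=R_{i+1}$.

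To prove that $C_i\cap C_{i+1}$ blocks $R_{i+1}\to R_i$ I argue by contradiction: assume a path $Q$ from $w\in R_{i+1}$ to $u\in R_i$ avoiding $C_i\cap C_{i+1}$. The robber's move already provided a path from $R_i$ to $R_{i+1}$ in $G-(C_i\cap C_{i+1})$, so together with $Q$ the sets $R_i$ and $R_{i+1}$ lie in a common strongly connected set $\hat R$ of $G-(C_i\cap C_{i+1})$; since the robber left $R_i$ we have $\hat R\supsetneq R_i$. As $R_i$ is a \emph{maximal} strongly connected subgraph of $G-C_i$, any such enlargement must route through a vertex of $C_i$ that is no longer on the board, i.e.\ a freed cop $y\in C_i\setminus C_{i+1}$ that is reachable from $R_i$ in $G-(C_i\cap C_{i+1})$. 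If $y$ was placed as a chaser at some earlier step $j<i$, i.e.\ $y\in c(j)$, this contradicts weak monotonicity at index $i$ directly, since that condition forbids reaching $c(j)$ from $R_i$ in $G-(C_i\cap C_{i+1})$.

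The step I expect to be the main obstacle is ruling out the remaining case, where the freed cop $y$ was placed as a \emph{guard} rather than a chaser: weak monotonicity says nothing about reaching old guard positions, so this case must be excluded using the winning property of $\sigma$ rather than monotonicity alone. Here I would exploit strong connectivity of $R_i$ again: from $u\in R_i$ the robber can reach the chaser set $c(i)\subseteq R_i$ along a path inside $R_i$, and I would show that any re-entry into $R_i$ therefore eventually lets the robber reach a genuine chaser, contradicting weak monotonicity; chaining these reachability facts cleanly is exactly what Lemma~\ref{lemma_transfer-down} and Lemma~\ref{transfer-up} are designed for. Granting the blocking statement, the final sentence then follows by strong induction over the rest of the play: any later re-entry move $(C_m,C_{m+1},R_m)\to(C_{m+1},R_{m+1})$ with $m>i$ would yield a path from $R_m$ into $R_i$ and thence, inside $R_i$, to $c(i)$, contradicting weak monotonicity at index $m$ with $j=i<m$. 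Hence the robber can never return to $R_i$.
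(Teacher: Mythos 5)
Your reduction is set up correctly: the escape path together with a hypothetical return path $Q$ avoiding $C_i\cap C_{i+1}$ yields a strongly connected set $\hat R\supsetneq R_i$ in $G-(C_i\cap C_{i+1})$, which by maximality of $R_i$ as a component of $G-C_i$ must contain some $y\in C_i\setminus C_{i+1}$; and the case $y\in c(j)$ for some $j<i$ is indeed killed outright by weak monotonicity at index $i$. But the guard case, which you yourself flag as the main obstacle, is a genuine gap, and the plan you sketch does not close it. Weak monotonicity constrains reachability of the chaser sets $c(j)$ only; a cop placed as a guard at a vertex that was never in any robber component is simply not covered by the condition, and the winning property of $\sigma$ alone gives no reachability information. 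Your proposed route --- walk from the re-entry vertex $u\in R_i$ inside $R_i$ to $c(i)$ and invoke weak monotonicity --- fails for three concrete reasons: (a) $c(i)=C_{i+1}\cap R_i$ may be empty, since the move at step $i$ may be a pure guarding move; (b) the instances of weak monotonicity that mention $c(i)$ are those at indices $m>i$ and are stated relative to $G-(C_m\cap C_{m+1})$, whereas your path only avoids $C_i\cap C_{i+1}$ --- these sets are incomparable in general, so no single instance of the condition is violated; and (c) even the segment ``inside $R_i$'' is only guaranteed to avoid $C_i$, not the later cop sets it would have to avoid. The same index mismatch undermines the induction you propose for the ``never re-enters $R_i$'' sentence, and Lemmas~\ref{lemma_transfer-down} and~\ref{transfer-up} cannot rescue it, because they only compose blocking facts you would already need to have.

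You should know that the paper does not prove this lemma formally either: it is preceded only by an informal paragraph phrased in terms of the shy-monotone tree decomposition, where the separators $m(t)$ between a component and the rest of the graph consist \emph{by construction} of accumulated chaser positions $ch(s)$. That extra structure is exactly what makes the guard case disappear there, and it is not available for an arbitrary winning strategy of the weakly monotone game, which is how the lemma is stated. So you have correctly isolated the step on which the statement actually hinges; what is still missing is either an argument that every vertex of $C_i\setminus C_{i+1}$ lying on such a return cycle must once have been a chaser, or a restriction of the lemma to strategies (for example shy-similar ones, or those induced by a shy-monotone tree decomposition) for which this holds by construction.
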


It is not known whether determining the \dagw of a graph is solvable in non-deterministic
polynomial time. For weak \dagw, however, it is. The argument is that 
shy-monotone tree decompositions
have polynomial size in the size of the graph.

\begin{theorem}
 Given a graph $G$ and a natural number $k$, it is in NP to decide whether~$G$ has weak \dagw at most $k$.
\end{theorem}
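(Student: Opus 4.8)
The plan is to certify weak \dagw by a shy-monotone tree decomposition and to show that such a certificate has polynomial size and is polynomial-time checkable. By \Cref{prop_eq_defs_shy_and_dec} a graph $G$ has weak \dagw at most~$k$ exactly when it admits a shy-monotone tree decomposition $(T,C,R)$ of width at most~$k$, and by \Cref{cor_exactly_one_cop} such a decomposition may be chosen with $|ch(t)|=1$ for every node $t\in V(T)$. This normalised decomposition is the intended NP witness; it remains to bound its size and to verify it efficiently.

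First I would bound the number of nodes of~$T$. I claim that every vertex of $G$ is the chaser of at most one node, so that $t\mapsto ch(t)$ embeds $V(T)$ injectively into $V(G)$ and hence $|V(T)|\le|V(G)|$. For two incomparable nodes $t_1,t_2$ the components $R(t_1)$ and $R(t_2)$ are disjoint, being contained in distinct strongly connected components split off at their lowest common ancestor by Item~\ref{shy_dec_sc}; since $ch(t_i)\subseteq R(t_i)$, their chasers are disjoint. For $t_1\prec t_2$, iterating Item~\ref{shy_dec_sc} along the connecting path yields $R(t_2)\subseteq R(t_1)\setminus ch(t_1)$, so again $ch(t_1)\cap ch(t_2)=\emptyset$. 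As every node carries exactly one chaser and $\bigcup ch(V(T))=V(G)$, this gives $|V(T)|\le|V(G)|$. Each node is labelled by $C(t),R(t)\subseteq V(G)$, so the full description length of the decomposition is $O(|V(G)|^2)$.

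The NP procedure then guesses a candidate tuple $(T,C,R)$ of this polynomial size and checks the defining conditions in polynomial time. The width test $|C(t)|\le k$ is immediate; Item~\ref{shy_dec_root} is a single set comparison; Item~\ref{shy_dec_sc} amounts to computing the strongly connected components of $R(t)\setminus ch(t)$ and matching them against the $R$-sets of the children; Item~\ref{shy_dec_cover_robber} is a union/equality test; and Item~\ref{shy_dec_block} is, for each tree edge $(t,t')$, a single computation of $\Reach_{G-(C(t)\cap C(t'))}(R(t'))$ checked for disjointness from $m(t)=\bigcup_{s\preceq t}ch(s)$. All of these use only strong-connectivity and reachability computations, which run by standard linear-time graph search, so the verifier is polynomial and the problem lies in NP.

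The main obstacle is the size bound of the second paragraph: a priori the decomposition tree could be large or branch unboundedly, and it is precisely the single-chaser normalisation of \Cref{cor_exactly_one_cop} together with the disjointness of robber components forced by Item~\ref{shy_dec_sc} that collapses the number of nodes to at most $|V(G)|$ and thus the certificate size to $O(|V(G)|^2)$. In contrast, for general DAG decompositions no analogous polynomial bound is known, which is exactly why NP membership is clear here but remains open for \dagw itself.
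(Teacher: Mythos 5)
Your overall route is the same as the paper's: the certificate is a shy-monotone tree decomposition, NP membership follows once such a decomposition of width $k$ is guaranteed to exist in polynomial size and its defining conditions are polynomial-time checkable, and your verification paragraph (SCC computations for Item~\ref{shy_dec_sc}, one reachability computation per tree edge for Item~\ref{shy_dec_block}) is correct and considerably more explicit than what the paper writes. The divergence is in the size bound, and that is where your argument has a genuine gap.

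You collapse $|V(T)|$ to $|V(G)|$ by making $t\mapsto ch(t)$ injective, which presupposes, via \Cref{cor_exactly_one_cop}, that \emph{every} node carries exactly one chaser. But \Cref{lemma_exactly_one_cop} only yields one \emph{cop} per move, and that cop may be a guard placed outside the robber component; such a move produces a node with $ch(t)=\emptyset$ and an unchanged component $R(t')=R(t)$. These guard-only nodes cannot in general be merged away at optimal width: \Cref{lemma_guarding-sequences} exhibits graphs on which the cops, with the minimal number of cops, must make several guarding moves in succession, and since the blocking condition (Item~\ref{shy_dec_block}) inspects only $C(t)\cap C(t')$ for consecutive nodes, each intermediate guard configuration must appear as its own node of~$T$. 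Your injection does not count these nodes, so the bound $|V(T)|\le|V(G)|$ is not justified (and the single-chaser normalisation cannot be taken literally for width-exactly-$k$ decompositions). The paper's own proof instead bounds the tree by $\calO(|G|^2)$ nodes --- at most $|G|$ chaser placements, with at most $|G|$ guard repositionings between consecutive chasers --- which still gives a polynomial certificate. Your argument is repaired by adding this second count, i.e., by bounding the length of a chain of nodes with $ch(t)=\emptyset$ between two chaser nodes (the paper asserts $|G|$ for this, itself rather tersely); everything else in your write-up, in particular the polynomial-time verifier, then goes through unchanged.
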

\begin{proof}
  The algorithm guesses the decomposition tree, which has size
  $\calO(|G|^2)$ (because for each new chaser the guards have to 
be moved at most at most $|G|$ many times) and checks in polynomial time 
whether it is correct. 
\end{proof}

    \subsection{From shy to weakly monotone game, shy-similar strategies}\label{sec_wmon}

First we define some conditions on the players' strategies that
can be assumed without loss of generality.

\begin{definition}
A \emph{chasing (guarding) move} of cops is a move where only chasing
(guarding) cops are placed. (Note that both sorts of cops may be taken.)
A cop strategy is \emph{pure} if it consists only of guarding and chasing
moves (and has no mixed moves).
\end{definition}

\begin{lemma}\label{lemma_pure_str}
If~$k$ cops have a winning strategy, then~$k$ cops have a pure winning strategy.
\end{lemma}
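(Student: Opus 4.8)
The plan is to prove the lemma by a local surgery on the given strategy: I will show that any single \emph{mixed} move can be replaced by a guarding move immediately followed by a chasing move (which in a degenerate case collapses to a second guarding move), without using more cops and without destroying the property of being winning. Applying this replacement to every mixed move of the given winning strategy produces a pure winning strategy; since every play is finite, this replacement process terminates. So fix a mixed cop move from a cop position $(C,R)$ to $(C,C',R)$. Split the newly placed cops into the \emph{chasers} $\kappa := C'\cap R$ and the \emph{guards} $\gamma := (C'\setminus C)\setminus R$, and write $D := C\setminus C'$ for the cops that are lifted off, so that $C' = (C\cap C')\cup\gamma\cup\kappa$ (a disjoint union). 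I replace this move by the two successive moves to the cop sets
\[
  C_1 \;:=\; (C\cap C')\cup\gamma
  \qquad\text{and then}\qquad
  C_1\cup\kappa \;=\; C'.
\]
The first move removes $D$ and places only $\gamma$; since $\gamma\cap R=\emptyset$ no cop lands in the current robber component, so it is a guarding move, and $|C_1| = |C\cap C'|+|\gamma|\le |C'|\le k$. The second move places only $\kappa$ and returns to $C'$, so neither intermediate position exceeds $k$ cops.

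The crux, and the main obstacle, is the extra robber move inserted after the guarding step: a priori the robber could scatter into a component of $G-C_1$ containing \emph{some but not all} chasers, which would turn the subsequent placement of $\kappa$ into a mixed move again. I would rule this out using the component structure. Because $C_1\cap R=\emptyset$ and $C_1$ removes no vertex of $R$ (we have $D\subseteq C$ and $\gamma\cap R=\emptyset$), the induced subgraph on $R$ is unchanged, so $R$ is still strongly connected in $G-C_1$ and therefore lies inside a single strongly connected component $\hat R$ of $G-C_1$; in particular $\kappa\subseteq R\subseteq\hat R$. Hence the component $R_1$ to which the robber moves after the guarding step is either $\hat R$ itself, in which case $\kappa\subseteq R_1$ and placing $\kappa$ is a chasing move, or it is disjoint from $\hat R$ (distinct strongly connected components are vertex-disjoint), in which case $\kappa\cap R_1=\emptyset$ and placing $\kappa$ is a guarding move. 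In both cases the second move is pure, as required.

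It remains to check that after these two pure moves the resulting position is one that the original strategy $\sigma$ can handle. After the second move the cops again occupy $C'$ and the robber sits in some component $R_2$ of $G-C'$. Since $C\cap C_1 = C\cap C' = C_1\cap C'$, the robber's reachability constraint in both the guarding and the chasing step is ``free of $C\cap C'$'', and a short chaining argument (the robber reaches $R_1$ from $R$ by a $(C\cap C')$-free path, and then reaches $R_2$ from $R_1$ by a path in $G-C_1\subseteq G-(C\cap C')$) shows that $R_2$ is reachable from $R$ by a directed path in $G-(C\cap C')$. Thus $(C',R_2)$ is exactly a position the robber could have reached in the single original move, and from it the pure strategy simply continues to follow $\sigma$. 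Intuitively, placing the guards before the chasers only ever restricts the robber and never helps him, so no genuinely new robber response is created. As $\sigma$ is winning and all plays are finite, the constructed pure strategy is winning with the same number~$k$ of cops.
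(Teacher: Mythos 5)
Your proof is correct, and its core idea coincides with the paper's: replace each mixed move by its guarding part followed by its chasing part, and simulate the original strategy on a shadow play. Where you genuinely diverge is in the one delicate spot, namely what to do when the robber relocates after the guarding half. The paper's proof at that point \emph{rewinds}: it removes the freshly placed guards and re-consults the original strategy as if the robber had moved to the new component before the cop move, and it then has to assert (without argument) that the robber can change components only finitely often so that this rewinding terminates. You instead push forward unconditionally: since the chasers $\kappa$ all lie in the single strong component $\hat R$ of $G-C_1$ containing $R$, whichever component $R_1$ the robber picks is either $\hat R$ or disjoint from it, so the second placement is entirely inside or entirely outside $R_1$ and hence still pure; and the identities $C\cap C_1=C\cap C'$ and $C_1\cap C'=C_1\supseteq C\cap C'$ show that the two-step excursion can only reach positions $(C',R_2)$ that were legal responses to the original single move, so the shadow play stays consistent with $\sigma$. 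This buys you a self-contained argument with no termination side condition, at the price of having to run the case distinction on $R_1$. The one point you leave implicit (the paper is equally terse here) is that the inserted intermediate positions do not violate weak monotonicity; this does follow from the same two identities, because everything reachable from $R_1$ in $G-C_1$ is already reachable from $R$ in $G-(C\cap C')$ and the inserted guarding steps contribute empty chaser sets, but it deserves an explicit sentence.
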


\begin{proof}
Assume an arbitrary winning strategy $f$ for $k$ cops in the weakly monotone game
on a graph $G$. At the beginning, only chasing moves are possible.
Later on, instead of a mixed move $(C,v)\to(C,C',v)$ ,where $C_c =
('\setminus C)\cap \flap(v,C)$ are the new chasers and $C_g =
(C'\setminus C)\setminus\flap(v,C)$ are the new guards, make first the guarding 
part,
\ie place cops on $C_g$ of the move. If the robber changes his component, take the cops 
from $G$ away and continue translating the strategy as if the robber
went to the new component one move ago, \ie before the cops move
to~$C'$. This is possible, as the cops were guarding and thus did not
change the component and thus the resulting position
is still consistent with~$f$. Note that the number of times 
the robber changes his component is finite. So assume w.o.l.g. that
the robber remains in his component. Then the cops make the chasing part of their move,
\ie the cops are placed on $C$. It is easy to see that every robber move
leads to position that is consistent with $f$. Further, no strong
non-monotonicity occurs. Thus the new strategy is winning for the cops.
\end{proof}

\begin{lemma}[cf.~\cite{PuchalaRab10}, Lemma $8$]\label{lemma_cool_robber}
In a weakly monotone \dagw game, if the robber has a winning
strategy~$\sigma$ against~$k$ cops, then he also 
has a strategy that never
prescribes to change his component if no cop was placed on a vertex
reachable for the robber in the previous move.
\end{lemma}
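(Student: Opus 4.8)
The plan is to turn an arbitrary winning robber strategy $\sigma$ into a \emph{lazy} strategy $\sigma'$ by running, alongside the real play, a reference play that is consistent with $\sigma$, while the robber in the real play refuses to move whenever the cops' move places no cop on a vertex currently reachable to him. Concretely, I would define $\sigma'$ together with a map sending each history $\pi'$ consistent with $\sigma'$ to a reference history $\pi$ consistent with $\sigma$, maintaining two invariants: (a) the robber occupies the same vertex $v$ in the last position of $\pi'$ and of $\pi$; and (b) the robber's reachable region in the real play dominates the one in the reference play, that is $\Reach_{G-C}(v)\supseteq\Reach_{G-\tilde C}(v)$, where $C$ and $\tilde C$ are the current cop sets in $\pi'$ and $\pi$.

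Cop moves in $\pi'$ are handled in two cases. If the cops move from $C$ to $C'$ without placing any cop inside $\Reach_{G-(C\cap C')}(v)$ (the \emph{non-threatening} case), then $\sigma'$ prescribes staying at $v$, which is legal since $v\notin C'$; the reference play is left untouched. Because no reachable vertex is occupied, and cops are only removed or added outside the current reach, the real reachable region does not shrink, so invariant (b) is preserved while the reference side is unchanged. If instead some newly placed cop lies in the robber's reach (the \emph{threatening} case), then in the reference play I let the cops jump directly from $\tilde C$ to $C'$ — a legal cop move, admissible because $\sigma$ wins against \emph{every} cop strategy — consult $\sigma$ for the robber's answer $v'$, and let the real robber move to the same $v'$. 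Since the reaches were dominated before this step, $v'$ is reachable in the real play as well, and after the synchronised move both plays share the cop set $C'$ and the vertex $v'$, re-establishing (a) and (b).

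Finally I would transfer the winning condition. The reference play $\pi$ is consistent with $\sigma$, hence won by the robber: either the robber is never captured, or the play violates weak monotonicity by reaching some earlier chaser. The chasers arise only in threatening steps, which are mirrored verbatim in $\pi$, so the collections of previously chased vertices are identical on both sides; and by invariant (b) the real robber reaches at least as much as the reference robber at every step. Consequently, if the reference robber is never captured then the real robber is never captured either (he is present at every synchronised vertex and merely waits in between), and if the reference robber reaches a former chaser, so does the real robber, breaking weak monotonicity in $\pi'$ as well. Either way $\pi'$ is a robber win, so $\sigma'$ is winning, and it has the desired laziness property by construction.

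The step I expect to be the main obstacle is keeping invariant (b) through cop moves in the \emph{reachability} game, where the reachable region during a move is taken in $G-(C\cap C')$ rather than $G-C'$: the cops that remain on the board differ between the real and the reference histories, so I must arrange the reference cop moves — in particular the jump $\tilde C\to C'$ in the threatening case — so that the set of on-board cops confronting the real robber is never larger than the one confronting the reference robber. Getting this bookkeeping right, and simultaneously ensuring that a witness for non-weak-monotonicity (itself phrased via reachability in $G-(C_i\cap C_{i+1})$) survives the stay-steps, is the delicate part; the rest is a routine induction on the length of the play.
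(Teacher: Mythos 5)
Your plan follows essentially the same route as the paper's proof: delay the robber's move whenever no cop lands in his reachable region, keep a reference play consistent with $\sigma$, and argue that the reachable region can only grow during the delay, so the move $\sigma$ prescribes is still available once a cop finally enters the region. The paper's own argument is a four-sentence sketch that never mentions the $G-(C\cap C')$ bookkeeping you flag; that obstacle is real but resolvable (e.g.\ simulate each threatening real move by an add-only reference cop move followed by a remove-only one, so the reference robber's escape paths stay inside $\Reach_{G-\tilde C}(v)$, where the extra real cops cannot lie), so your version is, if anything, more careful than the original.
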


\begin{proof}
Assume that $\sigma$ prescribes to move to a component $C$ although no cop was
placed in the reachability region of the robber. Change the strategy such that
the robber never moves in such positions. Obviously, some cop eventually must
be placed in the region, otherwise the robber wins. After this, the robber
can still move to the same component of the current position as from $C$. 
\end{proof}

\begin{proposition}\label{prop_strong_shy_weak}
If~$k$ cops have a winning strategy in the strongly monotone shy robber game
on~$G$, then $2k$ cops have a winning strategy in the weakly monotone game
on $G$.
\end{proposition}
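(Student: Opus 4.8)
The plan is to take a winning strategy $\sigma$ for $k$ cops in the strongly monotone shy game and to build from it a \emph{shy-similar} strategy $\tau$ for $2k$ cops in the (non-shy) weakly monotone game: $\tau$ mimics $\sigma$ for as long as the robber behaves shyly, i.e. never leaves his current strong component, and it employs the extra $k$ cops only to absorb those moves in which the robber exploits reachability to escape his component. By Lemma~\ref{lemma_pure_str} we may assume $\sigma$ is pure, and by Lemma~\ref{lemma_cool_robber} we may assume the robber never moves unless a cop is placed in his reachability region; together these let us compare the two games move by move and maintain a projection of the real play onto a play consistent with $\sigma$.

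First I would fix the transition mechanism. Each single cop move $C_i\to C_{i+1}$ prescribed by $\sigma$ is realised in the weakly monotone game by two sub-moves: the cops move from $C_i$ to $C_i\cup C_{i+1}$, and only afterwards release the redundant cops by moving to $C_{i+1}$. Since $|C_i|,|C_{i+1}|\le k$, at most $2k$ cops are ever on the board. The point of retaining $C_i$ while $C_{i+1}$ is added is that, during this sub-move, the intersection of consecutive cop sets is $C_i$, so the robber runs in $G-C_i$; because $\sigma$ is strongly monotone, the block property of shy-monotone tree decompositions (item~\ref{shy_dec_block}, via Proposition~\ref{prop_eq_defs_shy_and_dec}) guarantees that $C_i$ blocks the robber component $R_i$ from the set $m$ of all previously placed chasers. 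Hence no old chaser lies in $\Reach_{G-C_i}(R_i)$, and weak monotonicity cannot be violated by this sub-move, whether or not the robber decides to stay shy.

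Next I would handle escapes. If the robber remains inside $R_i$, the real play projects onto a shy play consistent with $\sigma$, and $\tau$ simply continues the simulation with $\le k$ cops. If instead the robber uses a directed path to leave $R_i$, he lands in some strong component $R'$ of $G-C_{i+1}$ contained in $\Reach_{G-C_i}(R_i)$; by the lemma asserting that a robber who leaves a component can never re-enter it, the cops retained on $C_i\cap C_{i+1}$ already block $R'\to R_i$, and by the previous paragraph $R'$ avoids all of $m$. I would then restart the simulation of $\sigma$ with initial robber component $R'$ — legitimate because $\sigma$ wins from every starting position — while the retained cops seal off the abandoned region. Because each escape moves the robber strictly leftward in the finite partial order $\totheright$ of Section~\ref{sec_dec}, only finitely many restarts occur, and at most one abandoned configuration need ever be held at a time, so the bound of $2k$ cops is preserved; when the simulation finally confines the robber to a single component he is captured exactly as in the shy play.

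The main obstacle is to verify that weak monotonicity survives the escapes and restarts, i.e. that across every transition no vertex that has ever held a chaser becomes reachable from the current robber component in $G$ minus the current cop intersection. This amounts to chaining the transfer lemmas (Lemmas~\ref{lemma_transfer-down} and~\ref{transfer-up}) with the behaviour of minimal blockers as the robber region shrinks and the forbidden set grows (Lemmas~\ref{lemma_making_R_smaller} and~\ref{lemma_making_M_larger}), so that the blocker inherited at an escape, together with the freshly placed chasers of the restarted simulation, continues to block the robber from the entire accumulated chaser set. The bookkeeping that at most one old configuration is retained — rather than a growing stack of them — is where the strong monotonicity of $\sigma$ is genuinely used, and is the step I would write out most carefully.
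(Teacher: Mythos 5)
Your high-level shape is right -- simulate the shy strategy $\sigma$, do nothing special while the robber behaves shyly, and spend the extra $k$ cops on the moments where he exploits reachability to jump components -- but the mechanism you use for those jumps is not the paper's, and it has a genuine gap. The paper does \emph{not} restart $\sigma$ from scratch on the new component $R'$. Instead it \emph{rewinds within the same play of $\sigma$}: it finds the latest cop move $(M_i,C_i,R_i)\mapsto(M_{i+1},C_i,C_{i+1},R_i)$ of the simulated shy play with $R_i\supseteq R'$, and then follows $\sigma$'s continuation from that position \emph{towards} $R'$, placing the guards of the position at which $R'$ appears while the current $\le k$ cops are still on the board (hence $2k$). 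This matters because weak monotonicity is measured against the set $M$ of \emph{all} chasers ever placed: the guards of that rewound position block $R'\to M_i$ precisely because $\sigma$ is strongly monotone \emph{at that position of that play}, and the chasers placed after step $i{+}1$ are unreachable from $R'$ because the shy play diverged from $R'$ there. A freshly restarted copy of $\sigma$ gives you none of this -- its opening moves are pure chasing moves with no guards, so nothing it places has any blocking relationship to the old chaser set, and you are forced to permanently retain the sealing set $F_1=C_i\cap C_{i+1}$.

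That forced retention is where your cop count breaks. If the robber escapes again during the restarted simulation, the new seal $F_2$ blocks the new robber component only from the \emph{restarted} simulation's chasers, not from the original $M$, so $F_1$ cannot be released; the abandoned configurations stack, and your claim that ``at most one abandoned configuration need ever be held at a time'' -- which you correctly identify as the step needing the most care -- is exactly the claim that fails. Your termination argument has the same problem: the order $\totheright$ is defined relative to a \emph{fixed} shy-monotone decomposition (equivalently, a fixed play tree of $\sigma$), so ``each escape moves the robber strictly leftward'' is not meaningful across independent restarts. Both issues evaporate once you keep every escape inside the single play tree of $\sigma$ as the paper does: the accumulated chaser set of the weak play is then always the chaser set of some position of $\sigma$, strong monotonicity of $\sigma$ hands you the required blocking for free, and only one ``old'' configuration (the $\le k$ cops currently on the board) ever needs to be held while the $\le k$ guards of the target position are installed. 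The two-sub-move realisation $C_i\to C_i\cup C_{i+1}\to C_{i+1}$ in your second paragraph is harmless but also not where the factor $2$ comes from.
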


\begin{proof} 
We say ``shy game'' for the strongly monotone shy robber game and ``weak game''
for the weakly monotone game.  We translate the moves of the robber from
the weak game to the shy game and the moves of the cops vice versa.
Let $\sigma$ be a pure winning strategy for the cops in the shy game.
We describe the new \emph{shy-similar} strategy $\shysim(\sigma)$ for the
weakly monotone game.

Consider a robber move $(M',C,C',R) \mapsto (M',C',R')$ in the weak game.
If $R' \subseteq R$, then we translate the move as
$(M',C,C',R) \mapsto (M',C',R')$ and take the next move according to $\sigma$, so
that nothing changes with respect to $\sigma$. Otherwise, \ie if $R'$ is not a subset or $R$,
we consider the latest move $(M_i,C_i,R_i)\mapsto(M_{i+1},C_i,C_{i+1},R_i)$ of
the cops, such that $R_i \supseteq R'$. (Since it is the latest such move,
we know that $R_{i+1} \not\supseteq R'$
.)
As $\sigma$ is strongly monotone, $M_i$ is blocked from $R'$ by $C' \setminus R$.
Furthermore, $M' \setminus C_{i+1}$, \ie the set of vertices where chasers have been
placed after position $(M_{i+1},C_i,C_{i+1},R_i)$, is not reachable from $R'$
either. Let us place all the guards for the position where $R'$ appears
in the continuation of the play from $(M_{i+1},C_i,C_{i+1},R_i)$ towards $R'$.
Due to Lemma~\ref{lemma_cool_robber} we can assume that the robber remains
on $R'$ during this time. After this move, we remove the other
guards (here weak non-{}monotonicity can occur)
and place the chasers as in the position for $R'$.
This is the only  place where (weak) non-monotonicity occurs.
We have the same position that would occur if the robber would have moved
to $R'$ in the shy-monotone game, and we continue to play $\sigma$ from there.
\end{proof}

\begin{definition}
A winning strategy $\sigma'$ is \emph{shy-similar} if there is a winning strategy $\sigma$
for the cops in the strongly monotone shy robber game such that $\sigma' = \shysim(\sigma)$
where $\shysim(\sigma)$ is the strategy that is constructed from $\sigma$ as shown in
Proposition~\ref{prop_strong_shy_weak}.
\end{definition}



\begin{corollary}
If~$k$ cops win the weakly monotone cops and robber game on~$G$,
then $6k$ cops win the weakly monotone game on~$G$.
\end{corollary}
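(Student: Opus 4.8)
The plan is to obtain the factor $6$ by composing the two comparison results established in this section, since $6 = 3 \cdot 2$. I read the hypothesis as asserting that $k$ cops win the weakly monotone \emph{shy} robber game, i.e.\ $\wmshyDAGWcops{G} \le k$, and the goal as producing a winning cop strategy for $6k$ cops in the (non-shy) weakly monotone game. The entire corollary is then a chaining of two already-proved implications, so the work is purely bookkeeping of the cop counts.

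First I would apply Corollary~\ref{cor_compare_games}, which states $\shyDAGWcops{G} \le 3 \cdot \wmshyDAGWcops{G}$. Combined with the hypothesis this gives $\shyDAGWcops{G} \le 3k$; concretely, the minimally-blocking strategy $\sigma_\mb$ converts the given weakly monotone shy strategy for $k$ cops into a strongly monotone shy strategy for $3k$ cops, the factor $3$ arising from the chasers together with the current and the previous minimal blockers.

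Next I would feed this $3k$-cop strongly monotone shy strategy into Proposition~\ref{prop_strong_shy_weak}, which turns any winning strategy for $m$ cops in the strongly monotone shy robber game into a winning strategy for $2m$ cops in the weakly monotone game. Taking $m = 3k$ yields a winning strategy for $6k$ cops in the weakly monotone game on $G$, realised through the shy-similar construction $\shysim$: one block of $3k$ cops replays the shy strategy while a second block temporarily holds the guards during the transition where the robber escapes to a larger component. Chaining the two bounds gives the asserted $6k$.

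I expect no real technical obstacle, as all the content lives in Corollary~\ref{cor_compare_games} and Proposition~\ref{prop_strong_shy_weak}; the corollary is merely their composition. The only point needing a moment's attention is verifying that the games line up, namely that the strongly monotone shy robber game produced as the conclusion of Corollary~\ref{cor_compare_games} is exactly the game required as the hypothesis of Proposition~\ref{prop_strong_shy_weak}. This is immediate from the definitions of the two games, so once the game identifications are checked, the multiplication of the constants $3$ and $2$ completes the argument.
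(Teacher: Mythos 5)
Your proposal is correct and follows essentially the same route as the paper: chain Corollary~\ref{cor_compare_games} (factor $3$) with Proposition~\ref{prop_strong_shy_weak} (factor $2$). The only difference is that the paper reads the hypothesis as $\wmDAGWcops{G}\le k$ for the non-shy game and therefore prepends the immediate observation $\wmshyDAGWcops{G}\le\wmDAGWcops{G}$ before applying the two results, a step your reading of the hypothesis makes unnecessary (and which in any case follows from the definitional inequalities stated in Section~\ref{sec:variants}).
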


\begin{proof}
  If $\wmDAGWcops{G}\le k$, then $\wmshyDAGWcops{G}\le k$ because the
  cops can use the same winning strategy. By
  Corollary~\ref{cor_compare_games}, $\shyDAGWcops{G}\le 3k$ and
by Proposition~\ref{prop_strong_shy_weak}, $\wmDAGWcops{G}
\le 6k$.
 \end{proof}	

%
%


    \subsection{Strongly monotone strategies: two attempts}\label{sec_fail_tries}

In this section, we use the decomposition defined above to construct
a strongly monotone winning strategy for a bounded number of cops. 
Our construction is a combination of
two approaches: leaving tied cops and freezing the context. Leaving cops
is a transformation of a strategy~$\sigma$ by not removing tied cops,
\ie those wo must be removed according to~$\sigma$, but
whose removal would immediately lead to strong non-{}monotonicity.
Freezing the context changes a given strategy by marking the current robber
component~$R$ and playing further only in~$R$, \ie omitting any
changes outside the component, until the robber leaves~$R$ or is
captured: the cops outside of~$R$ are ``frozen''. In particular, no
cops are placed outside the robber component. Obviously, both
transformations produce strongly monotone strategies, but may
use more cops than~$\sigma$. In the following we define both
approaches formally and show that, first, both taken independently lead to an
unbounded number of additional cops they introduce, but, second, they
can be combined into one transformation that uses only a quadratic
number of additional cops.

\subsubsection{Leaving tied cops is not enough} \label{subsec:lstay}

To make precise which cops are tied, we define the \emph{front} of
a subset~$X$ of vertices of a graph~$G$ with respect to~$R$. Let
$X,R\subseteq G$, $X\cap R = \0$. Then the front $\front_G(R,X)$ is the inclusion minimal
subset of~$X$ that blocks $R \to X$ in~$G$. If~$R = \{v\}$, we also
write~$\front_G(v,C)$. Let us prove that this set is unique.
Indeed, assume that two distinct minimal subsets $X_0 \subseteq X$ and $X_1
\subseteq X$ block $R \to X$. Then w.l.o.g. there is a vertex $v\in
X_0\setminus X_1$. As~$X_0$ is minimal, there is a
$X_0\setminus\{v\}$-free path from~$R$ to~$v$. As~$X_1$ blocks $R\to
X$, this path goes through a vertex $w\in X_1\setminus X_0$. However the
prefix of the path from~$R$ to~$w$ is $X_0$-free, which contradicts
that~$X_0$ blocks $R\to X$.

The \emph{leaving-cops} strategy $\sigma_\lc$\label{def_sigma_lc} is
as~$\sigma$, but it leaves the cops from $\front_G(v,C)$ on their
vertices. Here~$v$ is the robber vertex and~$C$ is the
placement of the cops. More formally, we define $\sigma_\lc$ as a memory strategy.
The memory stores the cop placement we would have
playing according to~$\sigma$. So a memory state is a set~$P\subseteq V$. 
Initially,~$P=\0$. When the robber moves,~$P$ does not change.
In a position~$(C,v)$ with a memory state~$P$, the new strategy prescribes
to move as if the position was~$(P,v)$, but removing only those cops that
are not reachable from the robber vertex. In other words, 
\[\sigma_\lc(C,v) = \front_G(v,C\cup \sigma(P,v))\,.\] 
Obviously, if~$\sigma$
is a strongly monotone winning strategy for~$k$ cops, then~$\sigma_\lc$
is strongly monotone winning strategy for~$k$ cops. If~$\sigma$ is a weakly 
monotone winning strategy, then so is~$\sigma_\lc$, but~$\sigma_\lc$ may use more cops.

It is not a priori clear whether there is a class of graphs and a
strategy~$\sigma$ such that~$\sigma$ uses a bounded and~$\sigma_\lc$ an unbounded number
of cops on graphs from that class. However, we show in this subsection
that~$\sigma_\lc$ can be arbitrarily bad compared to~$\sigma$.
The idea is to iterate the argument from~\cite{KreutzerOrd08},
with the (rough) correspondence between the graph $G_2$ in 
Figure~\ref{fig_against_leaving_cops} and their graph $D_p$ 
(Figure~1 in~\cite{KreutzerOrd08}) as follows. The component
$C_1$ in $G_2$ corresponds to $C_0$ in $D_p$, 
the component $R_1$ corresponds to $C_2$ in $D_p$,
$A_1$ corresponds to $C^1_1$, and finally
$C_2$ in $G_2$ to $C^2_1$ in $D_p$. Disregarding the sizes,
the only $D_p$-edges missing in $G_2$ are between $C^1_1$ and $C_0$,
which corresponds to connecting $A_1$ and $C_1$ in $G_2$.
While adding an edge from $C_1$ to $A_1$ is possible in $G_2$,
it is essential that no $A_1 \to C_1$ edge is present. But these edges,
corresponding to $C^1_1 \to C_0$ edges in $D_p$, are not important in $D_p$.

\begin{figure}
\begin{center}
\begin{tikzpicture}[scale=0.7]
\node[vertex,label=above left:$C_1$] (C_1)   at (0,10) {};
\node[vertex,label=above left:$C_2$] (C_2)   at (2,10) {};
\draw [-slim] (C_2) to (C_1);
\node (Cdots)                              at (4,10) {\dots};
\node[vertex,label=above left:$C_{n-1}$] (C_n-1) at (6,10) {};
\node[vertex,label=above left:$C_n$] (C_n)   at (8,10) {};
\draw [-slim] (C_n) to (C_n-1);

\draw [-slim] (C_n-1) to (Cdots);
\draw [-slim] (Cdots) to (C_2);

\node[vertex,label=above left:$R_1$] (R_1)   at (0,7.5)  {};
\draw [-slim] (C_1) to (R_1);
\node[vertex,label=above left:$R_2$] (R_2)   at (2,7.5)  {};
\draw [-slim] (C_2) to (R_2);
\node (Rdots)                              at (4,7.5)  {\dots};
\node[vertex,label=above left:$R_{n-1}$] (R_n-1) at (6,7.5)  {};
\draw [-slim] (C_n-1) to (R_n-1);
\node[vertex,label=above left:$R-n$] (R_n)   at (8,7.5)  {};
\draw [-slim] (C_n)    to (R_n);

\node[vertex,label=above left:$a_1$] (A_1)   at (0,5) {};
\node[vertex,label=above left:$a_2$] (A_2)   at (2,5) {};
\node (Adots)                              at (4,5) {\dots};
\node[vertex,label=above left:$a_{n-1}$] (A_n-1) at (6,5) {};
\node[vertex,label=above left:$a_n$] (A_n)   at (8,5) {};

\draw [-slim] (A_2)   to (A_1);
\draw [-slim] (Adots) to (A_2);
\draw [-slim] (A_n-1) to (Adots);
\draw [-slim] (A_n)   to (A_n-1);

\draw [-slim,bend left] (A_n)      to (A_2);
\draw [-slim,bend left=40] (A_n)   to (A_1);
\draw [-slim,bend left=20] (A_n-1) to (A_2);
\draw [-slim,bend left] (A_n-1)    to (A_1);

\draw [-slim,rounded corners] (A_1)   to (0,3) to (10,3) to (10,10) to (C_n);
\draw [-slim,rounded corners] (A_2)   to (2,3) to (10,3) to (10,10) to (C_n);
\draw [-slim,rounded corners] (A_n-1) to (6,3) to (10,3) to (10,10) to (C_n);
\draw [-slim,rounded corners] (A_n)   to (8,3) to (10,3) to (10,10) to (C_n);

\draw [-slim] (R_1)   to (A_1);
\draw [-slim] (R_2)   to (A_2);
\draw [-slim] (R_n-1) to (A_n-1);
\draw [-slim] (R_n)   to (A_n);

\end{tikzpicture}
\end{center}
\caption{Strategy~$\sigma_\lc$ uses an unbounded number of additional cops.}
\label{fig_against_leaving_cops}
\end{figure}
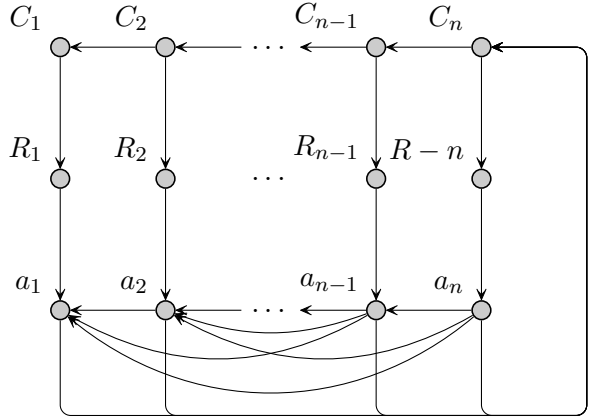

\begin{lemma}\label{lemma_against_leaving_cops}
Let $m\ge 1$. There is a class of graphs $(G_n)_{n>0}$ and winning 
strategies~$\sigma^n$ for $4m$ cops such that $\sigma^n_\lc$ uses 
$4m^2(n-1)$ cops.
\end{lemma}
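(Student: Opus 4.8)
The plan is to build $G_n$ by nesting, at every level of a recursion of depth $n-1$, a copy of the single non-monotonicity gadget of~\cite{KreutzerOrd08}; the graph of Figure~\ref{fig_against_leaving_cops} is this gadget (the case $G_2$), and $G_n$ is obtained from it by substituting a copy of $G_{n-1}$ for each robber block $R_i$. I would blow every labelled vertex up to a clique of $m$ vertices and every drawn edge up to the complete set of edges between the corresponding cliques, so that occupying one block costs exactly $m$ cops, and I would take the number of columns (the range of $i$) to be $4m$. Thus $G_1$ is a single $m$-clique and $G_n$ is the $m$-blown-up gadget whose $4m$ blocks $R_1,\dots,R_{4m}$ are copies of $G_{n-1}$.

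First I would give, by induction on $n$, a winning strategy $\sigma^n$ for $4m$ cops. $G_1$ is cleared by its $m$ cops. For $G_n$ the cops keep $m$ cops on the hub clique $C_{4m}$ --- the block to which every $a_i$ has a back-edge --- which turns $G_n$ minus the hub into a DAG and confines the robber to a single column; they chase him down the $C$-path with one chaser clique and one clique holding the current $C_i$, into the block $R_i\cong G_{n-1}$ he commits to, and recurse. The only delicate point, taken over verbatim from the $D_p$-strategy of~\cite{KreutzerOrd08}, is that to free the $4m$ cops the recursion needs, the hub clique must be lifted and later re-established; this is exactly where (weak) non-monotonicity is spent. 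Verifying that such a lift never lets the robber reach a clique previously held by a \emph{chaser} (guards, by contrast, may be re-reached, which weak monotonicity allows) is the step I expect to demand the most care. Granting it, $\sigma^n$ is a weakly monotone winning strategy for $4m$ cops, and, as noted when $\sigma_\lc$ was defined, $\sigma^n_\lc$ is again weakly monotone.

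Next I would analyse $\sigma^n_\lc$ through a single adversarial robber play and count the cops it is forced to hold simultaneously. The governing fact is the defining equation $\sigma_\lc(C,v)=\front_G(v,C\cup\sigma(P,v))$: a cop is kept exactly when it lies on the front, \ie when the robber can still reach it in $G$ along a path not cut off by a nearer cop. The transitive-tournament edges $a_j\to a_i$ $(j>i)$ and the back-edges $a_i\to C_{4m}$ ensure that, from the robber's current vertex, every guard-clique that $\sigma$ has ever deposited on some $a_i$ is reachable by an essentially private path and hence sits on the front; so $\sigma_\lc$ cannot drop it, although $\sigma$ legitimately drops it, relying on the hub to render it redundant. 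Exploiting the hub-lifts isolated in the previous paragraph, the robber tours all $4m$ columns of the current level, making $\sigma$ deposit and release a guard on each $a_i$; by the front argument $\sigma_\lc$ keeps all $4m$ of them, \ie $4m\cdot m=4m^2$ cops at that level. Recursing into one block per level and invoking Lemma~\ref{lemma_inclusion_minimal_blockers} (uniqueness of the front) to see that the retained cliques of distinct levels are distinct and simultaneously on the front, the per-level contributions add over the $n-1$ non-trivial levels to $4m^2(n-1)$.

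The hard part, which I would isolate as an explicit invariant carried through the inductive definition of $\sigma^n$, is the bookkeeping of the front across the nesting: I must show (i) that one hub-lift really lets the robber threaten all $4m$ guards of a level at a single instant, so that they are retained together rather than one after another, and (ii) that the nested back-edges keep the guards of the shallower levels reachable from deep inside the current block, so that the level contributions genuinely sum instead of being overwritten. Each of these is a reachability statement of the same flavour as Lemmas~\ref{lemma_transfer-down} and~\ref{transfer-up}, but it is in the nesting that a collapse of the sum or an off-by-$m$ error could hide, so the invariant --- a precise description of the set of cliques on the front after each move --- is what the whole count would rest on.
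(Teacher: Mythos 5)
Your construction is genuinely different from the paper's, and the two steps you defer are exactly where the argument is incomplete. The paper does not nest gadgets at all: its $G_n$ is a single flat chain of $n$ columns ($2m$-cliques $C_1,\dots,C_n$ with edges from $C_i$ to $C_{i-1}$, $3m$-cliques $R_i$, and $m$-element sets $A_i$ with edges $A_i\to A_j$ for $j<i$ and $A_i\to C_n$). The strategy sweeps left to right with $4m$ cops, depositing a guard set on $A_i$ in column $i$ and abandoning it when the robber jumps to the component containing $C_{i+1}$; since $A_n$, and hence every $A_j$, remains reachable from the robber component throughout the play, $\front_G$ retains every deposited $A_i$, and the extra cops accumulate with $n$ with no recursion and no budget question to settle.

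Your nested variant leaves two genuine gaps, which you flag but do not close. First, the cop budget: you need $\sigma^n$ to be a weakly monotone \emph{winning} strategy for $4m$ cops at every nesting depth simultaneously. Keeping $m$ cops on the hub of the current level, plus the chasing apparatus, plus whatever must remain at the enclosing levels so that the robber never regains a directed path to a previous \emph{chaser} position, is not obviously at most $4m$; ``lift the hub and re-establish it'' is precisely the move that either blows the budget or lets the robber reach an old chaser and void weak monotonicity, and nothing in the proposal rules out either failure. Second, the summation over levels: for the contributions to add, all $4m$ guard sets of every shallower level must be simultaneously on the (unique, inclusion-minimal) front from the robber's position deep inside the innermost block, via paths avoiding all other retained cops; this is a concrete reachability claim about edges you never specify, since you do not say which edges the substitution of $G_{n-1}$ for a robber block actually creates. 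Until the construction is written out and these two invariants are verified, the proof does not go through; the paper's flat construction avoids both problems entirely.
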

\begin{proof}
Consider the graph $G_n$ in Figure~\ref{fig_against_leaving_cops}.
It consists of $n$ cliques $C_1,\dots,C_n$ of size $2m$,
$n$ cliques $R_1,\dots,R_n$ of size $3m$, and $n$ independent sets
$A_1,\dots,A_n$ of size $m$ (we could also take cliques instead of
independent sets). Each clique $C_i$ is connected to the clique $C_{i-1}$,
for $i=2,\dots,n$, \ie there are edges from every vertex of $C_i$ to
every vertex of $C_{i-1}$. Furthermore, each $C_i$ is connected to $R_i$, each
$R_i$ to $A_i$, and each $A_i$ to $C_n$ for $i=1,\dots,n$.
Finally, each $A_i$ is connected to $A_j$ for $i=2,\dots,n$ and $j<i$.
    
The strategy $\sigma^n$ is as follows. At the beginning, $2m$ cops occupy $C_1$.
We can assume that the robber goes to $R_1$ because all other components are
reachable from $R_1$. Then $m$ cops occupy $A_1$. If the robber remains
in $A_1$, the cops from $C_1$ go to $R_1$ and capture the robber.
The other possibility for the robber is to switch to the component that
contains $C_2$. Now the cops from $A_1$ are taken away from the graph
(inducing weak non-monotonicity). The robber can only remain in his component.
Then $2m$ new cops occupy $C_2$, the robber goes to $R_2$, $m$ cops from $C_1$
occupy $A_2$, the robber switches to the component containing~$C_3$,
the cops are taken from $A_2$ and the rest of $C_1$ and placed on $C_3$
and so on. In the last step, the cops occupy $C_n$, the robber is in $R_n$
and $m$ cops occupy $A_a$. The robber switches to some $A_i$, but the cops
from $A_n$ expel him from any $A_i$ and the robber is captured in $A_1$.
Note that in every move during the described game, $A_n$ and thus all $A_i$
are reachable from the robber component. Hence, the robber sticking to
the same strategy as above (always switching to the new $C_i$),
the strategy $\sigma^n_\lc$ prescribes to leave cops on all $A_i$.
\end{proof}

\subsubsection{Freezing the context is not enough} \label{subsec:lgo}

Given a strategy~$\sigma$, the \emph{context freezing} strategy~$\sigma^\freeze$ is
obtained from~$\sigma$ as follows. We define two memory variables:~$P$ stores the
placement of cops as if we played according to~$\sigma$ (analogously
to the case of~$\sigma_\lc$) and $\calR = (R_1,\ldots,R_n)$ is a stack of memorized
robber components with $R_{i+1} \subset R_i$, for all~$i$.  Initially,
$P=\0$ and $\calR = ()$ is the empty stack. A robber move $(C,C',v)
\to (C',w)$ does not change~$P$ and~$\calR = (R_1\ldots,R_i)$ is updated by deleting all
$R_j$ with $w\notin R_j$. 

For the cop move, let $(C,v)$ be a position in a play
consistent with~$\sigma^\freeze$ played so far and let~$P$ and~$\calR
= (R_1,\ldots,R_i)$ be the current memory state. The variable~$P$ is
updated to $\sigma(P,v)$. We define $\sigma^\freeze$ by 
\[\sigma^\freeze(C,v) = \bigl(C\setminus \flap_G(v,C)\bigr)\cup
\bigl(\sigma(P,v)\cap \flap_G(v,C)\bigr)\,,\]
\ie ``the context'' ($G - \flap_G(v,C)$) is not changed and we
place cops as prescribed by~$\sigma$, but only within the robber component.
If $(\sigma(P,v) \cap R_i) \setminus \flap_G(v,C) = \0$ (all cops are
placed outside the robber component), the stack~$\calR$
remains unchanged. Otherwise we push $\flap_G(v,C)$ on the stack,
thus freezing the new context.

The next lemma states that changing an arbitrary weakly monotone winning
strategy~$\sigma$ to~$\sigma^\freeze$ may introduce an unbounded
number of additional cops: it is essential that the cops are placed
also in the context. The counter examples are double trees,
shown in Figure~\ref{pic_offhanded}. 

\begin{figure}
\begin{center}
\begin{tikzpicture}[scale=0.8]

\node[vertex] 		(epsilon)	 at (0,0) {};
\node[vertex,blue] 	(epsilon') at (-1,0) {};
\node[vertex] 		(1) at (-2,-1) {};
\node[vertex,blue] 	(1') at (-3,-1) {};
\node[vertex] 		(11) at (-4,-2) {};
\node[vertex,blue] 	(11') at (-5,-2) {};
\node[vertex] 		(111) at (-6,-3) {};
\node[vertex,blue]	(111') at (-7,-3) {};
\node[vertex] 		(1111) at (-8,-4) {};
\node[vertex,blue] 	(1111') at (-9,-4) {};
\node[vertex] 		(n) at (2,-1) {};
\node[vertex,blue] 	(n') at (1,-1) {};
\node[vertex] 		(1n) at (0,-2) {};
\node[vertex,blue] 	(1n') at (-1,-2) {};
\node[vertex] 		(11n) at (-2,-3) {};
\node[vertex,blue] 	(11n') at (-3,-3) {};
\node[vertex] 		(111n) at (-4,-4) {};
\node[vertex,blue] 	(111n') at (-5,-4) {};

\draw (epsilon) to (1);
\draw (epsilon) to (n);
\draw (1) to (11);
\draw (1) to (1n);
\draw (11) to (111);
\draw (11) to (11n);
\draw (111) to (1111);
\draw (111) to (111n);

\draw[<-,blue] (epsilon') to (1');
\draw[<-,blue] (epsilon') to (n');
\draw[<-,blue] (1') to (11');
\draw[<-,blue] (1') to (1n');
\draw[<-,blue] (11') to (111');
\draw[<-,blue] (11') to (11n');
\draw[<-,blue] (111') to (1111');
\draw[<-,blue] (111') to (111n');

\draw[<-,blue] (epsilon') to (epsilon);
\draw[<-,blue] (1') to (1);
\draw[<-,blue] (11') to (11);
\draw[<-,blue] (111') to (111);
\draw[<-,blue] (1111') to (1111);

\draw[<-,blue] (n') to (n);
\draw[<-,blue] (1n') to (1n);
\draw[<-,blue] (11n') to (11n);
\draw[<-,blue] (111n') to (111n);

\draw[->,green!70!black!70!black] (1') to (epsilon);
\draw[->,green!70!black] (n') to (epsilon);
\draw[->,green!70!black] (11') to (1);
\draw[->,green!70!black] (1n') to (1);
\draw[->,green!70!black] (111') to (11);
\draw[->,green!70!black] (11n') to (11);
\draw[->,green!70!black] (1111') to (111);
\draw[->,green!70!black] (111n') to (111);

\node (dots) at (-0.5,-1){\dots};
\node (dots) at (-2.5,-2){\dots};
\node (dots) at (-4.5,-3){\dots};
\node (dots) at (-6.5,-4){\dots};

\node (dots) at (1.5,-1.5){$\vdots$};
\node (dots) at (-0.5,-2.5){$\vdots$};
\node (dots) at (-2.5,-3.5){$\vdots$};
\node (dots) at (-4.5,-4.5){$\vdots$};
\node (dots) at (-8.5,-4.5){$\vdots$};

\end{tikzpicture}
\end{center}
\caption{$\DAGW(G_n)=4$ but the robber wins against $n$ cops if they never guard.}
\label{pic_offhanded}
\end{figure}
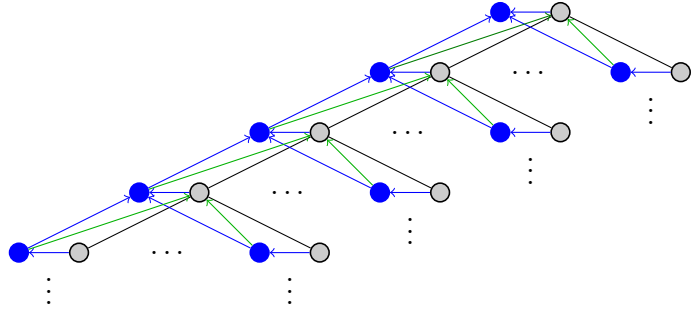

\begin{lemma}[\cite{PuchalaRab12}] \label{lemma:OffhandedCops}
There exist graphs $G_n$, for all $n \in \bbN$, such that $\DAGW(G_n) \leq 4$
but every winning strategy of the cops which is restricted to place cops
only inside the robber component uses at least $n+1$ cops.
\end{lemma}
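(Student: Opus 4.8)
The plan is to treat $G_n$ as two superimposed copies of a complete tree $T$ of branching degree $b>n$ and depth $n$: a ``black'' copy carrying the tree edges, and a ``blue'' copy $T'$ whose edges point towards the root, together with the connecting edges $x\to x'$ and the ``green'' edges $x'\to\mathrm{parent}(x)$ shown in Figure~\ref{pic_offhanded}. For a node $x$ write $S_x$ for the \emph{double-subtree} spanned by $x$, i.e.\ all black and blue descendants of $x$ together with $x$ and $x'$. The first structural fact I would isolate is that every edge leaving $S_x$ starts in $\{x,x'\}$ and ends in $\{\mathrm{parent}(x),\mathrm{parent}(x)'\}$; consequently the components of $G-\{x,x'\}$ inside $S_x$ are exactly the child double-subtrees, and placing cops on $\{x,x'\}$ (or on the two parent copies) confines the robber to $S_x$. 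The reason two cops rather than one are needed here is precisely the blue copy, which lets the robber climb back up out of $S_x$ through the green edges; this gadget is what makes guarding matter, and it drives both directions.

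For $\DAGW(G_n)\le 4$ I would give a cop-monotone strategy that ``leapfrogs'' down the branch chosen by the robber. Maintaining the invariant that two cops sit on the boundary $\{x,x'\}$ of the robber's current double-subtree $S_x$, the cops add two cops on the boundary $\{c,c'\}$ of the child subtree $S_c$ the robber has entered, and then lift the old pair $\{x,x'\}$. By the structural fact the robber's reachable set does not grow when $\{x,x'\}$ is removed (all exits of his component already meet $\{c,c'\}$), so the move is robber-monotone, and since no vacated vertex is ever reused it is cop-monotone. At most four cops are ever on the board, and after at most $n$ descents the robber is cornered in a leaf subtree; hence $\DAGW(G_n)\le 4$.

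For the lower bound I would exhibit a robber beating any $n$ cops that may place cops only inside the current robber component (equivalently, that freeze the context as in $\sigma^\freeze$, never touching a cop lying outside the robber's component). The robber keeps his position at the root $x$ of a \emph{cop-free} double-subtree $S_x$; whenever the cops place cops inside $S_x$, he retreats into a child subtree $S_c\subseteq S_x$ that is still cop-free, which exists because $x$ has $b>n$ children and there are at most $n$ cops. The decisive consequence of the restriction is that every cop that was inside $S_x$ but is not inside $S_c$ now lies outside the robber's component and can never be moved again. Since retreats happen at strictly increasing depths, the cops committed at different retreats occupy pairwise disjoint vertex sets, so after the robber has been pushed through $d$ levels at least $d$ cops are frozen. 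Because the root of the current subtree has more than $n$ children, the robber can always flee a landing cop and is never cornered at an internal node; capture is possible only after he has been driven to a leaf, i.e.\ through all $n$ levels. This already commits at least $n$ cops, leaving none free to occupy the final vertex, so $n$ cops never capture and at least $n+1$ are required.

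The contrast between the two directions is exactly the content of the lemma: the leapfrog strategy for DAG-width crucially \emph{lifts} the boundary pair $\{x,x'\}$ after the robber descends, whereas under the freezing restriction those cops are stranded outside the component and are lost for the remainder of the play. I expect the main obstacle to be the bookkeeping in the lower bound: making precise that shrinking the robber's component necessarily forces a placement inside it, and that such a placement is irrevocably frozen once the robber descends, while simultaneously guaranteeing that the robber always has a safe cop-free child and never enlarges his component in a way that would un-freeze a committed cop. Controlling this interaction with the upward blue/green escape routes, so that the committed cops really are forced at least one per level, is the delicate step.
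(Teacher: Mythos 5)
Your graph and your overall plan coincide with the paper's: the same two superimposed trees from Figure~\ref{pic_offhanded}, a four-cop top-down strategy for the upper bound, and a robber who always flees into a cop-free child subtree for the lower bound. Your upper bound is in fact spelled out more carefully than in the paper (which dismisses it as easy to see), and your boundary fact about $S_x$ is correct and does the work you ask of it.

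The lower bound, however, has a gap at exactly the point on which the lemma turns, and it is the point you yourself defer to ``the delicate step''. You justify the permanence of the left-behind cops by the restriction itself, asserting that ``place cops only inside the robber component'' is equivalent to ``never touch a cop lying outside the robber component''. It is not: the hypothesis restricts placement, not removal, so a cop stranded outside the component may a priori be lifted and re-used. What actually pins these cops down is the structure of the graph, and this is the paper's key observation: the moment a black vertex $v$ is occupied, its blue partner $v'$ drops out of the robber's strongly connected component and hence can never be guarded under the restriction; consequently the robber always retains a cop-free directed path (up the blue spine and across a green edge) from his current vertex to every occupied black ancestor, so lifting any such cop either destroys monotonicity or lets the robber run back up and reset the play. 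Without this argument your count of frozen cops has no force. A second, smaller problem is that your robber retreats whenever any cop lands inside $S_x$. A cop dropped deep in a sibling of the child you flee to need not end up outside your new component (the root $x$ may still be free), and even when it does it can be lifted harmlessly, since re-entering that sibling requires passing through $x$; so the cops ``committed'' by your retreats are not all frozen. The robber should hold his vertex until it is itself attacked, so that the committed cops are precisely the occupied roots $x_0,x_1,\ldots$ of the nested subtrees --- distinct black vertices, one per level, each still reachable from below via the blue copies --- and only then does ``one frozen cop per level'', and hence the bound $n+1$, follow.
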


\begin{proof}[Sketch]
Let, for $i\in\{0,1\}$ and $0<m,n\in \bbN$,
$ A(i,m,n) = (\{1, \dots, n\}\times \{i\})^{\le m}$ be the set of
all sequences of length at most $m$ over the alphabet $\{1, \dots, n\}$ labeled 
with $i$ (the labeling is used only to distinguish vertices).
Let, for a $v = (v_0,i),\dots,(v_l,i)\in A(i,m,n)$, $v'$ be the word
$(v_0,1-i),\dots,(v_l,1-i) \in A(1-i,m,n)$. Consider the following class
of directed graphs $G_n = (V_n, E_n)$ for $0< n\in\bbN$
(see Figure~\ref{pic_offhanded}). Hereby $V_n = T^0_n\cup T^1_n$ where 
$T^0_n = A(0,m+1,n)$ and $T^1_n = A(1,m+1,n)$.
The edges are defined by $E_n = E^0_n \cup E^1_n\cup E'_n$. Hereby 
$E^0_n = \{(v, vj),  (vj,v) \mid v \in A(0,n,n), j \in A(0,1,n)\}$, 
$E^1_n = \{(vj, v) \mid v \in A(1,n,n), j \in A(1,1,n)\}$, and
$E'_n = \{(v,v') \mid v\in A(0,m+1,n)\} \cup 
  \{(vj, v') \mid v \in A(1,n,n), j \in A(1,1,n)\}$.

The first statement of the theorem is easy to see. For the second one,
note that it makes no sense for the cops to leave out holes, \ie to place cops
on subtrees of $(T^0_n,E^0_n)$ rooted at a vertex~$v\in T^0_n$, but not on $v$. 
Indeed, due to the high branching degree,  the robber can switch between
subtrees of $v$ going into those having no cop in them until $v$ is occupied by
a cop. Clearly, in this position, there is no need to have cops in subtrees
other than the one with robbers in it. So we can assume that the cops play
top-down, \ie they never leave out holes. Then the robber strategy is just to
stay in the left-most branch. Note that after a vertex $v\in T^0_n$ is occupied
by a cop the vertex $v'\in T^1_n$ is not in the SCC of the robber any more.
It is easy to see that more and more cops become bounded, \ie for every cop
on a vertex $v$, there is a cop free path from the robber vertex to $v$.
\end{proof}


    \subsubsection{Combining leaving cops and freezing}\label{sec_weak_to_strong}

In this section we define a translation of a winning strategy in the
weakly monotone game to a winning strategy in the strongly monotone
game. Our solution is a combination of the approaches discussed in the
previous sections. First we describe the strategy informally.

Recall that $\front_G(R,X)$ is the inclusion minimal
subset of~$X$ that blocks $R \to X$ in~$G$.
The combined strategy~$\sigma_\lc^\freeze$ is obtained from~$\sigma$
as follows. Recall that robber components are defined with respect to
the set~$M$ of vertices that have been occupied by chasers. The cops start to play sticking to~$\sigma$ until the
robber changes his component. If~$\sigma$ prescribes to remove a cop
from some vertex~$v$ such that weak non-{}monotonicity occurs, the cop
is not removed (and neither any other tied cops, but let us concentrate on~$v$ for now).  The cops play further according
to~$\sigma$ as if the tied cops were removed until the robber chooses
a component~$R$ such that $v\notin R$. If that never happens and the
robber always chooses the component containing~$v$, then a new
cop, say occupying~$w$, can be tied only if the cop on~$v$ becomes not tied (and we can
continue with~$w$ in place of~$v$). Indeed, if~$\sigma$ is strongly
monotone against a shy robber, a cop can become tied only
if the (now not shy) robber changes his component; if the robber component~$R$
contains~$v$ and the robber leaves~$R$ towards some~$R'$, then~$v$ is
not reachable from~$R'$ by Lemma~\ref{lemma_no_left_to_right}.

When we have $v \notin R$, the context of~$R$ is frozen. In that
position we have at most~$k$ cops in the context of~$R$
including~$v$. Now the cops play according to~$\sigma$ restricted to~$R$
until the robber leaves it, or he is captured. Hereby, if a cop becomes unreachable
from the robber vertex, he is removed from the graph and can be reused
later. If the robber leaves~$R$ and enters another component~$R'$, the
placement of the cops outside~$R$  is still the same as when the
robber chose~$R$ (and not~$R'$), so he will be captured in or expelled from~$R'$ and
from any other such component in the same way as for~$R$, and the cops win.

It remains to see that the new strategy uses at most~$k^2$ cops. Our
argument is that any tied cop becomes untied before $k^2$
other cops become tied. When a cop on~$v$ is tied, we freeze at
most~$k$ cops and continue to play only within~$R$. Up to the
position when according to~$\sigma$ the last~$k$th cop enters~$R$,
we have enough cops by \emph{induction on the number of cops used
by~$\sigma$:} we need at most $(k-1)^2 = k^2-(2k-1)$ cops. If the robber is already captured in~$R$, we are
done. If he leaves~$R$ before the $k$th cops enters~$R$ according
to~$\sigma$, we argue for his next component as for~$R$. For the (most
interesting) case that the $k$th cop comes to~$R$, we are going to show that
the cop tied at the beginning, \ie that on~$v$, is now untied.
If the cop on~$v$ was still tied, he would be in particular reachable from
the robber vertex, so the robber can leave~$R$. We will see that then he can also
reach some vertices that induce strong non-{}monotonicity, which
contradicts the fact that~$\sigma$ is weakly monotone.

\paragraph{The combined strategy}

Let~$\sigma$ be a shy-{}similar strategy. We define the combined
strategy~$\sigmacomb$ by induction on the maximal number of cops that
appear in a play consistent with~$\sigma$. The combined strategy uses
the same memory as~$\sigma^\freeze$: it keeps track of a play
consistent with~$\sigma$ by memorizing a cop placement~$P$ and stores
the history of freezing in a stack~$\calR$. If~$p$ is a position, then
$P(p)$ and $\calR(p)$ are the values of~$P$ and~$\calR$, respectively,
in position~$p$. At the beginning of a play, $P(\0) = \0$ and $\calR(\0)=()$. A
robber move $(C,C',v) \to (C',w)$ does not change~$P$
and~$\calR = (R_1\ldots,R_i)$ is updated by deleting all $R_j$ with
$w\notin R_j$.

For the cop move, let $(C,v)$ be a position in a play
consistent with~$\sigmacomb$ played so far and let~$\calR(C,v)= (R_1,\ldots,R_i)$.
 The variable~$P$ is updated to $\sigma(P,v)$. In the cop move, there are two differences
to~$\sigma^\freeze$. The more substantial one is that now the cops are
placed also outside the robber component, but still not outside~$R_i$. The
other one is that we can change the cop placement in the context
removing some cops if this does not directly lead to
non-{}monotonicity. Note that removing those cops could also be
performed for~$\sigma^\freeze$. We did not do it to keep the
description of~$\sigma^\freeze$ simpler, but this would not
make~$\sigma^\freeze$ work. Formally we define
\[\sigmacomb(C,v) \coloneqq \front_G(v,C) \cup (\sigma(P(C,v),v) \cap R_i)\,.\]
Note that within~$R_i$, even~$\sigma$ prescribes only strongly
monotone moves and removes unreachable cops, so we could represent~$\sigmacomb$ 
in a way more similar to~$\sigma^\freeze$:
\[\sigmacomb(C,v) = (\hat C \setminus R_i) \cup (\sigma(P(C,v),v) \cap R_i)\]
where $\hat C = C \cap \front_G(v,C)$.

We update the stack~$\calR$ by pushing $\flap_G(v,C)$ on~$\calR$
if $(\sigma(P(C,v),v) \cap R_i) \setminus \flap_G(v,C) \neq \0$ and
let~$\calR$ unchanged otherwise.

As $\front_G(v,C) \subseteq \sigmacomb(C,v)$, it is immediately
clear that~$\sigmacomb$ is strongly monotone. It also guarantees a capture
of the robber because~$\sigma^\freeze$ does it and
$\sigma^\freeze(C,v) \subseteq \sigmacomb(C,v)$, so~$\sigmacomb$ 
prescribes to place a cop into the robber component again
and again. We have to prove that~$\sigmacomb$ uses at most~$k^2$ cops.

\begin{lemma}
If $\sigma$ is a shy-similar winning strategy for~$k$ cops, then $\sigmacomb$ uses at
most~$k^2$ cops.
\end{lemma}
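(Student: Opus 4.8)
The plan is to prove the bound by induction on the number~$k$ of cops that $\sigma$ uses, exactly along the informal description preceding the statement. Write $f(k)$ for the maximal number of cops that $\sigmacomb$ ever has on the board in a play when the underlying strategy $\sigma$ uses at most~$k$ cops; the goal is $f(k)\le k^2$. The base case $k=1$ is immediate, since with a single cop no chaser can ever shrink the robber component without capturing, so no freeze is pushed onto $\calR$ and $\sigmacomb$ just plays $\sigma$ with one cop. In the inductive step the target inequality will follow from the recursion $f(k)\le k+f(k-1)$ together with the hypothesis $f(k-1)\le(k-1)^2$, because $k+(k-1)^2=k^2-k+1\le k^2$.

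First I would isolate the invariant that describes the cops on the board. At any position $(C,v)$ reached by $\sigmacomb$ with stack $\calR=(R_1,\dots,R_i)$, the defining identity $\sigmacomb(C,v)=\front_G(v,C)\cup(\sigma(P(C,v),v)\cap R_i)$ splits $C$ into the \emph{frozen context cops}, namely the front $\front_G(v,C)$ that $\sigmacomb$ leaves behind, and the cops that are \emph{active} inside the innermost frozen component $R_i=\flap_G(v,C)$, which are placed exactly as $\sigma$ prescribes, restricted to $R_i$. The first thing to pin down is that the frozen part has at most~$k$ cops. Since $\sigma$ is weakly monotone, the placement recorded in the memory~$P$ blocks the robber from the set of past chaser positions, so it witnesses a blocker of size at most~$k$; by the minimal-blocker machinery of Section~\ref{sec_blocking} (in particular Lemma~\ref{lemma_min_extract} and the existence of $\mb$), the front that $\sigmacomb$ actually keeps is contained in a minimal blocker, hence also has size at most~$k$.

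The crux is to apply the induction hypothesis to the active cops inside $R_i$, and for this I must show that while the context of $R_i$ remains frozen, $\sigma$ uses at most $k-1$ cops inside $R_i$. Equivalently, I would prove the sub-claim that at the moment $\sigma$ would place its $k$th cop into $R_i$, the cop tied at the freezing point (on the vertex~$v$ that triggered the freeze, which lies in the context, outside $R_i$) has become \emph{untied}, i.e.\ is no longer reachable from the current robber vertex. This I would argue by contradiction: if all $k$ cops of $\sigma$ lie inside $R_i$ while $v$ were still reachable, the robber could leave $R_i$ along a cop-free path to $v$; but $v$ carried a chaser earlier, so reaching it from the current component is precisely a violation of weak monotonicity of $\sigma$. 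Lemma~\ref{lemma_no_left_to_right} supplies the complementary fact that once the robber has left the component in which $v$ lived, $v$ can never be reached again, which is what lets us keep replacing a single tied cop by its successor at each freezing level instead of accumulating tied cops. With the sub-claim in hand, throughout the frozen phase the inside play is a play of $\sigma$ restricted to $R_i$ using at most $k-1$ cops, so by the induction hypothesis $\sigmacomb$ uses at most $(k-1)^2$ cops there.

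Finally I would assemble the case analysis at a freezing level: either the robber is captured inside $R_i$, and we are done; or he leaves $R_i$ before the $k$th cop enters, in which case the very same argument applies verbatim to his next component while the context placement is unchanged, so the cops still win; or the $k$th cop does enter $R_i$, at which point the sub-claim unties~$v$, the frozen context is released, and the nesting continues on the strictly smaller region $R_i$. In each branch the cops on the board number at most $k$ (frozen) plus at most $(k-1)^2$ (active), giving $f(k)\le k+(k-1)^2\le k^2$. I expect the main obstacle to be exactly the reconciliation of the last case with the counting: one must verify that a full set of $k$ active cops inside $R_i$ never coexists with a nonempty frozen context (so that the $k^2$ budget is never exceeded), which rests entirely on formalising ``tied/untied'' in terms of reachability from the robber vertex and extracting the weak-monotonicity contradiction cleanly, and on checking that a cop is never double-counted as both a front cop and an active cop of a deeper level.
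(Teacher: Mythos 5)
Your proposal follows essentially the same route as the paper's proof: induction on the number $k$ of cops used by $\sigma$, the split of the board into at most $k$ frozen front cops plus an active play inside the innermost frozen component handled by the induction hypothesis with $(k-1)^2$ cops, and the key sub-claim that the moment $\sigma$ places its $k$th cop inside $R_i$ the tied cop becomes unreachable, derived from weak monotonicity together with Lemma~\ref{lemma_no_left_to_right}. The case analysis (captured inside, leaves before the $k$th cop enters, or the $k$th cop enters and unties) and the final count $k+(k-1)^2\le k^2$ also match the paper's argument.
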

\begin{proof}
We show that, for every shy-similar winning strategy~$\sigma$ for~$k$ cops in the
weakly monotone game and every graph~$G$, every tied cop in a play
consistent with~$\sigmacomb$ on~$G$ becomes not reachable from the robber
vertex before $k^2$ new cops are tied. Clearly, this implies the
statement of the lemma. 

The proof is done by induction on~$k$. Without loss of generality,
assume that~$G$ is strongly connected (otherwise repeat the
argument for every strongly connected component).
Consider a fixed play~$\pi$ consistent with~$\sigmacomb$. If~$k=1$,
then there are no tied cops and the statement is trivial. Let $k>1$.
Let $(C_0,C_1,v_0)\to(C_1,v_1)$ be a move in~$\pi$ such that 
$\calR_0 = \calR(C_0,v_0)=(R_1,\ldots,R_i)$ and $\calR_1 =
\calR(C_1,v_1) = (R_1,\ldots,R_j)$ with $j\le i$. 
Assume that the move results in a new tied cop on~$v\in C_1$. 

Either~$v$ is in the robber component until the end of~$\pi$, or
not. We show that in the first case there are no further tied cops, so
we are done. As~$\sigma$ is shy-{}similar, tied cops appear only when
robber changes his component, compare the proof of
Proposition~\ref{prop_strong_shy_weak}. If another cop on~$w$ becomes
tied, then the robber changes his component, say from~$R_1$
to~$R_2$. But then either~$v\notin R_1$, or~$v\notin R_2$, so there is
a position in that the robber is not in the same component as~$v$. 

For the other case, let~$(C_2,C_3,v_2)\to(C_3,v_3)$ be the first move
such that $v\notin \flap(v_3,C_3)$. Let $\calR(C_3,v_3) = (R_0,\ldots,R_m)$.
Without loss of generality we can assume that the rest of~$\pi$ is played in~$R_m$.
Otherwise, until the robber leaves~$R_m$ and goes to some
component~$R'$, cops are placed only in~$R_m$, which is not reachable
from~$R'$ without introducing strong non-{}monotonicity (by
Lemma~\ref{lemma_no_left_to_right}), so we would repeat our
arguments for~$R_m$ for~$R'$.

Until~$\sigma$ prescribes to place~$k$ cops in~$R_m$, \ie while
$|\sigma(P,v)\cap R_m| < k$, by the induction hypothesis for~$\sigma$ and~$R_m$, we have at
most $(k-1)^2$ tied cops. Note that~$R_m$ is strongly connected, so we
do not violate our assumption that the graph on that we play is
strongly connected. Consider the first move $(C_4,v_4) \to
(C_4,C_5,v_4)$ with $|P(C_5,v_4) \cap R_m| =k$. We want to show that
then~$v$ if not reachable from~$v_4$, which means that the cop on~$v$
is not tied any more.

As~$G$ is strongly connected, there is a path~$P$ from~$v$
to~$v_4$ in~$G$. Recall that $v\notin R_m$ by the case distinction, but by our assumption
that the remaining of the play takes place in~$R_4$, we have $v_4\in
R_m$. If~$v$ were reachable from~$R_m$ in $G - P(C_5,v_4)$, by
Lemma~\ref{lemma_no_left_to_right}, cops outside of~$R_m$
block~$\{v_4\}\to R_m$. However, in position $(P(C_5,v_4),v_4)$ all cops are
in~$R_4$, a contradiction.
\end{proof}

We can count the number of additional cops more accurately. For the
first tied cop we need to freeze at most~$k$ new cops, for the next
tied cop at most~$k-1$ cops and so on, so in total, we can come up
with $k^2/2 + k/2$ cops. Finally, we obtain the desired result.

\begin{theorem}\label{thm_main}
If $k$ cops have a winning strategy in the weakly monotone shy robber game,
then $18k^2+3k$ cops have a winning strategy in the strongly monotone game.
\end{theorem}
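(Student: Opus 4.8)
The plan is to compose the three main conversion results established above, tracking the blow-up in the number of cops at each step. Starting from a winning strategy for $k$ cops in the weakly monotone shy robber game, I would first invoke Corollary~\ref{cor_compare_games}, which gives $\shyDAGWcops{G} \le 3\cdot\wmshyDAGWcops{G}$, to obtain a winning strategy for $3k$ cops in the (strongly monotone) shy robber game. This is the step where blocking minimally pays off: the derived strategy $\sigma_\mb$ keeps the guards on the minimal blockers of the current and previous robber spaces, and Lemmas~\ref{lemma_making_R_smaller} and~\ref{lemma_making_M_larger} guarantee that these placements stay monotone, at a cost of only a factor $3$.

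Next I would apply Proposition~\ref{prop_strong_shy_weak} to turn this strongly monotone shy strategy for $3k$ cops into a winning strategy for $2\cdot 3k = 6k$ cops in the (non-shy) weakly monotone game. The point here is not merely that some weakly monotone strategy results, but that the output is a \emph{shy-similar} strategy, i.e.\ one of the form $\shysim(\sigma)$ for a strongly monotone shy strategy $\sigma$. This is exactly the hypothesis required by the final construction, so one must be careful to carry this property through rather than just the bare cop count.

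Finally I would feed this shy-similar strategy for $m := 6k$ cops into the combined construction $\sigmacomb$ that leaves tied cops while freezing the context. The lemma bounding the number of cops used by $\sigmacomb$, in its sharpened form, states that a shy-similar winning strategy for $m$ cops yields a \emph{strongly monotone} winning strategy using at most $m^2/2 + m/2$ cops: each newly tied cop forces at most $m$, then $m-1$, and so on down to $1$, additional frozen cops before it becomes untied, summing to $m(m+1)/2$. Substituting $m = 6k$ gives $(6k)^2/2 + 6k/2 = 18k^2 + 3k$, which is exactly the claimed bound, and $\front_G(v,C) \subseteq \sigmacomb(C,v)$ ensures the resulting strategy is strongly monotone and still captures the robber.

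The substantive difficulty---showing that any tied cop becomes unreachable from the robber before the quadratically many further cops are tied, which rests on the induction on the number of cops and on Lemma~\ref{lemma_no_left_to_right}---has already been discharged in the preceding lemma, so at this stage the theorem is a composition and no new argument is needed. The only things genuinely requiring attention are bookkeeping: that the strategy produced by Proposition~\ref{prop_strong_shy_weak} is shy-similar (so the $\sigmacomb$ analysis applies to it), and that the refined count $m(m+1)/2$ rather than the crude $m^2$ is used, since it is precisely this sharpening that makes the constants collapse to the stated $18k^2+3k$.
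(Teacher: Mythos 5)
Your proposal is correct and follows exactly the paper's own route: Corollary~\ref{cor_compare_games} (factor $3$), then Proposition~\ref{prop_strong_shy_weak} (factor $2$, producing a shy-similar strategy), then the combined leaving-cops-plus-freezing construction with the sharpened count $m(m+1)/2$ for $m=6k$. The two bookkeeping points you flag—carrying shy-similarity through the middle step and using the refined count—are precisely what the paper relies on to land on $18k^2+3k$.
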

\begin{proof}
  Assume that $k$ cops have a winning strategy $\sigma$ in the weakly
  monotone shy robber game. Then by Corollary~\ref{cor_compare_games}
  $\mb(\sigma)$ is a winning strategy for $3k$ cops in the strongly
  monotone shy robber game. By Proposition~\ref{prop_strong_shy_weak}
  one needs $2\cdot 3k = 6k$ cops to win weakly monotonically. Finally
  one needs $((6k)^2+6k)/2 = 18k^2+3k$ cops to win strongly monotonically.
\end{proof}


\section{Comparing Width Measures with Respect to Generality}\label{sec:ordering}

    This section is devoted to the question, given two measures~$a$
and~$b$, whether the class of graphs with bounded values of~$a$ is a
subclass of the class of graphs with bounded values of~$b$.

\subsection{Comparing \Dagw and \Kw}\label{sec_cmp_Kelly}

\Kw is a complexity measure for directed graphs introduced by Hunter 
and Kreutzer in~\cite{HunterKre08}. \Kw is similar to \dagw and can be
defined by a decomposition, by a graph searching game and by an
elimination oder, similar to \tw. 

An \emph{elimination order}~$\elorder$ for a graph $G =
(V,E)$ is a linear order on~$V$. For a vertex~$v$ define $V_{\elorderg
  v} \coloneqq \{u\in V \mid v\elorder u\}$. The
\emph{support} of a vertex~$v$ with respect to~$\elorder$ is
\[\supp_{\elorder}(v) \coloneqq \{u\in V \mid v\elorder u \text{ and there is }
v'\in\Reach_{G-V_{\elorderg v}}(v)\text{ with }(v',u)\in E\}\,.\] The
\emph{width} of an elimination order~$\elorder$ is $\max_{v\in
  V}|\supp_{\elorder}(v)|$. The \emph{\kw} $\KW(G)$ of~$G$ is one plus the
minimum width of an elimination order of~$G$.

Hunter and Kreutzer conjecture
in~\cite[Conjecture~$30$]{HunterKre08} that \dagw and \kw bound each
other by a constant factor. More generally, the question is whether
there is a function~$f\colon \bbN\to\bbN$ such that, for every
graph~$G$, we have 
\begin{enumerate}[(1)]
\item $\DAGW(G) \le f\cdot\KW(G)$ and
\item $\KW(G) \le f\cdot \DAGW(G)$. 
\end{enumerate}
In~\cite{HunterKre08} it is shown
that if $\KW(G) = k$, then $2k-1$ cops have a (possibility
non-{}monotone) winning strategy~$\sigma$ in the \dagw game. We demonstrate
that~$\sigma$ is, in fact, weakly monotone, thus answering the first
question affirmatively.

\begin{theorem}\label{thm_kw_to_dagw}
If $\KW(G) = k+1$, then $\wmDAGW(G) \le 2k+1$.
\end{theorem}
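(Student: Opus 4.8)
The plan is to prove the statement by showing that the winning strategy $\sigma$ for $2k+1$ cops already constructed in \cite{HunterKre08} from an elimination order is in fact weakly monotone; the theorem then follows, since $\KW(G)=k+1$ furnishes an elimination order $\elorder$ of width $k$, i.e.\ $|\supp_\elorder(v)|\le k$ for every $v$. Throughout I would write $S_v := \Reach_{G-V_{\elorderg v}}(v)$ for the set of vertices reachable from $v$ using only $\elordereq v$-vertices. The single structural fact I would isolate first is that $\supp_\elorder(v)$ is exactly the out-boundary of $S_v$: if $(v',u)\in E$ with $v'\in S_v$ and $u\notin S_v$, then $u\elorderg v$ (otherwise $u$ would be reachable below $v$ and lie in $S_v$), and hence $u\in\supp_\elorder(v)$ by definition. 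Consequently $\supp_\elorder(v)$ blocks $S_v\to V(G)\setminus S_v$ in the sense of Definition~\ref{def-blocking-order}, so once the cops occupy $\supp_\elorder(v)$ a robber inside $S_v$ can never leave it.

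Next I would recast $\sigma$ in these terms. When the robber occupies a (strongly connected) component $R$ of $G$ minus the cops, let $v$ be the $\elorder$-greatest vertex of $R$; since $R$ is strongly connected and entirely $\elordereq v$, we have $R\subseteq S_v$. The cops keep $\supp_\elorder(v)$ occupied---these are \emph{guards}, as $\supp_\elorder(v)\subseteq V_{\elorderg v}$ is disjoint from $R$---and additionally place a cop on $v$ itself, which lands inside the robber component and is therefore a \emph{chaser}, i.e.\ $v\in c(i)=C_{i+1}\cap R_i$. The robber is thereby pushed into a new component $R'\subseteq S_v\setminus\{v\}$ whose $\elorder$-maximum $v'$ satisfies $v'\elorder v$, and the cops recurse on $v'$, releasing any previously placed cop not needed to guard $S_{v'}$. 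The doubling to $2k+1$ arises from retaining the old support while installing the new one during these transitions, and I would simply quote this count from \cite{HunterKre08}. The fact I must extract is that \emph{the chasers are exactly the cops placed on the successive anchor vertices $v,v',v'',\dots$ (and the final capturing cops), while every other placement is a guard occupying a support set.}

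The heart of the proof is then to verify the weak-monotonicity condition: for $j<i$, no vertex of $c(j)$ is reachable by a directed path from $R_i$ in $G-(C_i\cap C_{i+1})$. Each past chaser is an anchor $u$ that was the $\elorder$-maximum of the robber component when it was placed, and at every later stage the robber sits in some $S_w$ with $w\elorder u$. As all of $S_w$ is $\elordereq w\elorder u$, any directed path from $R_i\subseteq S_w$ back to $u$ must leave $S_w$ and hence pass through its out-boundary $\supp_\elorder(w)$, which is precisely the guard set held while the robber is in $S_w$. I would argue that these guards persist across the relevant move, i.e.\ lie in $C_i\cap C_{i+1}$, so the path is blocked and $u$ is unreachable from $R_i$. (If instead $u$ itself still guards the boundary, i.e.\ $u\in\supp_\elorder(w)$, then the former chaser has simply been reused as a guard and the blocking conclusion is immediate.) This is exactly weak monotonicity; it is \emph{not} full cop- or robber-monotonicity, precisely because support cops are lifted and re-placed as the robber passes between sibling components.

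The main obstacle I anticipate is the bookkeeping in the previous paragraph: showing that whenever an earlier chaser $u$ would be reachable, the blocking support $\supp_\elorder(w)$ is genuinely held throughout the robber's move, so that the non-monotone lifting and re-placing of support cops never momentarily opens a directed path from the robber's region to a previous chaser. Concretely, during a guard-swap (releasing part of $\supp_\elorder(v)$ and installing $\supp_\elorder(v')$) one must check that $C_i\cap C_{i+1}$ still blocks $R_i$ from every earlier anchor; this reduces to the transfer properties of blockers established in Lemma~\ref{lemma_transfer-down} and Lemma~\ref{transfer-up}, applied to the nested supports of $v'$ and $v$. Everything else---the existence of the elimination order, the bound of $2k+1$ cops, and that $\sigma$ is winning---is either the hypothesis or quoted directly from \cite{HunterKre08}.
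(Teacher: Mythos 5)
Your proposal is correct and follows essentially the same route as the paper: both take the $2k+1$-cop strategy of Hunter and Kreutzer induced by an elimination order of width $k$, identify the cops placed on the successive $\elorder$-maximal ``anchor'' vertices of the robber's components as the only chasers and the support sets as guards, and verify weak monotonicity by observing that the retained support of the most recent anchor confines the robber to a region all of whose vertices are $\elorder$-below every earlier anchor. The paper packages the bookkeeping you flag as the ``main obstacle'' into an explicit invariant on the persistent blocker team $B$ (namely that $B$ blocks the robber from $V_{\elorderg \min_\elorder(B)}\setminus B$ across the guard swaps of a chasing round), which is exactly the verification your sketch calls for.
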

\begin{proof}
Our proof follows the proof of Theorem~$20$
from~\cite{HunterKre08}, which shows that an elimination order of
width~$k$ induces a (possibly non-{}monotone) strategy for $2k+1$ cops
in the \dagw game. What we prove additionally is just that the constructed
strategy is weakly monotone.

Let $\elorder$ be an elimination order for~$G$ of width~$k$. We
define a weakly monotone winning strategy~$\sigma$ for $2k+1$ cops in
the weakly monotone game on~$G$.

Any play consistent with~$\sigma$ can be partitioned into two kinds of rounds:
the blocking rounds and the chasing rounds. A blocking round consists
of a blocking cop move and an answer of the robber. A chasing round may
contain a longer sequence of moves. The cops are divided into two
teams: a team of~$k+1$ blockers and a team of $k$ chasers. While a play proceeds, a cop
may change his team. 
 
During the play, after every blocking round, the following invariant
will hold. Let the blockers occupy the
  set of vertices~$B$, let the chasers occupy the set~$C$ and let the
  robber be on~$v$.
\begin{enumerate}
\item $|B| \le k+1$, $C = \0$.

\item If~$u$ is the $\elorder$-least vertex from~$B$, then $v\elorder
  u$ and~$B$ blocks $\{v\} \to V_{\elorderg u}\setminus B$.
\end{enumerate}
In the first move, $k+1$ blockers occupy the $\elorder$-maximal vertices
of~$G$ and the robber chooses some vertex. It is trivial that the
invariant holds.

Consider a position after some blocking round has been just
finished. Let~$v$ be the robber vertex,~$B$ the set of vertices
occupied by blockers and~$C$ the set of vertices occupied by chasers
such that the invariant holds. Let~$u = \min_{\elorder} (B)$ and let
$\calR$ be the set of components of~$G - V_{\elordergeq u}$ where
$V_{\elordergeq v} = V_{\elorderg} \cup \{u\}$. Let~$\setorder$ be
the linear order on~$\calR$ defined by $R\setorder R'$ if and only if
$\max_{\elorder}(R) \elorder \max_{\elorder}(R')$ and let
$\setordereq$ be its reflexive closure. Let~$R$ be the component
in~$\calR$ with $v\in R$ and let $w = \max_{\elorder}(R)$. The chasing
round proceeds as follows. The cops announce to place (at most~$k$)
chasers on $\supp_{\elorder}(w)$. The robber choses a vertex $v'$ in a
component $R'$. If $R' \setordereq R$, then in the next position the
chasers on $\supp_{\elorder}(w)$ block every path from~$v'$ to
$V_{\elordergeq u'}$ where $u' =
\min_{\elorder}(\supp_{\elorder}(w))$. (Indeed, assume that there is a
path~$P$ from~$v'$ to some~$u''$ with $u'\elorder u''$ such that
$P\cap \supp_{\elorder} (w) = \0$. Let $(a,b) $ be the first edge with
$u'\elorder b$. Then there is a path from~$w$ to~$b'$ via~$v'$, thus
$b\in \supp_{\elorder} (w)$, a contradiction.)  This competes the
chasing round.

If $R\setorder R'$, then the chasers are
removed from $\supp_{\elorder} (w)$ and placed on $\supp_{\elorder}(w')$
where $w' = \max_{\elorder}(R')$. As the robber can change to a
$\setorder$-greater component only until he reaches~$B$, this process
is finite and at some point the robber is blocked by the chasers, \ie
we have the previous case. Note that by the definition of
$\supp_{\elorder}$, for all $w\in V$ we have $w\elorder
\min_{\elorder}(supp_{\elorder}(w))$. Hence, as $w$ is chosen to be the
$\elorder$-maximal in the robber component, the chasers are always placed
outside of the robber component. Thus placing and removing them in a
chasing round never induces strong non-{}monotonicity.

When the chasing round is over, the next blocking round begins. Let
$v$ be the robber vertex, $R$ the robber component of $G -
V_{\elorder u}$ (where $V_{\elorder  u} = \{v\in V \mid v\elorder u\}$
and $u$ is still the $\elorder$-minimal element
in~$B$), and~$w$ the $\elorder$-maximal element of~$R$. In the
blocking round, the chasers from the previous chasing round become
blockers (let~$B'$ be the set of vertices they occupy) and a blocker
from~$B$ is placed on~$w$. Other old blockers from~$B$ become chasers
and are removed from the graph. After the robber makes his move (say,
he goes to~$v'$), the blocking round is finished.

We have to check that the invariant still holds and that no strong
non-{}monotonicity occurred during the last blocking round. The first
invariant property holds because we used at most~$k$ cops as chasers
and the new blockers are the old chasers plus the cop
on~$w$. Furthermore, the chasers have been removed from the graph. The second
property (that $B'$ blocks $\{v'\} \to V_{\elorderg u'}\setminus B$
where $u' = \min_{\elorder}(B')$) holds by the construction and
implies that removing cops from~$B$ was (even strongly) monotone.
Finally, the space available for the robber shrinks after every
blocking round because the cops occupy~$w$, so the robber is finally captured. 
\end{proof}

\begin{corollary}
  If $\KW(G) = k$, then $\DAGW(G) = \calO(k^2)$.
\end{corollary}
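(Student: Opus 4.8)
The plan is to chain the two main theorems of the paper together, so that the proof reduces to a short calculation tracking constants. First I would invoke Theorem~\ref{thm_kw_to_dagw} with the parameter shifted by one: writing $\KW(G)=k$ as $(k-1)+1$, the theorem yields $\wmDAGW(G)\le 2(k-1)+1 = 2k-1$, that is $\wmDAGWcops{G}=O(k)$. This bounds the number of cops needed in the (non-shy) weakly monotone reachability game linearly in~$k$.

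Next I would transfer this bound to the weakly monotone \emph{shy} game, using the immediate inequality $\wmshyDAGWcops{G}\le \wmDAGWcops{G}$ recorded just after the definitions of the game variants. Thus $\wmshyDAGWcops{G}\le 2k-1$, so $2k-1$ cops capture a shy robber in a weakly monotone way.

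Then I would apply our main result, Theorem~\ref{thm_main}, which turns a weakly monotone shy strategy for $k'$ cops into a strongly monotone strategy for $18k'^2+3k'$ cops. With $k'=2k-1$ this gives $\DAGWcops{G}\le 18(2k-1)^2+3(2k-1)=O(k^2)$ cops in the strongly monotone reachability game, which is exactly the game defining $\DAGWcops{G}$. Finally, since $\DAGWcops{G}$ characterises \dagw up to a constant factor (the game-theoretic characterisation we adopt as the definition of \dagw), I conclude $\DAGW(G)=O(k^2)$.

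The argument has no real obstacle beyond bookkeeping; the only point to watch is that the ``strongly monotone game'' in the conclusion of Theorem~\ref{thm_main} is the cop-monotone reachability game $(\Pos(G),\ReachMoves(G),\copmon_\reach(G))$, i.e. the \dagw game, so that the resulting bound is genuinely on $\DAGW(G)$ rather than on a shy or component variant. Once this identification is made, the quadratic blow-up comes entirely from Theorem~\ref{thm_main}, while the passage from \kw contributes only a linear factor.
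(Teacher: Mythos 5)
Your proposal is correct and is exactly the chain the paper intends for this corollary: Theorem~\ref{thm_kw_to_dagw} gives $\wmDAGWcops{G}\le 2k-1$, the immediate inequality $\wmshyDAGWcops{G}\le\wmDAGWcops{G}$ transfers this to the shy weakly monotone game, and Theorem~\ref{thm_main} then yields $18(2k-1)^2+3(2k-1)=\calO(k^2)$ cops in the strongly monotone (\ie cop-monotone reachability, hence \dagw) game. Your remark identifying the ``strongly monotone game'' of Theorem~\ref{thm_main} with the game defining $\DAGWcops{G}$ is precisely the one point that needs checking, and it is handled correctly.
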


    \subsection{Separating D-width from DAG-width, Kelly-width and \dtw}

Safari suggests in~\cite{Safari05} \dw as another structural
complexity measure. Recall that for a directed graph~$G$, we denote
its undirected underlying grap by~$\bar G$. A \emph{D-decomposition}
of a graph~$G$ is a pair $(T,(X_t)_{t\in V(T)})$ where~$T$ is a
directed tree with edges oriented away from the root. Furthermore for
all $t\in V(T)$, $X_t \subseteq V(G)$ and
\begin{enumerate}[(1)]
  \item $\bigcup_{t\in V(T)}X_t = V(G)$, and
  \item for all strongly connected sets $S \subseteq V(G)$
    the underlying undirected subgraph of $T[\{t\in V(T) \mid X_t\cap
    S \neq \0\}]$ is a connected subtree of $\bar T$.
\end{enumerate}
The \emph{width} of $(T,(X_t))$ is $\max_{t\in
  V(T)}|X_t|$.~\footnote{In~\cite{Safari05} the width is $\max_{t\in
    V(T)} |X_t|-1$.} The \emph{\dw} $\DW(G)$ of~$G$ is the minimum width of a
D-decomposition of~$G$.

The following definition of \dw may be more useful in algorithmic
applications and suits our goals better. For a graph~$G$, if $X, Y\subseteq V(G)$
and~$X$ is a union of strongly connected components of $G-Y$, we say
that~$X$ is \emph{$Y$-normal.}  DS-width is a complexity measure that
differs from \dw at most by the factor of two.  Let $G$ be a graph. A
\emph{DS-decomposition} of $G$ is pair $(T,(X_t)_{t\in V(T)})$ where
$T$ is a directed tree with edges oriented away from the root $r$ and
$X_t$ are sets of vertices of $G$ such that the following holds. Let
$X_{\ge t} = \bigcup_{q\ge t}X_q$ for all $t\in V(T)$. Then
  \begin{enumerate}[(1)]
  \item $\bigcup_{t\in V(T)}X_t = V(G)$,
  \item for all $v\in V(G)$ the set $\{t\in V(T) \mid v\in X_t\}$ is
    connected in $\bar T$,
  \item for all edges $(s,t)\in E(T)$, $X_{\ge t} \setminus X_s$ is $(X_s\cap 
X_t)$-normal.
  \end{enumerate}
The width of $(T,(X_t))$ is $\max_{t\in V(T)}
|X_t|$. The \emph{DS-width}
of $G$, $\DS(G)$ is the minimum of the widths of all DS-decompositions of~$G$.


\begin{lemma}[See~\cite{Gruber08}]\label{lemma_cmp_dw_dprimw}
For all graphs $G$, $\DW(G)\le \DS(G)\le 2\DW(G)$.
\end{lemma}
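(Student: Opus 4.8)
The plan is to prove the two inequalities by two opposite constructions on decompositions, guided by the principle that a DS-decomposition is simply a more restrictive object than a D-decomposition: the gap between the measures comes entirely from the stronger normality requirement (condition~(3) of a DS-decomposition), which costs a factor of two to install.

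For $\DW(G)\le\DS(G)$ I would show that every DS-decomposition $(T,(X_t))$ is already a D-decomposition with the same tree and the same bags; since this only enlarges the class of admissible decompositions, the minimum width can only drop. Coverage is literally the same axiom, so the work is to derive the D-decomposition connectivity axiom---that every strongly connected $S$ occupies a connected subtree $U_S:=\{t: X_t\cap S\neq\emptyset\}$---from the DS-axioms. Two bookkeeping facts follow from single-vertex connectivity (applied to each set $\{t:v\in X_t\}$): for any edge $(s,t)\in E(T)$ with $s$ the parent of $t$, one has $X_s\cap X_{\ge t}=X_s\cap X_t$, and every vertex of $X_{\ge t}\setminus X_s$ occurs only in bags below $t$ (the latter because a vertex occurring both below $t$ and outside the subtree of $t$ would, by connectivity of its bag-set, lie in $X_s$). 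The heart is a per-edge dichotomy: if $S$ avoids the adhesion $X_s\cap X_t$, then since $G[S]$ is strongly connected it lies inside a single strong component of $G-(X_s\cap X_t)$, and normality makes $X_{\ge t}\setminus X_s$ a union of such components, so $S$ is either contained in it or disjoint from it; combined with the two bookkeeping facts this forces $U_S$ to lie entirely in the subtree of $t$ or entirely outside it. Finally, were $U_S$ disconnected I would pick an edge on a tree-path joining two of its components with an endpoint off $U_S$; that endpoint's bag misses $S$, hence $S$ misses the adhesion, and the dichotomy contradicts the fact that both sides of the edge carry $U_S$-nodes.

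For $\DS(G)\le 2\DW(G)$ I would start from a D-decomposition $(T,(X_t))$ of width $w$ and \emph{thicken} each bag by its parent's: set $Y_t:=X_t\cup X_s$ where $s$ is the parent of $t$ (and $Y_r:=X_r$ at the root), so that $|Y_t|\le 2w$. Coverage is immediate. For the vertex axiom, $\{t:v\in Y_t\}$ equals $C_v$ together with the children of the nodes of $C_v$, where $C_v=\{t:v\in X_t\}$ is connected by the D-decomposition axiom applied to the singleton $\{v\}$; adjoining the children of a subtree keeps it connected. The purpose of the thickening is that the new adhesion $Y_s\cap Y_t$ contains all of $X_s$, which makes $s$ a genuine separator. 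Concretely $Y_{\ge t}=X_{\ge t}\cup X_s$, and for any strong component $\Sigma$ of $G-(Y_s\cap Y_t)$ one has $\Sigma\cap X_s=\emptyset$, so $s\notin U_\Sigma$; since $U_\Sigma$ is connected (the D-decomposition axiom for $\Sigma$) and avoids the cut vertex $s$, it lies wholly inside the subtree of $t$ or wholly outside it. A short case check then gives $\Sigma\subseteq Y_{\ge t}\setminus Y_s$ in the first case and $\Sigma\cap(Y_{\ge t}\setminus Y_s)=\emptyset$ in the second, \ie $Y_{\ge t}\setminus Y_s$ is $(Y_s\cap Y_t)$-normal, which is exactly condition~(3).

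I expect the main obstacle to be precisely why the factor two is unavoidable in the second construction: a D-decomposition adhesion $X_s\cap X_t$ need not separate strongly, since a strongly connected set may meet both $X_s$ and $X_t$ while avoiding their intersection, so normality can genuinely fail for the untouched D-decomposition. The thickening repairs this by absorbing the whole bag $X_s$ into the adhesion, at the unavoidable cost of doubling. The other delicate point, common to both directions, is to fix the reading of ``strongly connected set $S$'' as ``$G[S]$ strongly connected'' (so that $S$ avoiding a vertex set $B$ forces $S$ into a single strong component of $G-B$) and to treat singletons as strongly connected, which is what yields the per-vertex connectivity used throughout.
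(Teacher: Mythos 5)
Your proof is correct. For $\DW(G)\le\DS(G)$ you take the same route as the paper: a DS-decomposition is already a D-decomposition with the same tree and bags, and one derives connectivity of $U_S=\{t \mid X_t\cap S\neq\emptyset\}$ by locating a tree edge that separates two pieces of $U_S$ and whose adhesion is missed by $S$, then invoking normality; your version, with the two bookkeeping facts and the per-edge dichotomy made explicit, is actually cleaner than the paper's write-up of this step, which confuses the roles of $q$ and $s$. For $\DS(G)\le 2\DW(G)$ your construction differs from the paper's: the paper subdivides every edge $(s,t)$ of $T$ by a new node $st$ carrying the bag $X_s\cup X_t$, whereas you keep $T$ and thicken each bag to $Y_t=X_t\cup X_{\mathrm{parent}(t)}$. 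Both devices serve the identical purpose of forcing every adhesion to contain an entire original bag $X_s$, so that the D-decomposition axiom applied to a strongly connected set crossing the cut (the paper phrases this as a path leaving and re-entering $X_{\ge t}\setminus X_s$; you phrase it as a strong component of $G-(Y_s\cap Y_t)$ whose connected bag-set avoids $s$) yields normality, at the same cost of a factor two. Your in-place thickening has the minor aesthetic advantage of not changing the tree, and your explicit verification of the per-vertex connectivity axiom for the thickened bags (children of a connected subtree) and of the identity $Y_{\ge t}=X_{\ge t}\cup X_s$ covers the details the paper dismisses as trivial. No gaps.
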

\begin{proof}
  Let $(T,(X_t)_{t\in V(T)})$ be a D-decomposition of width $k$. We
  obtain a DS-decomposition $(T',(X'_{t'})_{t'\in V(T')})$ of width
  $2k$ from $(T,(X_t)_{t\in V(T')})$ as follows. Replace every edge
  $(s,t)\in E(T)$ by a new node $st$ and edges $(s,st)$ and $(st,t)$
  and let $X_{st}$ be $X_s\cup X_t$. Then $(T',(X_{t'}')_{t'\in
    V(T')})$ is a DS-decomposition of~$G$. Indeed, assume that for
  some edge $(s,st) \in E(T')$ there is a path~$P$ from some $w\in
  X_{\ge st}\setminus X_s = X_{\ge t}\setminus X_s$ to some $v\in
  V(G)\setminus (X_{\ge st}\setminus X_s)$ and back to $w$. Assume for
  a contradiction that~$P$ avoids $X_s \cap X_{st} = X_s$. Then~$P$ is
  a strongly connected subgraph of~$G$, so by the second property of
  D-decomposition, the set $\{t\in V(T) \mid P\cap X_t \neq \0\}$ is
  connected in~$\bar T$. Thus $X_s \cap P \neq \0$,
  but we assumed that this is not true. Hence $X_{\ge st}\setminus
  X_s$ is $(X_s \cap X_{\ge st})$-normal.  With the same argument one
  can see that for all edges of the form $(st,t)$, the set $X_{\ge t}
  \setminus X_{st}$ is $(X_t \cap X_{\ge st})$-normal.

  Properties (1) and (2) follow trivially form the properties of the
  D-de\-com\-po\-si\-tion. Furthermore, it is clear that the width of
  $(T',(X_{t'})_{t'\in V(T')})$ is at most~$2k$.

  Now assume that we have a DS-decomposition $(T,(X_t)_{t\in V(T)})$
  of width~$k$. We show that it is also a D-decomposition. Let $S$ be
  a strongly connected set of~$G$ and assume that $\{t\in V(T) \mid
  X_t \cap S \neq \0\}$ is not connected in~$\bar T$. Let~$q$,~$s$
  and~$t$ be some nodes of~$T$ such that $X_q \cap S \neq \0$, $X_t
  \cap S \neq \0$, $X_s = \0$ and~$s$ is on the path $P_{tqs}$
  from~$t$ to~$q$ in $\bar T$. Choose $s$, $q$ and $t$ such that
  $P_{tqs}$ has minimal length. Either $q<s$ or $q<t$, say $q<t$, then
  $(q,t)\in E(T)$ (the case $q<s$ is analogous). As~$S$ is
  strongly connected, there is a path~$P$ from $X_{\ge t} \setminus
  X_q$ to $X_{\ge s}\setminus X_q$ and back within~$S$. As $S\cap X_q
  = \0$, $P$ avoids $X_q$. Note that $(X_{\ge s} \setminus X_q) \cap
  (X_{\ge t} \setminus X_q) = \0$, so~$P$ leaves $X_{\ge t}\setminus
  X_q$ and returns there without visiting $X_q$ thus violating the
  normality condition of the DS-decomposition for the edge $(q,t)$.
\end{proof}

We separate \dw from \dtw, \dagw, \kw, and from the cop- and
robber-monotone component game. First we show in
Theorem~\ref{thm_dw_game} that if \dw is bounded, then a bounded
number of cops suffices to capture the robber in a cop- and
robber-monotone way in the component game. It follows that then \dtw
is bounded as well (this is already known from~\cite{Safari05}). It is
known that there are classes of graphs where \dtw is bounded but
neither \kw, nor \dagw are: undirected binary trees with additional
edges forming the upward transitive
closure~\cite{BerwangerDawHunKreObd12}. The \dw of those graphs is
also bounded. We show that there is a class
$\calG$ of graphs where three (four) cops win in the cop- and
robber-monotone component (resp.\@ reachability) game, but whose \dw
is unbounded (Theorem~\ref{thm_dw_game_fails}). Hence, \dtw and \dagw
are bounded on $\calG$, but \dw is unbounded. We also show that \kw
is bounded on~$\calG$. Finally, we use Theorem~\ref{thm_non_cop_mon}
to separate \dw from \dtw in another way in Theorem~\ref{thm_sep_dw_dtw}.



\begin{theorem}
\label{thm_dw_game}
For all graphs $G$, if there is a DS-decomposition of $G$ of width~$k$,
then~$\cmdTWcops{G}\le k$.
\end{theorem}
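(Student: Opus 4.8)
The plan is to let the cops play directly along the given DS-decomposition $(T,(X_t)_{t\in V(T)})$: at a node $t$ the cops occupy exactly the bag $X_t$, and they always descend to the child of $t$ into which the robber has retreated. Since every bag has size at most $k$, this uses at most $k$ cops, so the two substantive things to check are that each descent is a legal move in the component game and that the resulting play is cop-monotone and winning.

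First I would fix the invariant to be maintained after each cop move: when the cops occupy $X_t$ and the robber has responded, his component $R$ is a strongly connected component of $G-X_t$ with $R\subseteq X_{\ge t}$ (equivalently $R\subseteq X_{\ge t}\setminus X_t$). For the root this is immediate since $X_{\ge r}=V(G)$. The heart of the argument is a combinatorial lemma: if $R$ is an SCC of $G-X_t$ contained in $X_{\ge t}\setminus X_t$, then there is a \emph{unique} child $t'$ of $t$ with $R\subseteq X_{\ge t'}\setminus X_t$. To prove it, note that by the connectivity property~(2) each vertex of $X_{\ge t}\setminus X_t$ lies in the bags of exactly one child-subtree: the connected node set $\{q : v\in X_q\}$ of such a vertex avoids $t$, and a connected set meeting $T_t\setminus\{t\}$ cannot cross $t$, so it stays inside a single child-subtree. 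Hence the sets $X_{\ge t_j}\setminus X_t$ partition $X_{\ge t}\setminus X_t$. If $R$ met two of them, at $v$ and $w$ say with $v\in X_{\ge t_1}\setminus X_t$, then the cycle through $v$ and $w$ inside $R$ avoids $X_t\supseteq X_t\cap X_{t_1}$; since $X_{\ge t_1}\setminus X_t$ is $(X_t\cap X_{t_1})$-normal by property~(3), the vertices $v$ and $w$ lie in one SCC of $G-(X_t\cap X_{t_1})$, forcing $w\in X_{\ge t_1}\setminus X_t$ and contradicting the partition.

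Given the lemma, the cops move from $X_t$ to $X_{t'}$. Legality and the maintenance of the invariant both follow from normality: from $(X_t,X_{t'},R)$ the robber may move only within the SCC of $G-(X_t\cap X_{t'})$ containing $R$, and since $X_{\ge t'}\setminus X_t$ is $(X_t\cap X_{t'})$-normal and contains $R$, that entire SCC remains inside $X_{\ge t'}\setminus X_t$; hence the robber's new component $R'$, being disjoint from $X_{t'}$, satisfies $R'\subseteq X_{\ge t'}\setminus X_{t'}$, which re-establishes the invariant at $t'$. The play thus descends strictly in $T$ and reaches a leaf $\ell$, where $X_{\ge\ell}=X_\ell$ forces the robber component to be empty, i.e.\ the robber is captured. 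Cop-monotonicity is a separate and clean consequence of property~(2): along the root-to-leaf path $s_0,s_1,\dots$ realized by the cop positions, if $s_i$ is an ancestor of $s_j$ which is an ancestor of $s_k$ and $v\in X_{s_i}\cap X_{s_k}$, then the connected node set of $v$ contains the $\bar T$-path from $s_i$ to $s_k$ and hence $s_j$, so $v\in X_{s_j}$; this is exactly $C_i\cap C_k\subseteq C_j$.

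I expect the main obstacle to be the combinatorial lemma and its interface with the move rule, namely verifying that the normality condition of a DS-decomposition matches the robber's freedom in the component game precisely. The delicate point is that the robber's reachability after a move is governed by $G-(X_t\cap X_{t'})$ rather than by $G-X_t$, so it is exactly property~(3) — normality \emph{with respect to this intersection} — that keeps the robber trapped in the subtree he has entered. Everything else (the cop count of $k$, the strict descent, and capture at the leaves) is routine bookkeeping once this alignment is established.
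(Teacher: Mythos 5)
Your proposal is correct and follows essentially the same route as the paper's proof: the cops walk down the decomposition tree occupying the bags $X_t$, with the invariant that the robber's component sits inside a child subtree minus the current bag, using normality (property~(3)) to keep the robber trapped in that subtree and the connectivity condition (property~(2)) for cop-monotonicity. The paper states these steps very tersely; your version merely fills in the details (notably the unique-child lemma), which the paper leaves implicit.
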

\begin{proof}
  Let $(T,(X_t)_{t\in V(T)})$ be a DS-decomposition of width~$k$. The
  cops have the following winning strategy. In te first move they
  occupy $X_r$ where~$r$ is the root of~$T$. In general they keep the
  invariant true that if the current position is $(C,R)$, then $C =
  X_s$ for some $s\in V(T)$ and $R\subseteq X_{\ge t}\setminus X_s$
  for some $t$ with $(s,t)\in E(T)$. Then the next move of the cops is
  to $(C,X_t,R)$ and after the next robber move the invariant
  holds. Note that $X_t \setminus X_s \subseteq R$. Note also that~$R$
  is a strongly connected component of $G - (X_s\cap X_t)$, so the
  play is robber-{}monotone. Furthermore, by property~(2), it is also
  cop-{}monotone. When the cops reach a leaf, the robber is
  captured. Clearly, exactly~$k$ cops are used.
\end{proof}

The opposite direction fails because the cops may be forced to occupy
the same vertex when the robber goes to different components. Assume
that we reached a position $(C,C',R)$ and the robber can choose $R_1'$
or $R_2'$. In both cases after playing some time the cops must occupy
a vertex~$v$: in the first case because $v\in R_1'$ and in the second
case because they have to block the robber in $R_2'$. The
decomposition corresponding that strategy has~$v$ in different
successors of the bag that corresponds to position $(C,R)$, but not in
the bag of $(C,R)$ itself. However this violates the connectivity
condition of a DS-decomposition. Theorem~\ref{thm_dw_game_fails} shows
that the described situation is unavoidable.

In the proof of Theorem~\ref{thm_dw_game_fails} we use a technical
notion of a game which at least partially corresponds to \dw. The \emph{\dw
game} on a graph~$G$ is another type of graph searching games that does not 
match
our framework. At the beginning the cops group components of~$G$ into
equivalence classes and the robber choses one class and goes there. At
this moment the cops do not see the robber and only know his
class. From now on every cop can be placed only within that
class. Then the cops make a move as in all games described before and
group the emerging components within the current class into new
classes and so on.

Formally the cop positions are $(C,\calR)$ where $C\subseteq V(G)$ and
$\calR = \{R_1,\ldots,R_m\}$ is a set of components of $G - C$. Hereby
every $R_i$ is a component the cops consider to be a possible robber
component. The cops can move to a position $(C,C',\sim,\calR)$ where
$C'\subseteq \bigcup_{1\le i \le m}R_i$ and~$\sim$ is an equivalence
relation on components of $G-C'$.  From $(C,C',\sim,\calR)$ the robber
can move to a position $(C',\calR')$ where $\calR' =
\{R'_1,\ldots,R'_s\}$ is the set of components $R'_j$ (for
$j\in\{1,\ldots,s\}$) of $G-C'$ such that $R'_j\subseteq R_i$ for some
$R_i$ and all $R'_j$ are $\sim$-equivalent. In other words the robber
choses an equivalence class of components, a \emph{group.} If there is
a path from some $R'_j$ outside of $R'_j$ and then back to $R'_j$ in
$G- (C\cap C')$, then the robber wins (by the
non-robber-monotonicity). He also wins all infinite and all
non-cop-monotone plays. The cops win if the capture the robber (\ie he
has no legal move).

\begin{lemma}\label{lemma_dw_game}
  If $\DS(G) = k$, then $2k$ cops have a winning strategy in the \dw
  game on~$G$.
\end{lemma}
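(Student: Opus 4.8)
The plan is to fix a DS-decomposition $(T,(X_t)_{t\in V(T)})$ of $G$ of width $k$, which exists since $\DS(G)=k$, and to let $2k$ cops shadow a root-to-leaf descent through $T$, always occupying at most two consecutive bags. Concretely, I would maintain the invariant that the cops occupy $C=X_s\cup X_t$ for the current tree edge $(s,t)\in E(T)$ (so $|C|\le|X_s|+|X_t|\le 2k$) and that the robber's class is the set of components of $G-C$ contained in $X_{\ge t}\setminus X_t$. A descent is organised in two kinds of rounds. In a \emph{regrouping round} the cops leave their position unchanged but refine the equivalence $\sim$ so that its classes correspond to the children of $t$; since the cops are blind, this is what forces the robber to reveal into which child $t'$ he descends. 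In the following \emph{placement round} the cops, now knowing $t'$, move to $C'=X_t\cup X_{t'}$, retaining $X_t$, placing the new cops $X_{t'}\setminus X_t$ and dropping $X_s\setminus X_t$, which narrows the robber's region from $X_{\ge t}\setminus X_t$ to $X_{\ge t'}\setminus X_{t'}$.

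First I would record the structural facts that make this bookkeeping consistent, all consequences of the decomposition axioms. From the connectivity axiom (property~(2)) one obtains, for any edge $(s,t)$, that $X_s\cap X_{\ge t}\subseteq X_t$; this is what lets me compute the retained cops across a descent as exactly $(X_s\cup X_t)\cap(X_t\cup X_{t'})=X_t$, and it shows that $X_{\ge t}\setminus X_t$ is disjoint from every strictly higher bag. Combining this with the normality axiom (property~(3)), namely that $X_{\ge t}\setminus X_s$ is $(X_s\cap X_t)$-normal, I would prove the two facts I actually need: (i) $X_{\ge t}\setminus X_t$ is a union of strongly connected components of $G-C$ for $C=X_s\cup X_t$, so that the robber's class is well defined; and (ii) the regions $X_{\ge t'}\setminus X_{t'}$ for the distinct children $t'$ of $t$ are distinct unions of such components, so that grouping by child is legitimate and the classes are pairwise separated. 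Fact (i) follows because any SCC of $G-C$ meeting $X_{\ge t}\setminus X_t$ lies inside an SCC of $G-X_t$, and $X_{\ge t}\setminus X_t$ is a union of SCCs of $G-X_t$ by property~(3) applied along the edges below $t$.

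Then I would verify the two monotonicity conditions and capture. For robber-monotonicity, the crucial point is that the cops always retain the separator: across the descent from $(s,t)$ to $(t,t')$ the retained set $C\cap C'=X_t$ contains $X_t\cap X_{t'}$, and by property~(3) the committed region $X_{\ge t'}\setminus X_t$ is $(X_t\cap X_{t'})$-normal, hence sealed off in $G-(C\cap C')$; thus the robber can never follow a path out of his class and back, \ie he can never re-enter a vacated region. For cop-monotonicity, property~(2) guarantees that $\{q:v\in X_q\}$ is a connected subtree of $\bar T$, so along the single root-to-leaf path the cops trace, each vertex $v$ is occupied during a contiguous block of rounds and is never reoccupied once left. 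Capture is immediate: at a leaf $t$ we have $X_{\ge t}=X_t$, so $X_{\ge t}\setminus X_t=\emptyset$ and the robber's class is empty. Since the cop set never exceeds $|X_s\cup X_t|\le 2k$, this gives the claimed bound.

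The hard part will be making the robber-monotonicity argument fully precise against the blind, group-choosing robber: one must verify that a single announced placement $C'$ together with the grouping $\sim$ simultaneously (a) keeps all newly placed cops inside the current robber region, (b) leaves each committed component sealed with respect to the retained cops $C\cap C'$, and (c) genuinely descends in $T$ rather than merely re-partitioning. The factor of two is exactly what buys (a) and (b) at once: retaining the full lower bag $X_t$, rather than only the separator $X_s\cap X_t$, keeps the region below $t$ sealed while the next separator $X_{t'}$ is being installed, so no intermediate position opens a path from the robber's shrinking class back into already-cleared territory. Checking that this sealing survives the transitional round, when both the old and the new separators sit only partially on the board, is the step I expect to require the most care, and it is precisely where property~(3) is used in its full strength.
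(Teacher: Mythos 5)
Your proposal is correct and takes essentially the same route as the paper's proof: the cops occupy two consecutive bags $X_s\cup X_t$ (hence at most $2k$ vertices), group the remaining components according to the children of $t$, and descend along the edge revealed by the robber's choice of group, with the normality condition giving robber-monotonicity and the connectivity condition giving cop-monotonicity. Your version is merely more explicit than the paper's, which folds the regrouping and the placement into a single cop move and leaves the structural verifications (e.g.\ $X_s\cap X_{\ge t}\subseteq X_t$ and the sealing of $X_{\ge t}\setminus X_t$ by $X_s\cap X_t$) implicit.
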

\begin{proof}
  Let $(T,(X_t)_{t\in V(T)})$ be a DS-decomposition where the root
  of~$T$ is~$r$. Note that for all $(s,t)\in E(T)$ the set
  $\bigcup_{q\ge t}X_q\setminus X_s$ is a union of components of
    $G-X_s$. The cops occupy~$r$ in the first move and for components
    $R_1$ and $R_2$ of $G-X_r$ they define $R_1\sim R_2$ if and only
    if $R_1$ and $R_2$ are both contained in $\bigcup_{q\ge
      t}X_q\setminus X_r$ for some $(r,t)\in E(T)$. The robber choses
    a $\sim$-class, \ie essentially an edge $(r,t)$. Note that the
    robber is blocked in $\bigcup_{q\ge t}X_t\setminus X_r$ by
    $X_r\cap X_t$. Then the
    cops occupy $X_r\cup X_t$ and define the equivalence relation in
    the same way as before whereby now~$t$ plays the role
    of~$r$. Finally the cops capture the robber at the latest in some $X_s$
    for a leaf~$s\in V(T)$. The cop strategy is cop-monotone by the
    monotonicity of the DS-decomposition. 
\end{proof}

\begin{theorem}\label{thm_dw_game_fails}
  There is a class of graphs $G_n$ such that~$3$ cops have a cop-
  and robber-monotone winning strategy in the \dtw and \dagw games
on each $G_n$ and $\KW(G_n) = 4$, but $\DW(G_n) \ge n$.
\end{theorem}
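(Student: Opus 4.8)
The plan is to realise the obstruction informally described just before the theorem---that ``the cops may be forced to occupy the same vertex when the robber goes to different components''---in parallel $n$ times at a single branching hub, so that any D-decomposition is forced to collect $n$ pairwise distinct vertices into one bag. Concretely, I would take a hub vertex (or small strongly connected core) $h$ of large out-degree from which the robber can enter $n$ sibling gadgets $D_1,\dots,D_n$, and for each $i$ install a small strongly connected \emph{bridging set} $S_i$ that threads through $D_i$ and through a second, vertex-disjoint region, using a private pivot $p_i$, with back-edges wired so that $\bigcup_i S_i$ is itself strongly connected while the individual $S_i$ remain pairwise vertex-disjoint. The skeleton on which the $D_i$ hang is a shallow, essentially one-directional branching structure (edges oriented away from $h$, together with the controlled back-edges that form the $S_i$), so that the robber's only genuine escape is to descend into one gadget at a time.

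The three upper bounds are the routine part. For $\cmdTWcops{G_n}\le 3$ and its reachability analogue I would give the explicit top-down chasing strategy: two chasers follow the robber from $h$ into whichever gadget $D_i$ he enters while one guard holds the single entry vertex of $D_i$. Since $h$ and all other gadgets become unreachable once the guard is placed, the robber is pushed into a strictly smaller strongly connected component and never regains a lost vertex, so the strategy is simultaneously cop- and robber-monotone and witnesses $\cmdTWcops{G_n}\le 3$; reading the same descent in the reachability game gives the \dagw bound with the same three cops. For $\KW(G_n)=4$ I would exhibit an elimination order of width $3$, ordering each gadget's vertices before the hub and checking from the definition of $\supp_{\elorder}$ that the support of every vertex is contained in its own constant-size gadget together with $h$.

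The substance of the proof, and its main obstacle, is $\DW(G_n)\ge n$, which I would derive directly from the strong-connectivity condition of D-decompositions rather than from the games. Fix a D-decomposition $(T,(X_t))$; by property~(2) each subtree $T_i=T[\{t: X_t\cap S_i\neq\emptyset\}]$ is connected, and---this is the point of making each $S_i$ genuinely straddle two vertex-disjoint regions that the structure forces into different parts of $T$---all $T_i$ must share a common hub node $t^\ast$. The cleanest way to secure this is to make $\bigcup_i S_i$ strongly connected and the $S_i$ pairwise crossing, so that the Helly property of subtrees of a tree yields a node $t^\ast$ lying in every $T_i$; then $X_{t^\ast}$ meets every $S_i$, and since the $S_i$ are pairwise vertex-disjoint we obtain $|X_{t^\ast}|\ge n$, i.e.\ $\DW(G_n)\ge n$ (arguing instead through DS-decompositions and their per-vertex connectivity condition, the same bag argument gives $\DS(G_n)\ge 2n$ and hence $\DW(G_n)\ge n$ by \Cref{lemma_cmp_dw_dprimw}). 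The delicate engineering step---where essentially all the care goes---is to wire the back-edges so that the $n$ bridging sets are provably pairwise crossing in \emph{every} decomposition, forcing them through a single $t^\ast$, while keeping the robber's escape options one-dimensional enough that the constant-cop monotone descent above still captures him. These two requirements pull in opposite directions, and balancing them is the crux of the construction.
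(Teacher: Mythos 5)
Your proposal is a plan rather than a proof, and the part you defer is precisely the theorem's content. You name the ``delicate engineering step'' --- wiring the graph so that the $n$ bridging sets are ``provably pairwise crossing in every decomposition'' while three cops still win monotonically --- and then stop; but no candidate graph is exhibited, so neither the upper bounds nor the lower bound can be checked. Worse, the mechanism you propose for the lower bound does not work as stated. ``Pairwise crossing'' is never defined, and the natural reading (each $S_i\cup S_j$ strongly connected) does not force the subtrees $T_i=T[\{t: X_t\cap S_i\neq\emptyset\}]$ and $T_j$ to share a node: condition~(2) of a D-decomposition only forces $T_i\cup T_j$ to induce a connected subtree, which is perfectly compatible with $T_i$ and $T_j$ being disjoint and joined by a single tree edge, and then the Helly argument yields nothing. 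Note also that D-decompositions, unlike tree decompositions, have no requirement that edges be covered by bags, so you cannot even force two mutually adjacent vertices into a common bag; any direct ``one bag must meet all $n$ disjoint SCCs'' argument needs a substantially stronger structural hypothesis than the ones you list, and producing a graph satisfying it while keeping $\cmdTWcops{G_n}\le 3$ and $\KW(G_n)=4$ is exactly the tension you acknowledge but do not resolve.

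For comparison, the paper takes a different route for the lower bound. It builds a concrete two-part graph: a high-branching tree $T^{n^2}_{n+1}$ with its downward transitive closure, glued by cross edges to a second tree $T^n_{n+1}$ whose edges point upward, with each vertex of the second tree shared by $n$ siblings of the first. The lower bound is then proved game-theoretically: an auxiliary ``D-width game'' is introduced, Lemma~\ref{lemma_dw_game} shows that $\DS(G)=k$ gives $2k$ cops a winning strategy in it, and an induction on $n$ shows the robber beats $n$ cops on $G_n$ because the cross edges through the second tree create cop-free return paths to every vertex on the root-to-leaf path, so no cop can ever be lifted without violating robber-monotonicity. If you want to salvage a direct combinatorial argument along your lines, you would at minimum need to (a) specify the graph, (b) replace ``pairwise crossing'' by a verifiable condition that genuinely forces $T_i\cap T_j\neq\emptyset$ (for instance via the separator/normality condition~(3) of DS-decompositions applied to the tree edge that would separate $T_i$ from $T_j$), and (c) verify the three upper bounds on that graph.
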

\begin{proof}
  Informally, the graph $G_n$ consists of two vertex disjoint parts
  (below we give a formal definition).  One is a
  copy of $T^n_{n+1}$ with root $r(T^{n^2}_{n+1})$ and additional
  edges forming the downward transitive closure. The other part has
  (another copy of) $V(T^n_{n+1})$ as its vertex set and edges
  $\bigl\{ (va,v) \mid v \in \{0, \ldots, n \}^{n}, a\in
  \{0,\ldots,n\}\bigr\}$. We denote the second part as~$T'$. The parts
  are connected by edges going from a vertex~$v$ in $T^n_{n+1}$ to its
  copy~$v'$ in~$T'$ and from~$v'$ to the parent of~$v$. Now we transform
  the resulting graph by applying the following operation once on each
  vertex of $T^n_{n+1}$ (except the root) in a top-down manner. The
  current vertex $v\in V(T^n_{n+1})$ (except the root) is replaced
  by~$n$ copies $v_1$, \ldots, $v_n$. Let~$w$ be the parent
  of~$v$. Then the edge $\{v,w\}$ is replaced by edges $\{v_i,w\}$ and
  the edge $(v,v')$ by edges $(v_i,v')$. Every $v_n$ is the root of a
  copy of the subtree of $T^n_{n+1}$ rooted at~$v$. Hereby 
  the edges going to and from~$T'$ are also copied.
  
  Formally $V(G_n)$ is the disjoint union of $V(T^{n^2}_{n+1})$ and
  $V(T^n_{n+1})$ with edges $E_T$, $E_{tr}$, $E_{up}$ and $E_\cross$ where
  $E_T = \bigl\{\{v,va\} \mid v\in\{0,\ldots, n^2-1\}^{\le n}, a \in
  \{0,\ldots, n^2-1\} \bigr\}$ are edges forming $T^{n^2}_{n+1}$, $E_{tr}
  = \bigl\{ (v,vw) \mid v,w,vw\in \{0,\ldots,n^2-1\}^{\le n} \bigr\}$ is the
  downward transitive closure on $T^{n^2}_{n+1}$, $E_{up} = \bigl\{ (va,v) \mid
  v\in\{0,\ldots,n\}^{\le n}, a\in \{0,\ldots,n\}\bigr\}$ forms
  $T^n_{n+1}$ and~$E_\cross$ connects $T^{n^2}_{n+1}$ and $T^n_{n+1}$ as follows. For a
  word $w = w_1 \ldots w_{m} \in \{0, \ldots, n^2-1 \}^{\le n}$ with
  $|w|>0$ let $w'$ be the word $w'_1, \ldots, w'_{m}$ with $w'_i =
  \lfloor w_i/n\rfloor$. Then there are edges $(w,w')\in
  V(T^{n^2}_{n+1}\times T^n_{n+1})$ and $(w',w_1\ldots w_{m-1})$ (\ie if $w =
  w_1$, then the edge is $(w',\epsilon)$).

  The winning strategy for the cops in the cop-monotone \dtw game is,
  roughly, to traverse $T^{n^2}_{n+1}$ and $T^n_{n+1}$ in parallel
  downwards. As long the robber is in $T^{n^2}_{n+1}$, the cops
  descend from the root of $T^{n^2}_{n+1}$ to some leaf~$w$ which is
  in the current robber component and in $T^n_{n+1}$ also from the
  root to the leaf~$w'$. If the robber changes to $T^n_{n+1}$, he
  finds himself in a component that consists of one vertex and is
  captured in the next move. It is easy to see that three cops suffice
  to win.

  In the \dagw game, four cops follow the robber, in $T^n_{n+1}$ in
  the same way as in $T^{n^2}_{n+1}$: in parallel downwards.  For the
    \kw, consider the elimination order where every vertex of
    $T^n_{n+1}$ is smaller than every vertex of $T^{n^2}_{n+1}$ and
    vertices within $T^n_{n+1}$ and within $T^{n^2}_{n+1}$ are ordered
    in the obvious way (following the depth-first search). Then the
    support of every vertex is at most three, so the \kw is four.

We show by induction on~$n$ that the robber has a winning strategy in
the D-width game against~$n$ cops. At the beginning he choses the
group containing the component with $r(T^{n^2}_{n+1})$ and continues to do
so as long as the root is not occupied by the cops in a move
$(C,\calR) \to (C,C',\sim,\calR)$. For vertex $v\in V(T^{n^2}_{n+1})$ the
subtree of $T^{n^2}_{n+1}$ rooted at~$v$ is denoted $T_v$ and similarly we
write $T_{v'}$ for the subtree of $T' = T^n_{n+1}$ rooted at $v'$. Let
$r_1,\ldots,r_{n^2}$ be the direct successors of $r(T^{n^2}_{n+1})$.  Due
to the robber-monotonicity the cops have visited vertices in at most
$n-1$ sets $T_{r_i}\cup T_{r'_i}$. Hence there are at least~$n$
vertices $r_i$, say $r_1,\ldots,r_n$, with the same successor~$r'$ in
$T^n_{n+1}$ such that all $T_{r_i}$  for $1\le i\le n$ and $T_{r'}$ are cop free.
Every $T_{r_i}$ and $T_{r'}$ are current components and the
equivalence relation~$\sim$ declared by the cops defines groups of them.

There are two cases. In the first case there are at least two
groups. One of them contains some $T_{r_i}$, but not
$T_{r'}$. The robber choses this group and plays from now on only on
$T_{r_i}$. His strategy is to remain in the group containing a
vertex~$v$ that is possibly high in the tree. Due to the high
branching degree when the cops occupy~$v$, there is a direct
successor~$w$ of~$v$ such that $T_w$ is cop free. The robber choses
the group containing~$w$ and plays further in the same way. During
this play the cops occupying the vertices $v_1,v_2,\ldots$ on the path
from $r(T^{n^2}_{n+1})$ to~$v$ cannot be removed because there is a path
from~$w$ via $T_{r'}$ to all $v_i$ and then back to~$w$. This path is
cop free because the cops are not allowed to occupy $T_{r'}$, which is
not in the current group. Hence if the cops leave some $v_i$, the robber
wins by non-robber-monotonicity. When the robber reaches a leaf,
all~$n$ cops stay om the path from $r(T^{n^2}_{n+1})$ to that leaf and
cannot be removed, so the robber wins.

In the other case there is only one group. The robber may be in each
component~$C$ containing some $T_{r_i}$. Each such component has an
isomorphic copy $G_{n-1}$ as a subgraph. (Note that~$C$ is not a copy
of $G_{n-1}$ because its branching degree is still $n^2$ and not
$(n-1)^2$ as in $G_{n-1}$.) Thus by the induction hypothesis the
robber wins against $n-1$ cops, so if the cops should win, they need
the cop from $r(T^{n^2}_{n+1})$ in~$C$. But then the robber can reach
$r(T^{n^2}_{n+1})$ from another $T_{r_j}$, which causes the
non-robber-monotonicity.
\end{proof}

It was conjectured that \dtw and \dw are the same (\cite[Page
750]{Safari05}\footnote{Safari actually conjectures that D-width
  equals directed tree width which would imply cop-monotonicity.}). We
show, however, that the gap between them is not bounded by any
function (which is clear from
Theorem~\ref{thm_dw_game_fails}, but we give yet another proof.)

\begin{theorem}\label{thm_sep_dw_dtw}
  There is a class of graphs with bounded \dtw and unbounded \dw.
\end{theorem}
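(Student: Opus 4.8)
The plan is to reuse the very class $\{G_n \mid n>2\}$ constructed for Theorem~\ref{thm_non_cop_mon}, rather than build a new family. Recall that those graphs satisfy $\dTWcops{G_n}\le 4$ and $\cmdTWcops{G_n}\ge n$, so they already witness a large gap between the non-monotone and the cop-monotone component game; the point of this theorem is simply to reinterpret that gap as a gap between \dtw and \dw.

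First I would argue that the directed tree width of the $G_n$ stays bounded. Since $\dTWcops{G_n}\le 4$ for every $n$, Theorem~\ref{thm:dtw_cndtw} yields an absolute constant $c$ with $\dTW(G_n)\le 4c$ for all $n$. Hence the family has uniformly bounded \dtw.

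Next I would show that $\DW(G_n)$ grows without bound, the key being that cop-monotone play in the component game lower-bounds \dw. Concretely, by Theorem~\ref{thm_dw_game} a DS-decomposition of width $k$ induces a cop- and robber-monotone winning strategy for $k$ cops in the component game, so taking $k=\DS(G_n)$ gives $\cmdTWcops{G_n}\le \DS(G_n)$. Combined with $\cmdTWcops{G_n}\ge n$ this forces $\DS(G_n)\ge n$, and then Lemma~\ref{lemma_cmp_dw_dprimw} (the half $\DS(G_n)\le 2\DW(G_n)$) gives $\DW(G_n)\ge \tfrac12\DS(G_n)\ge n/2$. Thus \dw is unbounded on the family while \dtw is bounded, which is exactly the assertion.

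The main thing to get right is the direction of the chain of inequalities, $n\le \cmdTWcops{G_n}\le \DS(G_n)\le 2\DW(G_n)$, together with the observation that Theorem~\ref{thm:dtw_cndtw} only costs a constant factor on the \dtw side. There is essentially no combinatorial obstacle here: all the difficulty was already absorbed into the construction and the cop-monotone lower bound of Theorem~\ref{thm_non_cop_mon}. The only subtlety worth double-checking is that the strategy extracted from a DS-decomposition in Theorem~\ref{thm_dw_game} is genuinely \emph{cop}-monotone (it is, by property~(2) of a DS-decomposition), so that it may legitimately be compared against $\cmdTWcops{G_n}$ and not merely against $\dTWcops{G_n}$.
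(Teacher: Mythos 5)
Your proposal is correct and follows essentially the same route as the paper: the paper's own proof also takes the class $\{G_n\}$ from Theorem~\ref{thm_non_cop_mon} and chains Theorem~\ref{thm_dw_game} with Lemma~\ref{lemma_cmp_dw_dprimw} to conclude that bounded \dw would force a bounded cop-monotone cop number, contradicting $\cmdTWcops{G_n}\ge n$. You have merely unfolded the paper's contrapositive into the explicit inequality chain $n\le\cmdTWcops{G_n}\le\DS(G_n)\le 2\DW(G_n)$, which is a faithful rendering of the same argument.
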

\begin{proof}
  Consider the class of graphs from Theorem~\ref{thm_non_cop_mon}. The
  \dtw of the graphs from that class is bounded. If the \dw were
  bounded, DS-width would be bounded as well (Lemma~\ref{lemma_cmp_dw_dprimw}), 
and by Lemma~\ref{thm_dw_game} a bounded number of cops could
  capture the robber on each graph in a cop-monotone way, but this is
  not the case.
\end{proof}

    \subsection{\Otw}\label{subsec:otw}

\RRl{Stephan, wir sollten etwas über die otw schreiben. Warum betrachten
wir sie? Wer hat uns darüber erzählt?}




\begin{definition}
  Let $G$ be an undirected graph. A \emph{tree decomposition} of $G$
  is a tuple $(T,(X_t)_{t\in V(T)})$ where $T$ is an undirected tree
  and 
  \begin{itemize}
  \item $\bigcup_{t\in V(T)} X_t = V(G)$,
  \item for all $\{v,w\}\in E(G)$ there is some $t\in V(T)$ with
    $\{v,w\}\subseteq X_t$,
  \item for all $v\in V(G)$ the set $\{t\in V(T) \mid v\in X_t\}$
    induces a (connected) subtree of~$T$. 
  \end{itemize}
\end{definition}

\begin{definition}
Let $G$ be a directed graph. An \emph{oriented tree decomposition} is a pair
$(T,(X_t)_{t\in T})$ where $T$ is an orientation of an undirected
tree and for each $t\in V(T)$, $X_t \subseteq V(G)$ such that $E(G)$
can be partitioned into two, possibly empty, sets $E(G) = E^t \cup
E^{\shc}$ (the tree edges and the shortcut edges) and the following conditions hold.
\begin{enumerate}[(1)]
\item $(\bar T,(X_t)_{t\in T})$ is a tree decomposition of $(V(G),
  E^t)$.

\item\label{cover_shc} If $(u,v) \in E^{\shc}$, $u\in X_s$, and $v\in X_t$, then there is
  a path from $s$ to $t$ in~$T$.
\end{enumerate}

We say that an edge $e\in E(G)$ is \emph{covered by the tree} if $e$
is contained in some bag~$X_t$. Otherwise we say that $e$ is a
\emph{shortcut edge}.  The \emph{width} of an oriented tree
decomposition $(T,(X_t)_{t\in T})$ is $\max_{t\in T} |X_t|$. 

The \emph{\otw} $\oTW(G)$ of a graph $G$ is the minimum
width over all oriented tree decompositions of $G$.
\end{definition}

The following lemma states that all undirected edges are covered by
the tree. The proof is easy and we omit it.

\begin{lemma}\label{lemma_cover_undir_edges}
  Let $G$ be a graph. Let $(T,(X_t)_{t\in V(T)})$ be an oriented tree
  decomposition of~$G$ and let $E(G) = E^t \cup E^{\shc}$ be a
  corresponding partition of $E(G)$. If~$(v,u)\in E(G)$ and $(u,v) \in
  E(G)$, then $(u,v) \in E^t$ or $(v,u) \in E^t$.
  \end{lemma}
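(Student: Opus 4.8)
The plan is to argue by contradiction. Suppose that $(u,v)\in E(G)$ and $(v,u)\in E(G)$ but neither edge lies in $E^t$, so that both $(u,v)$ and $(v,u)$ belong to $E^{\shc}$. Since $\bar T$ together with $(X_t)_{t\in V(T)}$ is a tree decomposition of $(V(G),E^t)$, every vertex of $G$ occurs in some bag; in particular I can pick a node $s$ with $u\in X_s$ and a node $t$ with $v\in X_t$. The first thing to establish is that $s\ne t$: if some single bag contained both $u$ and $v$, then the edge $\{u,v\}$ would be covered by the tree and hence could not be a shortcut edge, contradicting our assumption. Thus $u$ and $v$ share no bag and $s\ne t$.

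Next I would apply the shortcut condition (condition~\ref{cover_shc}) twice. Applied to $(u,v)\in E^{\shc}$ with $u\in X_s$ and $v\in X_t$, it yields a directed path from $s$ to $t$ in $T$. Applied to $(v,u)\in E^{\shc}$ with $v\in X_t$ and $u\in X_s$, it yields a directed path from $t$ to $s$ in $T$. So I obtain directed paths in both directions between the two \emph{distinct} nodes $s$ and $t$.

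The conclusion then comes from the fact that $T$ is an orientation of an undirected tree. Between $s$ and $t$ there is a unique path in $\bar T$, say $s=p_0,p_1,\dots,p_k=t$ with $k\ge 1$, and every directed path in $T$ from $s$ to $t$ (respectively from $t$ to $s$) must run along exactly this path, since a directed walk in an oriented tree cannot traverse an edge in the opposite orientation and hence cannot backtrack. Thus the $s\to t$ path forces each edge $\{p_{i-1},p_i\}$ to be oriented as $p_{i-1}\to p_i$, while the $t\to s$ path forces the opposite orientation $p_i\to p_{i-1}$. As $k\ge 1$, at least one edge would have to carry two orientations simultaneously, which is impossible. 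This contradiction shows that $(u,v)$ and $(v,u)$ cannot both be shortcut edges, so one of them lies in $E^t$.

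The only delicate point, and the step I would be most careful about, is the justification that $s\neq t$, i.e.\ the interplay between the ``covered by a bag'' description of shortcut edges and the abstract partition $E(G)=E^t\cup E^{\shc}$: one must use that an edge both of whose endpoints sit in a common bag may be taken to be a tree edge, so that a genuine shortcut edge never has its two endpoints together in any single bag. Once this is pinned down, the remainder is the short orientation-consistency argument above, which is the real content of the lemma.
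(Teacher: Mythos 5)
The paper omits its own proof of this lemma (``the proof is easy and we omit it''), so there is nothing to compare against; your argument is correct and is surely the intended one: two antiparallel shortcut edges would force directed paths in both directions between two \emph{distinct} nodes of $T$, and in an orientation of a tree every directed path between two nodes must traverse the unique underlying tree path, so some tree edge would need both orientations. The delicate point you flag is real. Read literally, the partition $E(G)=E^t\cup E^{\shc}$ is only required to satisfy conditions (1) and (2), and a covered edge placed in $E^{\shc}$ satisfies condition (2) vacuously via the trivial path from a bag to itself --- a single bag $X_r=\{u,v\}$ with both $(u,v)$ and $(v,u)$ declared shortcut edges would be a formal counterexample. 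Your resolution is the right one and matches the paper's evident intent: the sentence defining ``shortcut edge'' identifies $E^{\shc}$ with the edges contained in no bag, and under that reading $u$ and $v$ share no bag, so any choice of $s$ and $t$ gives $s\neq t$ and the contradiction goes through; equivalently, one may normalise by moving every covered edge into $E^t$, which preserves both conditions of the definition and can only make the conclusion of the lemma easier to satisfy.
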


In the next lemma we give a normal form for oriented tree decompositions.

\begin{lemma}\label{lemma_normal_form_otdec}
  Let $G$ be a graph, let $(T,(X_t)_{t\in V(T)})$ be an oriented tree
  decomposition of $G$ and let $E^t\cup E^{\shc}$ be a corresponding
  partition of $E(G)$. Then there is an oriented tree decomposition of
  $G$ of the same width such that for all $(s,t)\in V(T)$, $X_s
  \not\subseteq X_t$.
\end{lemma}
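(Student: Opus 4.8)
The plan is to remove violations one at a time by \emph{contracting} the offending tree edges, showing that a single contraction preserves every defining property of an oriented tree decomposition while keeping the width unchanged. Concretely, suppose some directed tree edge $(s,t)\in E(T)$ has $X_s\subseteq X_t$. First I would build a new oriented tree $T'$ by deleting the node $s$, redirecting each of the other edges incident to $s$ onto the node $t$ with its orientation preserved, and giving the merged node the bag $X_t$ (which equals $X_s\cup X_t$). The partition $E(G)=E^t\cup E^{\shc}$ of the \emph{graph} edges is kept as is. Since $T$ is a tree, this contraction again yields a tree with one fewer node, so iterating the operation terminates and produces a decomposition in which no tree edge $(s,t)$ satisfies $X_s\subseteq X_t$.

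The first routine thing to verify is Property~(1), \ie that $(\bar{T'},(X_{t'}))$ is a tree decomposition of $(V(G),E^t)$. This is the classical fact that contracting an edge across which one bag is contained in the other preserves the tree-decomposition axioms: every $v\in X_s$ already lies in $X_t$, so the set of nodes whose bag contains $v$ remains connected in $\bar{T'}$; every $E^t$-edge formerly covered only by $X_s$ is now covered by the merged bag $X_t\supseteq X_s$; and all other coverings are untouched. For the width, note that $X_s\subseteq X_t$ forces $|X_s|\le|X_t|$, so no bag grows, and a bag of maximum size is never the one discarded (if $|X_s|$ were maximal then $X_s=X_t$ and we retain $X_t$); hence the width is \emph{exactly} preserved.

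The step that needs genuine care, and the main obstacle, is Property~(\ref{cover_shc}), the directed reachability condition for shortcut edges, because contraction alters the set of directed paths in the tree. The key observation is that the contraction is a quotient map $q\colon V(T)\to V(T')$ under which the image of every directed walk is again a directed walk: each tree edge other than $(s,t)$ is carried to an edge of $T'$ with the same orientation, while $(s,t)$ is collapsed. Consequently, a directed path from $a$ to $b$ in $T$ yields a directed path from $q(a)$ to $q(b)$ in $T'$. Now take any shortcut edge $(u,v)\in E^{\shc}$ together with nodes $a',b'$ of $T'$ with $u\in X_{a'}$ and $v\in X_{b'}$. Since the merged node carries the bag $X_t$, I can lift $a'$ and $b'$ to nodes $a,b$ of $T$ whose bags still contain $u$ and $v$ (choosing $t$ as the lift whenever $a'$ or $b'$ is the merged node, which is legitimate because then $u$, resp.\ $v$, lies in $X_t$). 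Property~(\ref{cover_shc}) in $T$ supplies a directed path $a\to b$, whose image under $q$ is a directed path $a'\to b'$ in $T'$. This establishes Property~(\ref{cover_shc}) for $T'$ and completes the single contraction step; iterating until no offending edge remains gives the claimed normal form.
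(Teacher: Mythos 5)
Your proposal is correct and follows essentially the same route as the paper: when $X_s\subseteq X_t$ for a tree edge, delete $s$, redirect its remaining incident edges onto $t$ with orientations preserved, keep the bag $X_t$ and the partition $E^t\cup E^{\shc}$, and iterate. The paper dismisses the verification of the oriented-tree-decomposition axioms as ``straightforward,'' whereas you supply the details (in particular the quotient-map argument showing directed paths survive the contraction, which is the one point that genuinely needs checking); the content is the same.
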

\begin{proof}
  We construct a new decomposition by successively eliminating bags
  whose neighbours are their supersets. This suffices as by
  monotonicity if $X_s \subseteq X_t$, then $X_s\subseteq X_q$ for all
  $q\in V(T)$ on the path between~$s$ and~$t$ in~$\bar T$. Let
  $(s,t)\in E(T)$ be nodes with $X_s\not\subseteq X_t$. The new
  decomposition is $(T', (X_r)_{r\in T'})$ where $V(T') =
  (V(T)\setminus\{s\})$ and $E(T') = \bigl(E(T) \setminus \{(r,s),(s,r) \mid
  r\in T\}\bigr) \cup \{(t,t)' \mid (s,t')\in E(T)\} \cup \{(t',t)
  \mid (t',s) \in E(T)\}$. It is straightforward to check that all
  conditions of an oriented tree decomposition hold. Furthermore the
  width of the decomposition did not change.
\end{proof}

Our next goal is to compare \otw with \dw.

\begin{theorem}\label{thm:dw_otw}
 For all graphs $G$, $\DW(G) \le \oTW(G)$.
\end{theorem}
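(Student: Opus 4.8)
The plan is to produce the desired D-decomposition from an optimal oriented tree decomposition by \emph{keeping the same tree and the same bags}, and merely re-rooting. So fix an oriented tree decomposition $(T,(X_t)_{t\in T})$ of $G$ of width $\oTW(G)$, with the associated edge partition $E(G)=E^t\cup E^{\shc}$. I would root $\bar T$ at an arbitrary node and orient all its edges away from that root, obtaining a directed tree $T'$ as required in the definition of a D-decomposition, and I would leave every bag $X_t$ unchanged. The crucial observation is that neither defining condition of a D-decomposition refers to the orientation of its tree: Condition~(1) is just $\bigcup_t X_t=V(G)$, and Condition~(2) speaks only about $\bar T$. Hence the re-orientation is harmless; Condition~(1) is inherited directly from the fact that $(\bar T,(X_t))$ is a tree decomposition of $(V(G),E^t)$ (whose vertex set is $V(G)$), and the width is unchanged. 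Note, however, that the verification of Condition~(2) will still crucially exploit the \emph{original} orientation of $T$, through the shortcut-edge axiom.

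So everything reduces to Condition~(2): for every strongly connected set $S\subseteq V(G)$, the set $B_S:=\{t: X_t\cap S\neq\emptyset\}$ induces a connected subtree of $\bar T$. I would argue by contradiction. A node set in a tree is connected iff it is closed under betweenness, so if $B_S$ is disconnected there is a node $r\notin B_S$ lying on the $\bar T$-path between two nodes of $B_S$. Since $r\notin B_S$, no vertex of $S$ lies in $X_r$, so for every $v\in S$ the bag-subtree $T_v:=\{t:v\in X_t\}$ avoids $r$ and therefore lies entirely in a single component of $\bar T-r$; write $K(v)$ for that component. Because $r$ separates the two chosen $B_S$-nodes, there are $a,b\in S$ with $K(a)\neq K(b)$. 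I would then classify each component $K$ of $\bar T-r$ as \emph{in} or \emph{out} according to whether the edge of $T$ between $r$ and $K$'s unique neighbour of $r$ points \emph{towards} $r$ or \emph{away} from $r$; since that edge carries a single orientation, each component has exactly one type.

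The heart of the argument is the following two facts. First, any edge of $G$ joining two vertices of $S$ lying in different components of $\bar T-r$ must be a shortcut edge: a tree edge would require a bag containing both endpoints, impossible since their bag-subtrees lie in disjoint components of $\bar T-r$. Second, such a shortcut edge $(u,v)$ with $K(u)\neq K(v)$ forces $K(u)$ to be \emph{in} and $K(v)$ to be \emph{out}. Indeed, the shortcut axiom yields a directed $T$-path from a bag of $u$ (inside $K(u)$) to a bag of $v$ (inside $K(v)$); this path, being a path of $\bar T$ between different components of $\bar T-r$, passes through $r$, entering via $K(u)$'s neighbour of $r$ and leaving via $K(v)$'s neighbour, which forces those two incident edges to point into and out of $r$ respectively. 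Consequently a directed path that stays inside $S$ can change components at most once: once it reaches an \emph{out}-component it can never leave, since leaving would require that component to be \emph{in}. Applying this to a directed $S$-path from $a$ to $b$ (which must change component, as $K(a)\neq K(b)$) shows $K(a)$ is \emph{in} and $K(b)$ is \emph{out}; applying it to a directed $S$-path from $b$ to $a$ then forces $K(b)$ to be \emph{in}, a contradiction. Hence $B_S$ is connected, $(T',(X_t))$ is a D-decomposition of width $\oTW(G)$, and $\DW(G)\le\oTW(G)$.

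The main obstacle, and essentially the only real content, is Condition~(2) in the presence of shortcut edges: a strongly connected set may be held together \emph{only} by shortcut edges, which are not covered by any single bag, so the usual tree-decomposition argument (consecutive vertices of a connecting path share a bag) breaks down completely. The \emph{in}/\emph{out} classification of the components surrounding a separating node, combined with the one-directionality of shortcut edges along directed $T$-paths, is exactly what restores connectivity. The steps I would be most careful to get right are the forcing of the \emph{in}$\to$\emph{out} pattern by a crossing shortcut edge, and the resulting fact that an \emph{out}-component is a trap for directed paths staying in $S$; these are what turn strong connectivity into a genuine contradiction.
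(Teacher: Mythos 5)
Your proof is correct and follows essentially the same route as the paper: keep the tree and bags of the oriented tree decomposition, reduce to the connectivity condition for strongly connected sets, and derive a contradiction at a separating node by showing that any $S$-edge crossing it must be a shortcut edge whose axiom forces an orientation of the tree edges at that node, while strong connectivity forces both orientations. Your in/out classification of the components around the separator is a slightly more careful packaging of the same idea than the paper's minimal-triple argument, but the substance is identical.
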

\begin{proof}
  Let $(T,(X_t)_{t\in V(T)})$ be an oriented tree decomposition of
  $G$. We argue that it is also a D-decomposition (of the same
  width). First, $\bigcup_{t\in V(T)}X_t = V(G)$, as $(\bar
  T,(X_t)_{t\in V(\bar T)})$ is a tree decomposition of~$\bar
  G$. Let~$S$ be a strongly connected set of vertices of~$G$ and
  assume that there are $s$, $q$ and $t$ in $V(T)$ with $X_s\cap S
  \neq \0$, $X_q \cap S = \0$, and $X_t \cap S\neq \0$ such that~$q$
  is on the path from~$s$ to~$t$ in~$\bar T$. We choose $s$, $q$,
  and~$t$ such that the path from $s$ to $t$ has minimum length.
  As~$S$ is strongly connected, there is a path~$P$ from some vertex
  in $X_s \cap S$ to $X_t \cap S$ in $G$. Let $(v,w)$ be the first
  edge of $P$ with~$v\in (P\cap X_s)\setminus X_t$ and $w\in (P\cap
  X_t)\setminus X_s$. (If $(v,w)$ does not exist, then there is a
  vertex~$u\in X_s\cap P \cap X_t$ and thus $u\in X_q$, as $q$ is
  between $s$ and $t$ in $\bar T$, but $P\subseteq S$, so $u\in X_q\cap
  S$, a contradiction.) By the choice of $s$, $q$ and~$t$, the edge
  $(v,w)$ is not covered by the tree (otherwise there would be an edge
  of the tree decomposition connecting $s$ and $t$, but there is
  another path from $s$ via $q$ to $t$). It follows that $(v,w)$ is a
  shortcut edge and hence all edges of the tree decomposition on the
  path from $s$ to $t$ are oriented from $s$ to $t$. By a symmetric
  argument we can show that all edges are oriented from $t$ to $s$, a
  contradiction.

%
\end{proof}

By \Cref{thm:dtw_cndtw} it follows that \dtw is bounded in \otw as
well.

\begin{corollary}
For every class~$\calG$ of graphs, if \otw is bounded on~$\calG$, then
\dtw is bounded on~$\calG$.
\end{corollary}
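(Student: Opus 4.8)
The plan is to chain together inequalities that have already been established in this section, so that the corollary becomes a purely formal concatenation. First I would invoke \Cref{thm:dw_otw}, which gives $\DW(G) \le \oTW(G)$ for every graph $G$. Hence, if \otw is bounded on $\calG$, say $\oTW(G) \le c$ for all $G \in \calG$, then $\DW(G) \le c$ uniformly on $\calG$; that is, \dw is bounded on $\calG$.

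Next I would transport this bound along the bridge from \dw to \dtw built earlier. By \Cref{lemma_cmp_dw_dprimw} we have $\DS(G) \le 2\DW(G) \le 2c$, so every $G \in \calG$ admits a DS-decomposition of width at most $2c$. Applying \Cref{thm_dw_game} to such a decomposition yields $\cmdTWcops{G} \le 2c$, and since $\dTWcops{G} \le \cmdTWcops{G}$ (the inequality noted directly after \Cref{def:game-defs}), we obtain $\dTWcops{G} \le 2c$ uniformly on $\calG$.

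Finally I would apply \Cref{thm:dtw_cndtw}, which states that the \dtw of $G$ and $\dTWcops{G}$ lie within a constant factor of each other. A uniform bound on $\dTWcops{G}$ over $\calG$ therefore gives a uniform bound on the \dtw of the graphs in $\calG$, completing the argument.

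There is essentially no obstacle to overcome here: all the substantive work resides in \Cref{thm:dw_otw} (the comparison between \otw and \dw, proved by the tree-decomposition/shortcut-edge orientation argument) and in the previously assembled chain running from \dw through DS-width and the cop-monotone component game down to \dtw. The only thing the proof of the corollary must verify is that every constant appearing in the chain is independent of the individual graph $G$, which is manifestly the case since each inequality is stated for all graphs with graph-independent constants; the bounds therefore compose to a single uniform bound on $\calG$.
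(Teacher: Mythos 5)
Your proposal is correct and follows the same route the paper intends: the corollary is obtained by composing Theorem~\ref{thm:dw_otw} ($\DW \le \oTW$) with the already established chain through DS-width, the cop-monotone component game, and Theorem~\ref{thm:dtw_cndtw}. The paper merely states this composition in one line, so your more explicit tracking of the constants adds nothing new but introduces no errors either.
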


Recall that undirected binary trees with the additional upward
transitive closure from~\cite{BerwangerDawHunKreObd12} separate \dtw
and \dw from \dagw and \kw. It is easy to see that the \otw of such a
graph~$G$ is also small. The oriented decomposition tree has the same
shape as~$G$ and all tree edges are oriented upwards, so $\oTW(G) =
2$. Thus on some graphs \otw is bounded, but \dagw and \kw are not.
The next theorem shows that the opposite is also true: on some graph
classes with bounded \dtw, \dagw and \kw, \otw is not bounded by any
function, so the measures are incomparable in this sense.

\begin{theorem}\label{thm:dtw_otw}
  There is a family of graphs $G_n$ with $\dTWcops{G_n} = \DAGW(G_n)
  ={}$ $\KW(G_n) = \DW(G_n) = 2$ such that for each $k>2$ there is some $n$ with
  $\oTW(G_n) > k$.
\end{theorem}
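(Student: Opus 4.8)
The plan is to exhibit an explicit family $G_n$ whose strongly connected components are individually trivial (single directed cycles), so that all four directed measures stay bounded, while the global arrangement of edge directions is \emph{orientation-inconsistent} in the sense needed to blow up $\oTW$. A natural place to look is among bounded-height directed cylinders (two or three concentric directed cycles joined by spokes) or, more flexibly, a chain of $n$ interlocking directed cycles sharing a common spine: in such graphs each strongly connected piece is a cycle, which is exactly what keeps the cop numbers at $2$, yet no consistent global orientation of a low-width tree decomposition exists. I would tune the gadget so that each of $\dTWcops{G_n}$, $\DAGW(G_n)$, $\KW(G_n)$ and $\DW(G_n)$ is \emph{exactly} $2$.

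For the upper bounds I would produce one certificate per measure. Since every robber component of $G_n$ is (essentially) a single directed cycle, two cops suffice in the component game: one cop is placed to cut the cycle and the other chases the robber along the resulting directed path, and this strategy is both cop- and robber-monotone, giving $\cmdTWcops{G_n}\le 2$ and hence $\dTWcops{G_n}\le\cmdTWcops{G_n}\le 2$. The same cutting-and-chasing idea, carried out in the reachability game, yields a monotone strategy for two cops, so $\DAGW(G_n)\le 2$. For $\KW$ I would write down an elimination order in which, after deleting the larger vertices, each vertex reaches exactly one larger vertex, i.e.\ an order of width $1$, giving $\KW(G_n)\le 2$. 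For $\DW$ I would exhibit a DS-decomposition of width $2$ (following the cop trajectory, as in Theorem~\ref{thm_dw_game}) and invoke Lemma~\ref{lemma_cmp_dw_dprimw} to conclude $\DW(G_n)\le\DS(G_n)\le 2$. The matching lower bounds ``$\ge 2$'' are finite checks: each $G_n$ contains a fixed small strongly connected gadget (a pair of $2$-cycles that cannot be separated by a single cop, respectively cannot be covered by a width-$1$ decomposition), so none of the four measures drops to $1$.

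The heart of the proof, and the step I expect to be hardest, is the lower bound $\oTW(G_n)>k$. Here I would fix an arbitrary oriented tree decomposition $(T,(X_t))$ with edge partition $E^t\cup E^{\shc}$ and argue that small width is impossible. The key structural observation is that directed reachability in the \emph{oriented} tree $T$ is a partial order on $V(T)$ (a tree has no undirected cycles, so no node is reachable from itself by a non-trivial directed path); consequently the bags carrying the tail and head of a shortcut edge are comparable in this order. Therefore no directed cycle of $G_n$ can consist entirely of shortcut edges --- each directed cycle must contain at least one tree edge, whose endpoints share a bag (and, by Lemma~\ref{lemma_cover_undir_edges}, every bidirected spoke is such a tree edge anyway). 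The remaining, and genuinely delicate, task is to turn the mere \emph{existence} of one tree edge per cycle into a \emph{large} bag: I would set up a crossing/pigeonhole argument showing that, once $n>f(k)$, the shortcut edges coming from distinct gadgets impose mutually contradictory orientation demands on the tree edges of any width-$k$ decomposition, so that the only way to satisfy all of them simultaneously is to collapse more than $k$ gadget vertices into a single bag. This amounts to isolating an obstruction (a bramble/haven-style witness tailored to oriented tree decompositions) that survives even though the directed tree width is only $2$, and establishing that it forces width $>k$ is the main obstacle. As a sanity check, the conclusion is consistent with $\DW(G_n)=2$ and Theorem~\ref{thm:dw_otw}, which only gives $\DW\le\oTW$ and hence leaves room for the gap to be arbitrarily large.
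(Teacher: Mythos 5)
There is a genuine gap, and it sits exactly where you predict: the lower bound $\oTW(G_n)>k$. Your proposed gadgets (concentric directed cycles with spokes, or a chain of interlocking directed cycles along a common spine) almost certainly have \emph{bounded} oriented tree width. Such a chain admits a path-shaped tree decomposition of constant width; orienting all decomposition edges consistently in one direction along the path satisfies condition~(\ref{cover_shc}) for every ``backward'' chord of every cycle simultaneously, since all those shortcut edges point the same way relative to the spine. To make $\oTW$ blow up you need a graph in which shortcut edges impose genuinely \emph{contradictory} orientation demands on a single tree edge of \emph{every} low-width decomposition, and a path of cycles does not do that. Your ``key structural observation'' is also slightly off: a directed cycle can consist entirely of shortcut edges --- condition~(\ref{cover_shc}) then merely forces all the relevant bags to coincide (a directed closed walk in an orientation of a tree is trivial) --- so it yields a large bag only if the cycle is long, and nothing stops a decomposition from declaring those edges tree edges instead. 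Finally, the decisive step is explicitly deferred (``I would set up a crossing/pigeonhole argument \dots the main obstacle''), so the proposal contains no proof of the one claim that makes the theorem non-trivial.

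For contrast, the paper's construction is a recursively built tree $G_n=G_n^n$ of depth $n$ and branching degree $n^2$ with bidirected tree edges plus directed edges from each child into the leaves of its earlier siblings' subtrees. The lower bound then has two essential ingredients that your sketch lacks: first, using the cops-and-robber characterisation of undirected tree width and the huge branching degree, any width-$k$ oriented tree decomposition must, on some complete binary subtree of depth $n$, coincide with the ``natural'' decomposition whose bags are parent--child pairs (Claim~\ref{claim_natural_dec}); second, two explicit cross edges, $e_1=(01,0^n)$ and $e_2=(1,01^{n-1})$, are then uncovered by the tree and force the single tree edge $\{0,01\}$ to be oriented as $(0,01)$ and as $(01,0)$ simultaneously. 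If you want to salvage your approach you would need both a gadget with this kind of recursive high-branching structure and a concrete normal-form argument replacing the pigeonhole hand-wave; as it stands, neither the family nor the contradiction is actually exhibited. (Your upper-bound certificates are fine in outline but are certificates for a graph family that does not prove the theorem.)
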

\begin{proof}
  The graph $G_n$ is constructed inductively. Let $G_1^m$ be a single
  vertex. Then $G_{i+1}^m$ has a new root $r(G_{i+1}^m)$ with~$m$
  successors $v_1,\dots,v_m$. Each such successor $v_i$ is the root
  $r(G_i^m)$ of a copy of $G_i^m$ and has outgoing edges to all leaves
  of the copies $G_i^m$ rooted at $v_j$ with $j<i$. The construction
  of the graph $G^3_{i+1}$ from $G^3_{i}$ is shown in
  \Cref{fig_counterex_dtw_atw}. Finally, $G_n = G_n^n$.
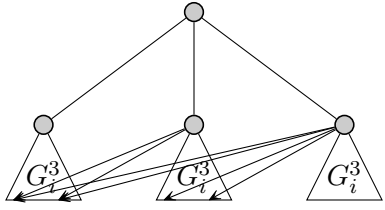
\begin{figure}
\begin{center}
\begin{tikzpicture}

\node[vertex] (epsilon) at (0,0){} 
    child[sibling distance=2cm] { node[vertex] (0) {}
    }
     child[sibling distance=2cm] { node[vertex] (1) {}
    }
     child[sibling distance=2cm] { node[vertex] (2) {}
    };

  \draw (0) -- ++(0.5,-1) -- ++(-1,0) -- (0);
  \draw (1) -- ++(0.5,-1) -- ++(-1,0) -- (1);
  \draw (2) -- ++(0.5,-1) -- ++(-1,0) -- (2);
  \draw[-slim] (1) -- ++(-2.4,-1);
  \draw[-slim] (1) -- ++(-1.8,-1);
  \draw[-slim] (2) -- ++(-1.8,-1);
  \draw[-slim] (2) -- ++(-3.8,-1);
  \draw[-slim] (2) -- ++(-4.4,-1);
  \draw[-slim] (2) -- ++(-2.4,-1);

  \node at (0,-2.2){$G_i^3$};
  \node at (-2,-2.2){$G_i^3$};
  \node at (2,-2.2){$G_i^3$};

\end{tikzpicture}
\end{center}
\caption{The graph $G_{i+1}^3$.}
\label{fig_counterex_dtw_atw}
\end{figure}
Formally, $V(G_n) = [n]^{\le n}$ is the set of words of length at
most~$n$ over the alphabet~$[n]$, $E(G_n) = E_t \cup E_r$ where $E_t =
\{(w,wa), (wa,w) \mid w\in [n]^{<n}, a\in[n]\}$ are edges forming the
tree and $E_r = \{(wa,wbv) \mid a,b,\in [n], b < a,  |v| = n - |w| +1\}$.

In the \dagw game two cops play from the top to the bottom of~$R$
along the path chosen by the robber. If the robber changes to a smaller
subtree, the last placed cop follows him to that subtree on its
root. Of course, one cop is unable to win. So $\DAGW(G) =2$ and for
the same reason also $\dTWcops{G} = 2$. The elimination order for the
\Kw is the depth-first search with choosing the right-most successor first.
The D-decomposition is the usual tree decomposition of a tree (every
bag contains two neighboured vertices and one bag contains only the root): edges
from $E_r$ do not destroy the conditions of a D-decomposition.

Now assume towards a contradiction that there is an oriented tree
decomposition $(T,(X_t)_{t\in V(T)})$ of~$G_n$ of width~$k$ where $n=k+3$.  Let
$E(G_n) = E^t \cup E^{shc}$ be the corresponding partition of edges
of~$G_n$. We first analyse the tree decomposition of $T_n \coloneqq
\overline{(V(G_n), E^t)}$. By Lemma~\ref{lemma_cover_undir_edges},
$E_t \subseteq E^t$.

For a connected subgraph $G'$ of an undirected graph~$G$ with tree decomposition 
$(T,(X_t)_{t\in V(T)})$, the restriction of $(T,(X_t)_{t\in V(T)})$ to $G$ is 
$(T',(X'_t)_{t\in V(T')})$ where~$T'$ is the subgraph of~$T$ induced by bags~$t$
with $X_t \cap V(G') \neq \0$ and $X'_t = X_t \cap V(G)$. Note that the 
restriction is a tree decomposition of~$G'$ of width at most the width of 
$(T,(X_t)_{t\in V(T)})$.

\begin{claim}\label{claim_natural_dec}
There is a subtree~$T'_n$ of~$T_n$ of (the same) depth~$n$ and such
that
\begin{itemize}
\item every node of $T'_n$ which is not a leaf has two children, and
\item the restriction $(T',(X'_n))$
of $(T,(X_t))$ to $T'_n$ is a natural decomposition, \ie up to isomorphism:
\begin{itemize}
  \item $T' = T'_n$,
  \item $X'_r = \{\epsilon\}$ (the root bag contains only the root of $T'_n$), 
and
  \item $X'_t = \{s,t\}$ where $s$ is the predecessor of~$t$ if~$t$ is not the 
root bag.
\end{itemize}
\end{itemize}
 \end{claim}
\begin{proof}[Proof (of Claim~\ref{claim_natural_dec}).]
  We use the characterisation of \tw by the \tw game. It is played on an
  undirected graph as the \dagw game. It is well known that a tree
  decomposition of width~$k$ induces a winning strategy for~$k+1$
  cops. In the first move they occupy the root bag. The robber chooses
  a subtree of the decomposition tree and the cops occupy the root of
  that subtree in the next move. Continuing in that way they finally
  capture the robber in a leaf bag.

  Let $\sigma$ be the strategy for~$k+1$ cops on $T_n$. For a
  vertex $v\in T_n$ let $T_v$ be the subtree of $T_n$
  rooted at~$v$. Consider a position of a play consistent with~$\sigma$ where 
the robber is
  in $T_v$ and $T_v$ is cop free. Then the following lemma holds.

\begin{claim}\label{claim_free_subtree}
  There are two children~$w_1$ and $w_2$ of~$v$ such that in any play
  from the current position that is consistent with~$\sigma$, for
  $i=1,2$, $w_i$ is the first vertex of $T_{w_i}$ occupied by a cop.
\end{claim}

Indeed, $v$ has more children than there are cops. If a cop is placed
in a subtree rooted at a child of~$v$, then there will be at least one
cop in that subtree until~$v$ is occupied (otherwise the
robber-\hspace{0cm}monotonicity is violated).

Now we define $T'_n$ in a top-\hspace{0cm}down manner. The root of
$T'_n$ is the root of $T_n$. Assume that a subtree of $T_n$ up to some
level is constructed. Let~$v$ be a current leaf and let~$w_1,w_2\in
T_n$ be the children of~$v$ whose existence is guaranteed by
Claim~\ref{claim_free_subtree}. Then~$v$ has two children
in~$T'_n$:~$w_1$ and~$w_2$. Then $(T'_n,(X'_t)_{t\in V(T'_n)})$ is a
natural decomposition. This proves the claim.
\end{proof}

Without loss of generality let $T'_n = (\{0,1\}^n, \{\{v,va\} \mid v\in 
\{0,1\}^n-1, a\in\{0,1\}\})$. Consider the edges $e_1 = (01,0^n)$ and 
$e_2 = (1,01^{n-1})$ in $E_r$. By the construction of $(T',(X'_t)_{t\in 
V(T')})$, the edge $e_1$ is not covered by the tree. Thus the 
orientation of $T'$ allows the path from~$1$ to $01^{n-1}$. In particular all 
edges on the path from $\epsilon$ to $01^{n-1}$ are oriented towards 
$01^{n-1}$, \ie the edge $\{0,01\}$ is oriented as $(0,01)$. The edge $e_2$ is 
not covered by the tree either, so the orientation allows the path from $01$ to 
$0^n$. In particular the edge $\{0,01\}$ is oriented as $(01,0)$, a 
contradiction.
\end{proof}

\begin{theorem}
For all graphs $G$ we have $\DS(G) \le \DW(G) \le \oTW(G)$.
\end{theorem}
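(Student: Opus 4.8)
The plan is to prove the two inequalities separately, the genuinely new part being the bound $\DS(G) \le \oTW(G)$; the relationship between $\DW$ and $\DS$ is already furnished by Lemma~\ref{lemma_cmp_dw_dprimw}, and the bound $\DW(G)\le\oTW(G)$ by Theorem~\ref{thm:dw_otw}. To show $\DS(G)\le\oTW(G)$ I would take an oriented tree decomposition $(T,(X_t)_{t\in V(T)})$ of $G$ of minimum width, with edge partition $E(G)=E^t\cup E^{\shc}$, and exhibit it as a DS-decomposition of the same width. Since $T$ is only an orientation of an undirected tree, I would first fix an arbitrary root $r$ of $\bar T$ and orient the tree edges away from $r$ to obtain a rooted tree $T^r$ with the same bags; the DS-tree is $T^r$, while the original orientation of $T$ is retained only in order to apply the shortcut-edge condition.

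Conditions (1) and (2) of a DS-decomposition are immediate, because $(\bar T,(X_t))$ is a tree decomposition of $(V(G),E^t)$: the bags cover $V(G)$ and each set $\{t\mid v\in X_t\}$ is connected in $\bar T$. The substance is condition (3): for every edge $(s,t)\in E(T^r)$ the set $X_{\ge t}\setminus X_s$ must be $(X_s\cap X_t)$-normal. Let $A$ and $B$ be the two components of $\bar T$ obtained by deleting the edge $\{s,t\}$, with $s\in A$ and $t\in B$, and let $V_A,V_B$ be the corresponding unions of bags. The separator property of tree decompositions gives $V_A\cap V_B\subseteq X_s\cap X_t$, whence $X_{\ge t}\setminus X_s=V_B\setminus V_A$; moreover, in $G-(X_s\cap X_t)$ every edge joining $V_A\setminus V_B$ to $V_B\setminus V_A$ is a shortcut edge, since both endpoints of a tree edge share a bag and therefore lie on a single side.

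If condition (3) failed, some strongly connected component of $G-(X_s\cap X_t)$ would meet both $V_A\setminus V_B$ and $V_B\setminus V_A$, yielding a directed closed walk that avoids $X_s\cap X_t$ and crosses from the $A$-side to the $B$-side and back. Each crossing step is a shortcut edge, and by condition~\ref{cover_shc} of the oriented tree decomposition a shortcut edge from $V_A\setminus V_B$ to $V_B\setminus V_A$ forces a directed path in $T$ that traverses the tree edge $\{s,t\}$ from $s$ to $t$, while a crossing in the opposite direction forces it to be traversed from $t$ to $s$. As the orientation of $\{s,t\}$ in $T$ is fixed, this is a contradiction, in exactly the spirit of the proof of Theorem~\ref{thm:dw_otw}. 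Hence condition (3) holds for every edge, $(T^r,(X_t))$ is a DS-decomposition of width $\max_t|X_t|$, and $\DS(G)\le\oTW(G)$ follows; together with Lemma~\ref{lemma_cmp_dw_dprimw} and Theorem~\ref{thm:dw_otw} this places all three measures in the stated order. The main obstacle I anticipate is the bookkeeping for vertices occurring in bags on both sides: I must ensure that the crossings of the witnessing walk are realised strictly between $V_A\setminus V_B$ and $V_B\setminus V_A$ by shortcut edges, and that the induced directed $T$-paths genuinely use the single edge $\{s,t\}$ rather than bypassing it, so that the orientation contradiction is valid.
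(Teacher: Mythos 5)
Your proposal is correct and follows essentially the same route as the paper: root the oriented decomposition tree arbitrarily, note that conditions (1) and (2) are inherited from the underlying tree decomposition, and derive normality by observing that a strongly connected set crossing the separator $X_s\cap X_t$ would force two shortcut edges whose witnessing directed paths in $T$ traverse the tree edge $\{s,t\}$ in opposite orientations. Your treatment is, if anything, slightly more careful than the paper's (which leaves the $V_A\cap V_B\subseteq X_s\cap X_t$ bookkeeping and the $\DS\le\DW$ part of the chain implicit), and the worry you flag at the end is indeed resolved exactly as you suggest.
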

\begin{proof}
  Let $(T,(X_t)_{t\in V(T)})$ be an oriented tree decomposition
  of~$G$ of width~$k$ and let $E(G) = E^t \cup E^{\shc}$ be a
  corresponding partition of $E(G)$. Let $\{s,t\}\in G(\bar T)$ be an
  edge of $\bar T$. Let $T^t$ be the maximal subtree of $\bar T$
  containing $t$, but not $s$ and $T^s$ the maximal subtree containing
  $s$, but not $t$. Let $X^t = (\bigcup_{q\in V(T^t)} X_q)\setminus
  X_s$ and $X^s = (\bigcup_{q\in V(T^s)} X_q)\setminus X_t$.  Let
  $(v,w)\in E(G)$. If $v\in X^t$ and $w\in X^s$, then $(v,w)$ is a
  shortcut edge and $(t,s) \in E(T)$ (and thus $(s,t)\notin E(T)$).

  Let $r$ be an arbitrary node of~$T$ that we declare to be the
  root. Let~$T_r$ be the orientation of~$T$ such that all edges are
  oriented away from~$r$.  We claim that $(T_r,(X_t)_{t\in V(T_r)})$ fulfils all
  requirements of a DS-decomposition except,
  possibly~(4). Note that $(T,(X_t)_{t\in V(T)})$ and
  $(T_r,(X_t)_{t\in V(T_r)})$ have the same width.
 Requirements~(1) and~(2) hold because $(\bar
  T,(X_t))$ is a tree decomposition. For~(3) assume for some
  $(s,t)\in E(T_r)$ that there is a path~$P$ that starts in $X^t$,
  leaves it and then returns to $X^t$ such that $P\cap X_s \cap X_t =
  \0$. Then there is an edge $(v,w)\in E(G)$ that goes from $X^t$ to
  $X^s$, so $(t,s) \in E(T)$ (and $(s,t)\notin E(T)$), and there is an
  edge that goes from $X^s$ to $X^t$, so $(s,t) \in E(T)$, a
  contradiction.
\end{proof}


\section{Conclusion}\label{sec_conc}

\subsection{The Relations between Widths and Cop Numbers}
The relations between \dtw, $\rmdTWcops{G}$ (the robber monotone cop
number in the component game), $\cmdTWcops{G}$, \dagw, weakly monotone
\dagw, \kw, \dw, and \otw are presented in \Cref{fig:scheme}. All
relations are considered in terms of boundedness. If~$a$ and~$b$ are
two measures from the above list, $a\le b$ means that there is a
function~$f\colon \bbN\to \bbN$ such that for all graphs~$G$, $a(G)
\le f(b(G))$. We write $a<b$ if $a\le b$ and there is a class~$\calG$
of graphs and a number $t\in\bbN$ such that for all $G\in \calG$,
$a(G) \le t$ and for all $s\in\bbN$ there is a graph $G\in \calG$ with
$b(G)>s$, \ie $a$ is bounded on~$G$ and~$b$ is not. We write $a=b$ if
there is a function $f\colon\bbN\to\bbN$ such that for all graphs~$G$,
$a(G) \le f(b(G))$ and $b(G) \le f(a(G))$. Finally $a\lessgtr b$ means
that there is a class of graphs on which $a$ is bounded and~$b$ is not and vice
versa: there is a class of graphs on which $b$ is bounded and $a$ is not.

\begin{figure}
\begin{center}
\begin{tikzpicture}

\node (dtw) at (0,0){dtw};
\node (rmdtw) at (0,-1){rmdtw};
\node (cmdtw) at (3,0){cmdtw};
\node (DAGw) at (6,0){DAG-w};
\node (wmDAGw) at (6,-1){wmDAG-w};
\node (Kw) at (9,0){K-w};
\node (Dw) at (6,-3){D-w};
\node (DSw) at (6,-4){DS-w};
\node (otw) at (9,-3){otw};

\node[label=above:\tiny \cite{JohnsonRobSeyTho01},rotate=90] (eq_dtw_rmdtw) at
(0,-0.5){$=$};

\node[label=above:\tiny Th. \ref{thm_non_cop_mon}] at (1.5,0) {$<$};

\node[label=above:\tiny \cite{BerwangerDawHunKreObd12}] at (4.5,0)
{$<$};

\node[label=below:\tiny Th. \ref{thm_main},rotate=90] at (6,-0.5) {$=$};

\node[label=above:\tiny Th. \ref{thm_kw_to_dagw}] at (7.5,0) {$\le$};

\node[label=below:\tiny {Th. \ref{thm_dw_game},
Th. \ref{thm_dw_game_fails}},rotate=-45] at (4.5,-2) {$\le$};

\node[label=above:\tiny {L. \ref{lemma_cmp_dw_dprimw}, \cite{Gruber08}},rotate=90]
at (6,-3.5) {$=$};

\node[label=left:\tiny {Th. \ref{thm_dw_game_fails},
  \cite{BerwangerDawHunKreObd12}},rotate=90] at (6,-2) {$\lessgtr$};

\node[label=below:\tiny {Th's \ref{thm:dw_otw}, \ref{thm:dtw_otw}}] at (7.5,-3) {$<$};

\node[label={[text width=0.985cm, text justified]below:\tiny {Th. \ref{thm:dtw_otw},
  \cite{BerwangerDawHunKreObd12}}},rotate=90] at (9,-1.5) {$\lessgtr$};

\node[label={[text width=0.985cm, text justified,label
  distance=-0.6em]-30:\tiny {Th. \ref{thm:dtw_otw},
    \cite{BerwangerDawHunKreObd12}}},rotate=-60] at (7.8,-1.5)
{$\lessgtr$};

\draw (Dw) .. controls (8,-5.5) and (12.8,-1.7) .. (Kw);

\node[label=right:\tiny {Th. \ref{thm:dtw_otw},
  \cite{BerwangerDawHunKreObd12}}] at (7.5,-4.2) {$\lessgtr$};

\end{tikzpicture}
\end{center}
\caption{The relations between different measures. }
\label{fig:scheme}
\end{figure}
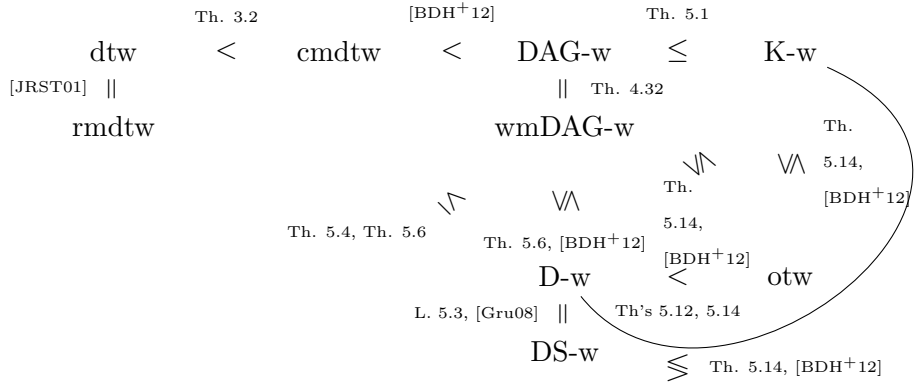

\subsection{Future Work}

We believe that the quadratic blowup in the number cops when we change
from a weakly monotone winning strategy for the cops to a strongly
monotone strategy can be reduced to a linear one. The largest
ratio between the numbers of needed cops in the weakly monotone and
the strongly monotone cases we are aware of is $4/3$ from the examples
by Kreutzer and Ordyniak. However, the induction on the number of cops
needed by the weakly monotone strategy seems to enforce the use of
quadratically many cops. It would be interesting to achieve a linear
upper bound or to find better lower bound than $4/3$.

Another topic for the future work is the gap between the
non-{}monotone and weakly monotone case. This question seems to be the
most interesting in this area. It would be also important to determine
whether the inequality $\DAGW \le \KW$ should be strict or an
equality.

It is not known whether \dagw can be decided in \nptime as \kw. Neither is
known whether \kw can be computed in time $n^{O(k)}$ where~$n$ is the
size of the given graph and~$k$ is its \kw.



\bibliographystyle{alpha}
\newcommand{\etalchar}[1]{$^{#1}$}

\end{document}